\documentclass{article}

\usepackage{bm}
\usepackage{amsfonts}
\usepackage{amssymb}
\usepackage{amsmath}
\usepackage{verbatim}
\usepackage{color}
\usepackage{amsthm}
\usepackage{braket}
\usepackage{bbm}
\usepackage{textcomp}

\sloppy

\addtolength{\textheight}{2cm}
\addtolength{\hoffset}{-2cm}
\addtolength{\textwidth}{4cm}

\newcommand{\eins}{{\mathbbm 1}}
\newcommand{\st}{{\mathcal S}}
\newcommand{\hr}{{\mathcal H}}

\newcommand{\kr}{{\mathcal K}}
\newcommand{\bo}{{\mathcal B}}
\newcommand{\cc}{{\mathbb C}}

\newcommand{\nn}{{\mathbb N}}

\newtheorem{theorem}{Theorem}

\newtheorem{corollary}[theorem]{Corollary}
\newtheorem{definition}{Definition}
\newtheorem{example}[theorem]{Example}
\newtheorem{lemma}{Lemma}
\newtheorem{proposition}[theorem]{Proposition}
\newtheorem{remark}{Remark}

\newcommand{\tr}{\mathrm{tr}}
\newcommand{\supp}{\mathrm{supp}}

\DeclareMathOperator{\rank}{rank}

\begin{document}
\title{Universal quantum state merging} 
\author{Igor Bjelakovi\'c, Holger Boche, Gisbert Jan\ss en\\
\scriptsize{Electronic addresses: \{igor.bjelakovic, boche, gisbert.janssen\}@tum.de}
\vspace{0.2cm}\\
{\footnotesize Lehrstuhl f\"ur Theoretische Informationstechnik, Technische Universit\"at M\"unchen,}\\
{\footnotesize 80290 M\"unchen, Germany }}
\maketitle
\begin{center}
\small \emph{Dedicated to the memory of Rudolf Ahlswede}
\end{center}
\vspace{0.3cm}

\begin{abstract}
We determine the optimal entanglement rate of quantum state merging when assuming that the state is unknown except for its  membership in a certain set of states. We find that merging is possible at the lowest rate allowed by the individual states. Additionally, we establish a lower bound for the classical cost of state merging under state uncertainty. To this end we give an elementary proof for the cost in case of a perfectly known state which makes no use of the ``resource framework''. 
As applications of our main result, we determine the capacity for one-way entanglement distillation if the
source is not perfectly known. Moreover, we give another achievability proof for the entanglement generation capacity over compound quantum channels.
\end{abstract}
\maketitle

\tableofcontents

\section{\label{sect:introduction}Introduction}
Quantum state merging was introduced by Horodecki, Oppenheim, and Winter \cite{horodecki05, horodecki07b} in order 
to quantify the amount of partial quantum information contained in bipartite quantum states. I.e. for a bipartite i.i.d. 
quantum source with generic state $\rho_{AB}$ shared by communication parties  A (``sender'') and B (``receiver''), we 
want to know how much quantum communication is needed per copy when transferring A's share to B so that source output is 
completely available to B. \\
A convenient way of measuring quantum communication within this scenario is quantifying the entanglement cost 
(cf. Ref. \cite{horodecki07b}): The parties A and B are free to use local operations together with certain
exchange of classical messages (LOCC) and moreover they may use preexistent pure entanglement. The protocol performs 
state merging and produces/returns pure entanglement. The optimal rate for this task was determined in Ref. \cite{horodecki07b} 
as the conditional von Neumann entropy $S(A|B)$. In this way, the conditional von Neumann entropy obtains an operational 
interpretation as the net amount of entanglement resources needed to merge the states. Moreover, the 
puzzling fact that for some states $S(A|B)<0$ can occur can be interpreted naturally within the state merging paradigm: 
Merging protocols achieving negative rates produce rather than consume entanglement during the process.\\ 
Additionally, the optimal (i.e. the lowest possible) classical communication rate for a merging procedure achieving 
quantum rate $S(A|B)$ was determined in Ref. \cite{horodecki07b} as well. It turned out that $I(A;E)$, the quantum mutual 
information between the $A$-part and an environment purifying $\rho_{AB}$ is optimal in this case.\\
Another important aspect is that many other protocols can be derived (mostly by reduction) from quantum state merging. 
Here we just mention some of the examples from \cite{horodecki07b} like distributed compression, quantum source coding 
with side information at the decoder, and entanglement generation over quantum multiple access channels.\\
However, these results rely on the assumption of idealized conditions. The authors of Ref. \cite{horodecki07b} assumed the 
source to be memoryless and perfectly known. Both of these conditions will hardly be fulfilled in real-life communication 
settings.\\
In this paper, we drop the second condition and determine the optimal average cost of entanglement per copy under partial 
ignorance of the state to be merged. We consider a scenario, where statistical properties of the ensemble emitted by the 
source are not perfectly known to the merging partners. Rather it is assumed, that they only know that the state belongs 
to a certain set of states. Thus they have to use a protocol which works well for every member of this set. This model 
can be seen as a source analogue to the notion of compound quantum channels which were considered in Refs.\cite{bjelakovic08c} and
\cite{bjelakovic09d}.\newline
Our main technical result is a generalization of the original one-shot bound given in Ref. \cite{horodecki07b}, which respects 
state uncertainty. The question of the optimal classical communication cost in this case is addressed as well.\newline
The results of this paper gather their relevance from the fact, that other related communication protocols can be 
obtained by modifying state merging protocols. Our generalization to sets of states can be used to generate protocols 
which are successful in the corresponding ``compound'' scenarios. These in turn are stepping stones to tackle the much 
more involved ``arbitrarily varying'' models. If one considers, for example, the problem of determining capacities of 
arbitrarily varying channels, it is well known that good codes for particular compound channels can be transformed in 
good random codes for arbitrarily varying channels via Ahlswede's robustification technique \cite{ahlswede80}. The 
robustification technique can be applied in the quantum case as well. It is exactly this idea that was employed in 
Ref. \cite{ahlswede10a} in order to determine the random code capacity for entanglement transmission over arbitrarily varying 
quantum channels. This in turn can be used to show that either the deterministic classical capacity of the arbitrarily 
varying quantum channel is zero or the deterministic and random code capacities for entanglement transmission of these 
channels are equal, a quantum version of Ahlswede's famous dichotomy \cite{ahlswede78}.\\
We mention this here, because this is up to date the only method allowing us to prove such results. The ingenious 
and very direct method to prove the coding theorem for classical arbitrarily varying channels developed by 
Csisz\'ar and Narayan \cite{csiszar88} does not carry over to the quantum case.

\begin{subsection}{\label{subsect:rel_work}Related work}
The present result relies, as it was in the single state case, on a variant of the so-called decoupling approach, an idea 
which originally appeared in Ref. \cite{schumacher02} and was successfully applied to several scenarios. 
The idea is, in short, to consider not only the bipartite states to merge, but purifications of them, where the purifying 
systems are not allowed to be affected by $A$ or $B$. In this way, the question of success of the procedure is broken 
down to successful decoupling of the subsystems under control of $A$ from the purifying environment. Techniques which were
developed earlier \cite{bjelakovic08c, bjelakovic09d} for proving coding theorems for compound quantum channels
based on the decoupling approach, can be used here as well.\\
The quantum state merging protocol can be further generalized, by replacing the classical communication channels involved 
by quantum channels. This leads to the so-called fully quantum Slepian Wolf or ``mother'' protocol \cite{abeyesinghe09}, 
which together with a corresponding ``father'' protocol forms the head of a whole hierarchy of quantum protocols.
\end{subsection}

\begin{subsection}{\label{subsect:outline}Outline}
In Section \ref{sect:definitions}, we provide precise definitions for the model considered in this work. At the end of the
section, our main result is stated. Section \ref{sect:one-shot} contains the technical groundwork for the proof of our 
main result. There, we generalize the original one-shot result for single states from Ref. \cite{horodecki07b} to the case, 
where the set of possible states to merge is finite. With these results at hand, we prove our main result in Section 
\ref{merging_proof}, where we first establish the direct part in case that the set of possible states to merge is finite. 
Then we extend this result to arbitrary sets of states using finite approximations in the set of quantum states. 
The converse statement directly carries over from the known result for single states. Section \ref{sect:classical_cost} 
is devoted to the classical communication cost of quantum state merging. There we review the single state case and add an 
elementary proof to the corresponding result from Ref. \cite{horodecki07b}. Unfortunately, the protocol class used to 
establish the achievability proof for the quantum cost turns out to be too narrow. We point out, that contrary 
to the single state case, it is suboptimal regarding the classical communication requirements. \\
We conclude our work by demonstrating some applications of our main result in Sect. \ref{sect:applications}, where we determine the entanglement distillation capacity in case, that the source 
from which is distilled is not perfectly known. Finally, we give another proof for the  direct part of the entanglement 
generation coding theorem for compound quantum channels. There we use the  
correspondence between distillation of entanglement from quantum states and entanglement generation over quantum channels.
\end{subsection}

\subsection{\label{subsect:notations}Notations and Conventions}
All the Hilbert spaces which appear in this work are assumed to be finite dimensional and over the field of complex 
numbers. For any two Hilbert spaces $\hr$
and $\kr$, $\bo(\hr, \kr)$ denotes the set of linear operators mapping $\hr$ to $\kr$ and $\bo(\hr)$ denotes the set of 
linear operators on $\hr$. The set of states (i.e. positive semidefinite operators of trace one) on $\hr$ is denoted by $\st(\hr)$. 
With a Hilbert space $\kr$, the set of channels (i.e. completely positive (cp) and trace preserving maps) from $\bo(\hr)$ to 
$\bo(\kr)$ is denoted by $\mathcal{C}(\hr, \kr)$, the set of trace non-increasing cp maps by 
$\mathcal{C}^{\downarrow}(\hr,\kr)$. With a little abuse of notation, we write $id_\hr$ for the identical channel on 
$\bo(\hr)$. Because we mainly deal with systems containing several relevant subsystems, we freely make use of the 
following convention: An Hilbert space $\hr_{XYZ}$ is always thought to be the space of a composite system consisting 
of systems with Hilbert spaces $\hr_X$, $\hr_Y$ and $\hr_Z$. We use a similar notation for 
states of composite systems. A state denoted $\rho_{XY}$ for instance is a bipartite state with marginals $\rho_X$ and 
$\rho_Y$ and so on. Pure states on $\hr$ are identified with state vectors, e.g. the symbol $\psi$ sometimes denotes 
the state $\ket{\psi}\bra{\psi}$ and sometimes a state vector $\psi \in \hr$ corresponding to $\ket{\psi}\bra{\psi}$. The fidelity is defined by 
\begin{align*}
 F(\rho,\sigma) := \|\sqrt{\rho}\sqrt{\sigma}\|_1^2 
\end{align*}
for quantum states $\rho$ and $\sigma$ on a Hilbert space $\hr$. We frequently use the fact that if one of the input 
states is pure, the fidelity takes the form of an inner product
\begin{align}
F(\rho, \ket{\psi}\bra{\psi}) = \braket{\psi, \rho \psi}. \label{fidelity_pure}
\end{align}
For other properties of the fidelity see Ref. \cite{jozsa94}. The von Neumann entropy of a state $\rho$ is defined 
\begin{align*}
 S(\rho) := -\tr(\rho \log \rho) 
\end{align*}
where $\log(\cdot)$ denotes the base two logarithm throughout this work (accordingly $\exp(\cdot)$ is defined to base two as well). For certain other information quantities 
we choose a notation which indicates the states on which they are evaluated. For a state $\rho_{XY}$ on $\hr_{XY}$ we
denote the quantum mutual information by
\begin{align*}
 I(X;Y, \rho_{XY}) := S(\rho_X) + S(\rho_Y) - S(\rho_{XY}), 
\end{align*}
and the conditional von Neumann entropy by
\begin{align*}
 S(X|Y, \rho_{XY}) := S(\rho_{XY}) - S(\rho_Y).
\end{align*}
For a channel 
$\mathcal{N}\in \mathcal{C}(\hr,\kr)$ and and a state $\rho \in \st(\hr)$, the coherent information is denoted by
\begin{align*}
 I_c(\rho,\mathcal{N}) := S(\mathcal{N}(\rho)) - S((id_{\hr} \otimes \mathcal{N})(\ket{\varphi}\bra{\varphi})), 
\end{align*}
where $\varphi$ is an arbitrary purification of $\rho$ on $\hr \otimes \hr$. We further denote the hermitian conjugate of an 
operator $a$ by $a^\ast$ and the complex conjugate of a complex number $z$ by $\overline{z}$. 
We use $[N]$ as the shortcut for the set $\{1,...,N\}$ for $N \in \nn$.\\ 
Concluding this section, we specify the notion of one-way LOCC channels. As references, we recommend Ref. 
\cite{keyl02} (were the following definitions can be found stated in the Heisenberg picture), and the more 
recent treatment Ref. \cite{chitambar12}. 
Readers not familiar with LOCC channels may also consult the appendix on the same topic included in this 
paper, where the following definitions are stated more extensively.\newline
A \emph{quantum instrument} (or just \emph{instrument}) on a Hilbert space $\hr$ can be defined as a family 
$\{\mathcal{T}_k\}_{k=1}^D \subset \mathcal{C}^{\downarrow}(\hr,\kr)$ of trace non-increasing cp maps with an output space 
$\kr$ such that their sum is a channel, i.e. $\sum_{k=1}^D \mathcal{T}_k(\cdot)$ is trace preserving. We will only consider 
finite families (i.e. $D$ finite) in this paper. For bipartite Hilbert spaces $\hr_{AB}$ and $\kr_{AB}$, 
a channel $\mathcal{N} \in \mathcal{C}(\hr_{AB}, \kr_{AB})$ is called an \emph{LOCC channel with one-way classical 
communication from $A$ to $B$} (or \emph{$A \rightarrow B$ one-way LOCC} for short), if it is a combination of an instrument 
$\{\mathcal{A}_k\}_{k=1}^D \subset \mathcal{C}^{\downarrow}(\hr_A,\kr_A)$ on $A$'s systems and a family of quantum channels 
$\{\mathcal{B}_k\}_{k=1}^D$ on $B$'s systems in the following manner. To each member $\mathcal{A}_k$ of the instrument there is assigned 
a channel $\mathcal{B}_k$ resulting in the form
\begin{align}
  \mathcal{N}(\rho) = \sum_{k=1}^D \mathcal{A}_k \otimes \mathcal{B}_k(\rho) &&(\rho \in \st(\hr_{AB})).  \label{locc_allg}
\end{align}
The interpretation of (\ref{locc_allg}) is, that $B$ 
chooses a channel for his system which depends on which of the $D$ operations has been realized on $A$'s system.\\
The amount of $A \rightarrow B$ classical communication required for application of $\mathcal{M}$ is therefore determined 
by the possible measurement outcomes assigned to the operations $\mathcal{A}_1,...,\mathcal{A}_D$, i.e. a message of 
lenght $\lceil \log D \rceil$ bits has to be communicated. 
\section{\label{sect:definitions}Definitions and main result}
Let $\mathcal{X} \subseteq \st(\hr_{AB})$ be a set of bipartite states with subsystems distributed to (possibly) distant 
communication partners A and B. An $(l,k_l)$-merging for $\mathcal{X}$ is an one-way LOCC channel 
\begin{align*}
 \mathcal{M}_l: \bo(\kr_{AB}^0)\otimes \bo(\hr_{AB}^{\otimes l}) \rightarrow \bo(\kr_{AB}^1) \otimes 
 \bo(\hr_{B'B}^{\otimes l}),
\end{align*}
with local operations on the $A$- and the $B$-subscripted spaces and classical $A \rightarrow B$ communication, 
where $\kr_A^i \simeq \kr_B^i$ for $i= 0,1$ and 
$k_l := \dim\kr_A^0/\dim\kr_A^1$. A real number $R$ is called an achievable \emph{entanglement rate} for $\mathcal{X}$, 
if there exists a sequence of $(l,k_l)$-mergings with
\begin{enumerate}
 \item $\underset{l \rightarrow \infty}{\limsup} \frac{1}{l}\log(k_l) \leq R$ 
 \item $\underset{\mathcal{X}_p}{\inf} F(\mathcal{M}_l\otimes id_{\hr_E^{\otimes l}}(\phi_0^l\otimes 
	  \psi_{ABE}^{\otimes l}), \phi_1^l \otimes \psi_{B'BE}^{\otimes l}) 	\rightarrow 1$ \label{fidelity_condition}
	for $l \rightarrow \infty$.
\end{enumerate}
where $\phi_0^l \in \st(\kr_{AB}^{0,l})$ and $\phi_1^l \in \st(\kr_{AB}^{1,l})$ are maximally entangled states on their 
spaces. We demand that the Schmidt ranks of these states do not grow more than exponentially fast for $l \rightarrow 
\infty$, i.e. $\dim \kr_{A}^{0,l}, \dim\kr_{A}^{1,l} \leq 2^{lC}$ for all $l \in \nn$ and some constant $C>0$. Note that 
the fraction $\dim\kr_A^{0,l}/\dim\kr_A^{1,l}$ equals, by definition, the fraction of the Schmidt 
ranks of the input and output entanglement resources $\phi_0^l$ and $\phi_1^l$. Therefore, the expression 
$\frac{1}{l}\log(k_l)$ corresponds to the number of maximally entangled qubits (ebits) per input copy consumed 
(or gathered) by the action of $\mathcal{M}_l$.\\
The infimum in the second condition is evaluated over a set $\mathcal{X}_p$ which contains a purification $\psi_{ABE}$ 
on a space $\hr_{ABE}$ for each $\rho_{AB}$ in $\mathcal{X}$. $\psi_{B'BE}$ is the state $\psi_{ABE}$ where the A-part 
is located on a Hilbert space $\hr_{B'}$ under B's control.
The fidelity measure in \ref{fidelity_condition}.) is independent of the choice of the purifications (which will be shown 
in the next section). We frequently use  the abbreviation 
\begin{align*}
 &F_m(\rho_{AB}, \mathcal{M}) := F(\mathcal{M}\otimes id_{\hr_E}(\phi_0\otimes \psi_{ABE}),
  \phi_1 \otimes \psi_{B'BE})
\end{align*}
for a state $\rho_{AB}$ and a merging channel $\mathcal{M}$ for $\rho_{AB}$ and frequently not specify the space $\hr_E$ 
explicitly. The maximally entangled input and output 
states $\phi_0$ and $\phi_1$ are considered to be determined by $\mathcal{M}$. The optimal entanglement rate 
$C_m(\mathcal{X})$, i.e. 
\begin{align*}
 C_m(\mathcal{X}) := \inf\{R : R\; \text{is an achievable entanglement rate for}\, \mathcal{X}\} 
\end{align*}
is called the \emph{merging cost} of $\mathcal{X}$.\\
The main result of this paper is the following theorem, which quantifies the merging cost of any set $\mathcal{X}$ of bipartite states.
\begin{theorem}\label{merging_theorem}
 Let $\mathcal{X} \subset \st(\hr_{AB})$ be a set of states on $\hr_{AB}$. Then 
\begin{align}
 C_m(\mathcal{X}) = \sup_{\rho \in \mathcal{X}} S(A|B;\rho) \label{theorem}
\end{align}
holds.
\end{theorem}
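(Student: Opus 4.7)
\medskip

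\noindent\textbf{Proof proposal.} The statement splits into a converse inequality $C_m(\mathcal{X}) \geq \sup_{\rho \in \mathcal{X}} S(A|B; \rho)$ and an achievability inequality. The plan is to dispatch the converse by reduction to the single-state case, and to attack the direct part in two stages: first for finite $\mathcal{X}$ via the one-shot bound of Section \ref{sect:one-shot}, then for arbitrary $\mathcal{X}$ via a finite approximation in $\st(\hr_{AB})$.

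\medskip

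\noindent\emph{Converse.} Any $(l,k_l)$-merging for $\mathcal{X}$ is, a fortiori, an $(l,k_l)$-merging for every single $\rho \in \mathcal{X}$, since the infimum in condition \ref{fidelity_condition}.) ranges over all purifications corresponding to states in $\mathcal{X}$. By the converse of the Horodecki--Oppenheim--Winter single-state merging theorem from Ref.~\cite{horodecki07b}, any achievable rate must satisfy $R \geq S(A|B;\rho)$ for each such $\rho$. Taking the supremum over $\mathcal{X}$ yields the desired lower bound on $C_m(\mathcal{X})$.

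\medskip

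\noindent\emph{Direct part for finite $\mathcal{X}$.} Fix a rate $R > \sup_{\rho \in \mathcal{X}} S(A|B;\rho)$ and assume $|\mathcal{X}| < \infty$. For each $\rho \in \mathcal{X}$ and each $l$, I would consider the i.i.d.\ block state $\rho^{\otimes l}$ together with one of its purifications $\psi_{ABE}^{\otimes l}$, and project all these states onto a common typical subspace for the $A$-marginals so that after renormalization one obtains a state whose $A$-reduction has approximately flat spectrum of rank $\approx 2^{lS(\rho_A)}$ and whose environment reduction has rank $\approx 2^{lS(\rho_E)}$, with errors uniform in $\rho$ because $|\mathcal{X}|$ is finite. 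Then I would invoke the one-shot merging bound established in Section \ref{sect:one-shot} for finite families of states: this produces a single $A \to B$ one-way LOCC map $\mathcal{M}_l$ and a Haar-random (or pseudo-random) unitary on $A$'s typical subspace such that, with high probability, the reduced state on the retained $A$-register is decoupled from the environment $E^{\otimes l}$ \emph{simultaneously} for every $\rho \in \mathcal{X}$. A standard decoupling-implies-merging step (Uhlmann's theorem applied to the purifications) then converts this into a merging protocol achieving fidelity $\to 1$ uniformly on $\mathcal{X}$, at entanglement rate $\frac{1}{l}\log k_l \to \sup_{\rho \in \mathcal{X}} S(A|B;\rho)$.

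\medskip

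\noindent\emph{Extension to arbitrary $\mathcal{X}$.} Since $\st(\hr_{AB})$ is compact, for every $\eps > 0$ there is a finite $\eps$-net $\mathcal{X}_\eps \subset \st(\hr_{AB})$ such that every $\rho \in \mathcal{X}$ has some $\tilde\rho \in \mathcal{X}_\eps$ with $\|\rho - \tilde\rho\|_1 \leq \eps$. Applying the finite case to $\mathcal{X}_\eps$ (inflated so that it covers $\mathcal{X}$) yields mergings $\mathcal{M}_l$ of rate arbitrarily close to $\sup_{\tilde\rho \in \mathcal{X}_\eps} S(A|B;\tilde\rho)$ that succeed uniformly on $\mathcal{X}_\eps$. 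Two continuity facts then transfer this to all of $\mathcal{X}$: first, the fidelity is Lipschitz-continuous in its arguments in trace norm, and purifications can be chosen so that $\|\psi_{ABE}^{\otimes l} - \tilde\psi_{ABE}^{\otimes l}\|_1$ remains small (on the order of $l\sqrt{\eps}$), which still gives $F_m(\rho_{AB}^{\otimes l},\mathcal{M}_l) \to 1$ once $\eps = \eps_l$ is chosen to decay slowly enough; second, the Alicki--Fannes continuity bound for $S(A|B;\cdot)$ ensures that $\sup_{\tilde\rho \in \mathcal{X}_\eps} S(A|B;\tilde\rho)$ differs from $\sup_{\rho \in \mathcal{X}} S(A|B;\rho)$ by $O(\eps \log \dim \hr_{AB})$, which can be made negligible.

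\medskip

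\noindent\emph{Main obstacle.} The nontrivial work sits in the finite case: one must simultaneously decouple the sender's system from the purifying environment for every $\rho \in \mathcal{X}$ using a single LOCC operation. The single-state decoupling argument relies on averaging over a unitary $2$-design on a specific typical subspace tailored to $\rho$, and the key technical point is that, for finite $\mathcal{X}$, a common typical subspace (or a small collection of them) can be used, with a union bound over $\mathcal{X}$ absorbing the finitely many error terms. The approximation step afterwards is comparatively routine, but care is needed because the fidelity condition is evaluated on $l$-fold tensor products, so the net precision must scale with $l$.
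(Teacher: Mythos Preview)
Your proposal is correct and follows essentially the same three-stage architecture as the paper: converse by reduction to the single-state result of Ref.~\cite{horodecki07b}, direct part for finite $\mathcal{X}$ via typical projections combined with the one-shot bound of Section~\ref{sect:one-shot}, and extension to arbitrary $\mathcal{X}$ via finite $\tau$-nets together with Fannes-type continuity for $S(A|B;\cdot)$ and trace-norm continuity of the merging fidelity. The only notable difference in mechanism is that where you speak of a ``union bound'' and a ``common typical subspace,'' the paper instead applies \emph{individual} typical projections $q_{A,i}\otimes q_{B,i}\otimes q_{E,i}$ to each $\rho_i$ and then exploits the convexity of $F_m$ (Lemma~\ref{merg_fid_rep}) on the \emph{average} state $\overline{\rho}_{AB}=\tfrac{1}{N}\sum_i \rho_{AB,i}$ to pass from a good merging for $\overline{\rho}_{AB}$ to one that is good for every $\rho_{AB,i}$ (Corollary~\ref{corollary1}); this is equivalent in effect to your union bound but is the concrete device that makes the simultaneous decoupling go through.
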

To prove the achievability part of the above Theorem \ref{merging_theorem} we show that we find universal protocols for 
state merging within the class of LOCC operations which was used by the authors of Ref. \cite{horodecki07b}. We give a brief 
outline of our proof of Theorem \ref{merging_theorem}. In Sect. \ref{merging_fidelity_props} we state and prove some 
important facts about the fidelity measure under consideration.  We follow this path and recall the decoupling lemma 
given in Ref. \cite{horodecki07b} in Sect. \ref{subsect:protocol}. On this basis we establish a one-shot bound for finite sets 
of states in Section \ref{one_shot_finite}. To this end we utilize techniques developed in Refs. \cite{bjelakovic08c} and 
\cite{bjelakovic09d} for proving coding theorems for compound quantum channels. In Sect. \ref{subsect:finite} we provide 
the direct part of our merging theorem for finite sets of states and extend these results to arbitrary sets in Sect. 
\ref{subsect:general}. The converse theorem easily carries over from the one given in Ref. \cite{horodecki07b}, and we just provide the missing 
link in Sect. \ref{subsect:converse}. \\
\section{\label{sect:one-shot}One-shot result}
\subsection{\label{merging_fidelity_props}Properties of the fidelity measure}
In this section we aim to prove some important properties of the merging fidelity. 
\begin{lemma}\label{merg_fid_rep}
Let $\mathcal{M}: \bo(\kr_{AB}^{0} \otimes \hr_{AB}) \rightarrow \bo(\kr_{AB}^{1} \otimes \hr_{B'B})$ be a channel,
$\phi_0 \in \st(\kr_{AB}^0)$, and $\phi_1 \in \st(\kr_{AB}^1)$ maximally entangled states. Then the following assertions 
hold
\begin{enumerate}
\item For any state $\rho_{AB} \in \st(\hr_{AB})$ on $\hr_{AB}$ with purification $\psi_{ABE} \in \st(\hr_{ABE})$,
  \begin{align*}
  F(\mathcal{M}\otimes id_{\hr_E}(\phi_0 \otimes \psi_{ABE}), \phi_1 \otimes \psi_{B'BE}) 
	= \sum_{z=1}^Z |\tr(p_z\rho_{AB})|^2
	  \end{align*}
    holds, where $p_1,...p_Z$ are elements of $\bo(\hr_{AB})$ which depend on $\mathcal{M}, \phi_0$ and $\phi_1$.
\item
Merging fidelity is a convex function of the input state. For any two states $\rho_1$
 and $\rho_2$ on $\hr_{AB}$ and $\lambda \in [0,1]$
 \begin{align*}
  F_m(\lambda \rho_1 + (1 - \lambda) \rho_2, \mathcal{M}) \; \leq  \; \lambda F_m(\rho_1, \mathcal{M}) 
 + (1 - \lambda)F_m(\rho_2, \mathcal{M}) 
 \end{align*}
 holds.
\end{enumerate}
\end{lemma}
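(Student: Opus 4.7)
\medskip
\noindent
\textbf{Proof plan.} For part (1), the plan is to reduce the merging fidelity to a sum of squared amplitudes via a Kraus decomposition and then rewrite each amplitude as a trace against $\rho_{AB}$. Since $\phi_1$ and $\psi_{B'BE}$ are both pure, so is $\phi_1 \otimes \psi_{B'BE}$, and formula (\ref{fidelity_pure}) gives
\begin{align*}
F_m(\rho_{AB},\mathcal{M}) = \langle \phi_1 \otimes \psi_{B'BE} \mid \mathcal{M}\otimes id_{\hr_E}(\phi_0 \otimes \psi_{ABE}) \mid \phi_1 \otimes \psi_{B'BE}\rangle.
\end{align*}
Choosing any Kraus representation $\mathcal{M}(X)=\sum_{z=1}^Z M_z X M_z^*$ and using that $\phi_0\otimes\psi_{ABE}$ is pure, this rearranges into $\sum_z |\alpha_z|^2$ with
\begin{align*}
\alpha_z := \langle \phi_1\otimes \psi_{B'BE}\mid (M_z\otimes \idn_{\hr_E})\mid \phi_0\otimes\psi_{ABE}\rangle.
\end{align*}
It then remains to realize each $\alpha_z$ as $\tr(p_z\rho_{AB})$ for some $p_z \in \bo(\hr_{AB})$ depending only on $\mathcal{M}$, $\phi_0$, $\phi_1$.

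To do this I would Schmidt-decompose the purification across the $AB\mid E$ cut as $|\psi_{ABE}\rangle = \sum_k \sqrt{\mu_k}\,|e_k\rangle_{AB}\otimes|g_k\rangle_E$, so that $\rho_{AB} = \sum_k \mu_k |e_k\rangle\langle e_k|$ and $\{g_k\}$ is orthonormal in $\hr_E$. Fixing the canonical unitary $V:\hr_A\to\hr_{B'}$ implicit in the definition of $\psi_{B'BE}$ yields $|\psi_{B'BE}\rangle = \sum_k \sqrt{\mu_k}\,(V\otimes \idn_{\hr_B})|e_k\rangle\otimes|g_k\rangle_E$. Setting
\begin{align*}
p_z := (V^*\otimes \idn_{\hr_B})\,\langle \phi_1|_{\kr_{AB}^1}\, M_z\, |\phi_0\rangle_{\kr_{AB}^0} \ \in\ \bo(\hr_{AB}),
\end{align*}
orthogonality of $\{g_k\}$ collapses the double Schmidt sum on the environment to its diagonal and produces $\alpha_z = \sum_k \mu_k \langle e_k| p_z |e_k\rangle = \tr(p_z\rho_{AB})$, as desired. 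Since $p_z$ involves only $\mathcal{M},\phi_0,\phi_1$ and the environment vectors $\{g_k\}$ drop out through their inner products, the same argument simultaneously establishes the purification-independence of $F_m(\rho_{AB},\mathcal{M})$ announced in the excerpt above: any two purifications differ by an isometry on $\hr_E$ that cancels in the diagonal trace.

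Part (2) then follows at once from the representation obtained in part (1): for each $z$ the map $\rho\mapsto \tr(p_z\rho)$ is a (complex-)linear functional on $\bo(\hr_{AB})$, and $|\cdot|^2:\cc\to\rr$ is convex, so each summand $\rho\mapsto |\tr(p_z\rho)|^2$ is a convex function of $\rho$; a finite sum of convex functions is convex, giving convexity of $F_m(\cdot,\mathcal{M})$.

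The only genuinely delicate point in this argument is making the identification $V:\hr_A\to\hr_{B'}$ explicit so that $p_z$ is unambiguously an element of $\bo(\hr_{AB})$; once that is fixed, the remainder is routine Kraus-plus-Schmidt bookkeeping, and the independence of the resulting formula from the chosen Kraus decomposition follows from the usual unitary ambiguity between Kraus representations.
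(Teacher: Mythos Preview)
Your proof is correct and follows essentially the same route as the paper: Kraus-decompose $\mathcal{M}$, use the pure-state fidelity formula (\ref{fidelity_pure}) to get a sum of squared amplitudes, and identify each amplitude as $\tr(p_z\rho_{AB})$ with $p_z$ built from $M_z$, $\phi_0$, $\phi_1$ and the identification $V:\hr_A\to\hr_{B'}$. The only cosmetic difference is that the paper packages things into operators $v_z x:=M_z(\phi_0\otimes x)$ and $wx:=\phi_1\otimes(V\otimes\eins_{\hr_B})x$ and then invokes the partial-trace identity $\langle\psi,(w^\ast v_z\otimes\eins_{\hr_E})\psi\rangle=\tr(w^\ast v_z\,\rho_{AB})$ directly, whereas you verify that identity by hand via the Schmidt decomposition on the $AB|E$ cut; your $p_z$ and the paper's $w^\ast v_z$ coincide, and your convexity argument for part~(2) via convexity of $|\cdot|^2$ is exactly what the paper's one-line remark is pointing at.
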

\begin{proof}
Let 
\begin{align*}
  \mathcal{M}(\cdot) := \sum_{z=1}^Z m_z(\cdot)m_z^\ast 
\end{align*}
be a Kraus decomposition of $\mathcal{M}$ with operators $m_z \in \bo(\kr_{AB}^0 \otimes \hr_{AB}, \kr_{AB}^1 \otimes 
\hr_{B'B})$ for every $z \in \{1,...,Z\}$. We define channels $\mathcal{V}$ and $\mathcal{W}$ which incorporate the 
input and output states $\phi_0$ and $\phi_1$. Let $\mathcal{V} \in \mathcal{C}(\hr_{AB}, \kr_{AB}^1 \otimes \hr_{B'B})$ 
be the channel constituted by Kraus operators $v_z \in \bo(\hr_{AB},\kr_{AB}^1 \otimes \hr_{B'B})$ defined by 
\begin{align*}
v_z x := m_z(\phi_0 \otimes x)
\end{align*}
for every $1 \leq z \leq Z$, $x \in \hr_{AB}$ and $\mathcal{W}(\cdot) := w(\cdot)w^\ast$ with 
\begin{align*}
w x := \phi_1 \otimes (U \otimes \eins_{\hr_B})x 
\end{align*}
for every $x \in \hr_{AB}$. Here, $U \in \bo(\hr_A,\hr_{B'})$ is the isometry which identifies $\hr_A$ and $\hr_{B'}$. 
With these definitions at hand we have
\begin{align}
 &  F(\mathcal{M}\otimes id_{\hr_E}(\phi_0 \otimes \psi_{ABE}), \phi_1 \otimes \psi_{B'BE})   \nonumber \\
 =& F(\mathcal{V} \otimes id_{\hr_E}(\psi_{ABE}), \mathcal{W} \otimes id_{\hr_E}(\psi_{ABE})) \nonumber \\
 =& \sum_{z=1}^Z \braket{(w \otimes \eins_{\hr_E}) \psi, 
      (v_z \otimes \eins_{\hr_E}) \ket{\psi}\bra{\psi} (v_z \otimes \eins_{\hr_E})^\ast (w \otimes \eins_{\hr_E}) \psi}. 
    \label{merg_fid_sk} 
\end{align}
The r.h.s. of (\ref{merg_fid_sk}) is due to the fact that the fidelity admits a representation in terms of an inner 
product if one of the inputs is pure, see eq. (\ref{fidelity_pure}). 
Each of the summands on the r.h.s. of eq. (\ref{merg_fid_sk}) can be written as
\begin{align}
\braket{\psi, (w^\ast v_z \otimes \eins_{\hr_E}) \ket{\psi}\bra{\psi} (v_z^\ast w \otimes \eins_{\hr_E}) \psi}
&= \braket{\psi, (w^\ast v_z \otimes \eins_{\hr_E}) \psi}\overline{\braket{\psi, (w^\ast v_z \otimes \eins_{\hr_E})\psi}} 
  \nonumber\\
&= |\tr((w^\ast v_z\otimes \eins_{\hr_E}) \ket{\psi}\bra{\psi})|^2 \nonumber\\
&= |tr(w^\ast v_z \rho_{AB})|^2. \label{merg_fid_sk_2}
\end{align}
Inserting the r.h.s. of eq. (\ref{merg_fid_sk_2}) into (\ref{merg_fid_sk}) yields
\begin{align*}
F(\mathcal{M}\otimes id_{\hr_E}(\phi_0 \otimes \psi_{ABE}), \phi_1 \otimes \psi_{B'BE}) = 
 \sum_{z=1}^Z|\tr(w^\ast v_z \rho_{AB})|^2,
\end{align*}
which is the desired result, if we set $p_z = w^\ast v_z$ for every $z$. The second assertion of the lemma is a direct consequence of the first one together with the fact that the fidelity takes only values in $[0,1]$.
\end{proof}

\subsection{\label{subsect:protocol}Protocol and decoupling for single states}
In this section we briefly recall a result given in Ref. \cite{horodecki07b} which marks the starting point for our investigations. 
Fortunately, the protocol constructed there, which is of relatively simple structure, can be modified for our purposes. 
Let $d_A$ be the dimension of the Hilbert space $\hr_A$. For an integer $0 < L \leq d_A$ we use the term 
$L$-\emph{merging} if we speak of a channel
\begin{align*}
 \mathcal{M}: \bo(\hr_{AB}) \rightarrow \bo(\kr_{AB}) \otimes \bo(\hr_{B'B}) 
\end{align*}
which is of the form
\begin{align}
 \mathcal{M}(\rho) = \sum_{k=0}^D a_k \otimes u_k (\rho) a_k^\ast \otimes u_k^\ast,  \label{lmerging}
\end{align}
for every $\rho \in \st(\hr_{AB})$. Here $D$ is defined $D:=\lfloor \frac{d_A}{L}\rfloor$ and $\kr_A$ and $\kr_B$ are 
Hilbert spaces with $\dim \kr_A = \dim \kr_B = L$ and $\kr_A \subseteq \hr_A$ is a subspace of $\hr_A$, where
\begin{itemize}
 \item $\{a_k\}_{k=0}^D \subset \bo(\hr_A,\kr_{A})$ is a set of rank $L$ partial isometries (except $a_0$ which has rank 
        $d_A - L\cdot 	D < L$) with pairwise orthogonal initial subspaces (in the following, we call such channels 
	\emph{$L$-instrument} for short).
\item $\{u_k\}_{k=0}^D \subset \bo(\hr_B, \kr_B \otimes \hr_{B'B})$ is a family of isometries.
\end{itemize}
We abbreviate the corresponding operation with $\mathcal{A}_k := a_k(\cdot)a_k^\ast$ for every $k$. 
Let $\psi_{ABE}$ be a purification of $\rho_{AB}$ on a Hilbert space $\hr_{ABE}$. For notational simplicity we define 
abbreviations
\begin{align*}
 p_k := \tr(a_k\rho_A a_k^\ast)\hspace{0.3cm} \text{and} \hspace{2.0cm} 
 \rho_{AE}^k := \tr_{\hr_B}(\mathcal{A}_k\otimes id_{\hr_E}(\psi_{ABE})).
\end{align*}
for every $k \in \{0,...,D\}$.
The following lemma is taken from Ref. \cite{horodecki07b}, we repeat it here including a sketch of the proof which we give for the convenience of the reader.
\begin{lemma}[cf. Ref. \cite{horodecki07b}, Prop. 3]\label{lemma:one-shot-single}
Let $\rho_{AB}$ be a bipartite state on $\hr_{AB}$ and $\{a_k\}_{k=0}^D \subset \bo(\hr_A, \kr_A)$ an $L$-instrument. 
There exists a family $\{u_k\}_{k=0}^D$ of isometries completing $\{a_k\}_{k=0}^D$ to an $L$-\emph{merging} $\mathcal{M}$
which satisfies
\begin{align*}
  F(\mathcal{M}\otimes id_{\hr_E}(\psi_{ABE}), \phi_L \otimes \psi_{B'BE})\geq 1- \tilde{Q},
\end{align*}
where $\tilde{Q}$ is defined by
\begin{align}
  \tilde{Q} := 2 \left( p_0 + \sum_{k=1}^D \left\|\rho_{AE}^k - \frac{L}{d_A} \pi_{L} \otimes \rho_E \right\|_1 \right).
\label{lemma:one-shot-single_1}
\end{align}
Here, the state $\phi_L$ is maximally entangled on $\kr_{AB}$ and $\pi_{L}$ denotes the maximally mixed state on $\kr_A$ 
(i.e. $\pi_L := \frac{\mathbbm{1}_{\kr_A}}{L})$.
\end{lemma}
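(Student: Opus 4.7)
The plan is to follow the standard decoupling strategy: Alice applies the prescribed instrument, Bob uses the classical index she transmits to apply a suitably chosen isometry, and correctness of the resulting state is analysed via Uhlmann's theorem. Because the target state $\phi_L \otimes \psi_{B'BE}$ is pure, formula (\ref{fidelity_pure}) expresses the merging fidelity as an inner product. Inserting the Kraus decomposition $\mathcal{M}(\,\cdot\,) = \sum_k (a_k \otimes u_k)(\,\cdot\,)(a_k \otimes u_k)^\ast$ then gives
\begin{align*}
F(\mathcal{M}\otimes id_{\hr_E}(\psi_{ABE}), \phi_L \otimes \psi_{B'BE}) = \sum_{k=0}^D \bigl|\langle \phi_L \otimes \psi_{B'BE}, (a_k \otimes u_k \otimes \eins_{\hr_E})\psi_{ABE}\rangle\bigr|^2,
\end{align*}
so it suffices to choose each $u_k$ so as to make the corresponding summand large. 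The $k = 0$ contribution is nonnegative and will simply be dropped, and $u_0$ is chosen arbitrarily.

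For every $k \geq 1$, observe that $(a_k \otimes \eins_{\hr_B \otimes \hr_E})\psi_{ABE}$ is a subnormalized purification of $\rho_{AE}^k$ on $\kr_A \otimes \hr_E$ with purifying register $\hr_B$, while $\phi_L \otimes \psi_{B'BE}$ purifies $\pi_L \otimes \rho_E$ on the same subsystem with purifying register $\kr_B \otimes \hr_{B'B}$. Uhlmann's theorem (in the form that permits differing purifying spaces) therefore supplies an isometry $u_k : \hr_B \to \kr_B \otimes \hr_{B'B}$ with
\begin{align*}
\bigl|\langle \phi_L \otimes \psi_{B'BE}, (a_k \otimes u_k \otimes \eins_{\hr_E})\psi_{ABE}\rangle\bigr|^2 = p_k\, F\!\left(\tfrac{1}{p_k}\rho_{AE}^k,\, \pi_L \otimes \rho_E\right).
\end{align*}
The Fuchs--van de Graaf inequality in the form $F(\sigma,\tau) \geq 1 - \|\sigma - \tau\|_1$ for states then bounds this from below by $p_k - \|\rho_{AE}^k - p_k\, \pi_L \otimes \rho_E\|_1$.

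Summing over $k = 1, \dots, D$ and using $\sum_{k=1}^D p_k = 1 - p_0$ yields
\begin{align*}
F_m \;\geq\; 1 - p_0 - \sum_{k=1}^D \|\rho_{AE}^k - p_k\, \pi_L \otimes \rho_E\|_1.
\end{align*}
To match the statement of the lemma I replace $p_k$ by $L/d_A$ via the triangle inequality together with the elementary bound $|p_k - L/d_A| = |\tr(\rho_{AE}^k - (L/d_A)\pi_L \otimes \rho_E)| \leq \|\rho_{AE}^k - (L/d_A)\pi_L \otimes \rho_E\|_1$. This dominates each summand by $2\|\rho_{AE}^k - (L/d_A)\pi_L \otimes \rho_E\|_1$ and produces the claimed bound $F_m \geq 1 - \tilde{Q}$.

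The one step requiring genuine care is the Uhlmann identification, which rests on correctly bookkeeping which factors play the role of the ``system'' ($\kr_A \otimes \hr_E$) and which serve as the ``purifying environment'' ($\hr_B$ versus $\kr_B \otimes \hr_{B'B}$) on the two sides of the comparison; everything else is a routine chain of fidelity-to-trace-distance conversions.
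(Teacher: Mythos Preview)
Your proof is correct and follows essentially the same approach as the paper: Uhlmann's theorem to obtain each $u_k$, the Fuchs--van de Graaf bound to pass to trace distances, and a triangle-inequality step to replace $p_k$ by $L/d_A$. The only cosmetic difference is that you discard the $k=0$ summand outright (yielding $1-p_0$), whereas the paper keeps it and bounds its trace-distance contribution by $2p_0$; your route actually gives a marginally sharper inequality which of course still implies the stated $1-\tilde Q$. One small omission: you should remark that when $p_k=0$ for some $k\ge 1$ the corresponding summand vanishes on both sides, so any $u_k$ works there.
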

In the following proof, the well known relations (see Ref. \cite{Fuchs2007a})
\begin{align}
F(\rho, \sigma) &\geq 1 -\|\rho - \sigma \|_1 \hspace{0.3cm} \text{and} \label{fvg_1} \\
\|\rho -  \sigma \|_1 &\leq 2 \sqrt{1-F(\rho,\sigma)} \label{fvg_2}
\end{align}
between trace distance and fidelity of any two states $\rho$ and $\sigma$ on a Hilbert space $\hr$ are used.
\begin{proof}
For every $k$, $0 \leq k \leq D$, the (sub-normalized) state $\mathcal{A}_k \otimes id_{\hr_E}(\psi_{ABE})$ is a 
purification of $\rho_{AE}^k$ and $\phi_L \otimes \psi_{B'BE}$ is a purification of $\pi_L \otimes \rho_E$. These facts 
and Uhlmann's theorem \cite{uhlmann76} (see Ref. \cite{jozsa94} for the finite dimensional version) guarantee that for every 
$k \in \{0,...,D\}$ there exists an isometry $u_k : \hr_B \rightarrow \kr_B \otimes \hr_{B'B}$ satisfying
\begin{align}
    F(\mathcal{A}_k \otimes \mathcal{U}_k \otimes id_{\hr_E}(\psi_{ABE}), \phi_L \otimes \psi_{B'BE}) 
  = F(\rho_{AE}^k,\pi_L \otimes \rho_E), \label{uhlmann_konsequenz}
\end{align}
where $\mathcal{U}_k(\cdot) := u_k(\cdot)u_k^\ast$. The rest is mostly done by lower bounding the fidelity in terms of 
the trace distance. Given the case that $p_k > 0$ for $k$, using (\ref{fvg_1}) we have 
\begin{align}
 F(\rho_{AE}^k,\pi_L \otimes \rho_E)  &= p_k F\left(\frac{1}{p_k}\rho_{AE}^k, \pi_L \otimes \rho_E\right) \nonumber \\
			      &\geq p_k - \|\rho_{AE}^k - p_k \pi_L \otimes  \rho_E\|_1. \label{spurabsch_uhlmann}
\end{align}
In case that $p_k = 0$ for $k$, $F(\rho_{AE}^k,\pi_L \otimes \rho_E) = 0$. Taking the sum over all $k$ we arrive 
at
\begin{align}
 F\left(\sum_{k=0}^D \mathcal{A}_k \otimes \mathcal{U}_k \otimes id_{\hr_E}(\psi_{ABE}), \phi_L \otimes \psi_{B'BE}\right)
    &= \sum_{k=0}^D F(\rho_{AE}^k, \pi_L \otimes \rho_E) \label{uhlmann_absch_1}\\
    &\geq 1 - \sum_{k=0}^D \|\rho_{AE}^k - p_k \pi_L \otimes \rho_E \|_1 \label{uhlmann_absch_2} \\
    &\geq 1 - 2 p_0 - \sum_{k=1}^D \|\rho_{AE}^k - p_k \pi_L \otimes \rho_E \|_1. \label{uhlmann_absch_3}
\end{align}
Eq. (\ref{uhlmann_absch_1}) follows from the linearity of the fidelity in one of the inputs given the other one is pure 
and (\ref{uhlmann_konsequenz}). For (\ref{uhlmann_absch_2}) we used (\ref{spurabsch_uhlmann}) along with the fact that 
$\sum_{k=0}^D \mathcal{A}_k$ is a channel implying $\sum_k p_k = 1$. The r.h.s. of (\ref{uhlmann_absch_3}) holds because the 
trace distance of any two states is upper bounded by $2$ which ensures
\begin{align*}
 \|\rho_{AE}^0 - p_0 \pi_L \otimes \rho_E\|_1 \leq 2 p_0. 
\end{align*}
It remains to show that 
$\|\rho_{AE}^k - p_k \pi_L \otimes \rho_E \|_1 \leq \, 2\cdot\|\rho_{AE}^k - \frac{L}{d_A} \pi_L \otimes \rho_E \|_1$, which 
can be seen as follows. It holds that
\begin{align*}
  \|\rho_{AE}^k - p_k \pi_L \otimes \rho_E \|_1  
  \leq & \, \|\rho_{AE}^k - \frac{L}{d_A} \pi_L \otimes \rho_E \|_1 + |p_k - \frac{L}{d_A}| \\
  \leq & \, 2\cdot \|\rho_{AE}^k - \frac{L}{d_A} \pi_L \otimes \rho_E \|_1, 
\end{align*}
where the first inequality is obtained by adding a zero and applying the triangle inequality together with the fact that 
every quantum state has trace norm one. The second line is by monotonicity of the trace norm under the action of 
channels.
\end{proof}

\subsection{\label{one_shot_finite}One shot bound for finite sets of states}
In this section we consider a finite set $\mathcal{X} := \{\rho_{AB,i}\}_{i=1}^N$ of states on $\hr_{AB}$ and derive a 
bound for the minimal merging fidelity of the states in $\mathcal{X}$ which is based on Lemma \ref{lemma:one-shot-single}. 
The main ingredient for the proof is the observation, that a good merging scheme for the averaged state
\begin{align}
\overline{\rho}_{AB} := \frac{1}{N}\sum_{i=1}^N \rho_{AB,i} \label{average_st}
\end{align}
will be good for every single member of $\mathcal{X}$. This is due to convexity of the merging fidelity 
(see Lemma \ref{merg_fid_rep}). Now let 
$\psi_{ABE,i}$ be any purification of $\rho_{AB,i}$ on $\hr_{ABE}$ for every $i \in [N]$. The state
\begin{align}
 \ket{\overline{\psi}_{ABR}}\bra{\overline{\psi}_{ABR}} 
  := \frac{1}{N} \sum_{i,j=1}^N \ket{\psi_{ABE,i}}\bra{\psi_{ABE,j}} \otimes \ket{e_i}\bra{e_j}
\label{average_purif}
\end{align}
with $\{e_i\}_{i=1}^N$ being an orthonormal basis in $\cc^N$  is a purification of 
$\overline{\rho}_{AB}$ on $\hr_{ABR}$ with $\hr_R := \hr_E \otimes \cc^N$. The following lemma provides a lower bound 
for the fidelity of an $L$-merging of $\overline{\rho}_{AB}$ in terms of quantities determined by the states in 
$\mathcal{X}$. 
\begin{lemma} \label{lemma:pre_expectation}
Let $\{\rho_{AB,i}\}_{i=1}^N$ be a set of states on $\hr_{AB}$. Then for the corresponding averaged state
$\overline{\rho}_{AB}$ and purifications $\psi_{ABE,1},...,\psi_{ABE,N}$, Lemma 
\ref{lemma:one-shot-single} also holds with $\tilde{Q}$ replaced by 
\begin{align*}
Q := 2 \left(p_0 + \frac{1}{N} \sum_{k=1}^D
\sum_{i,j=1}^N \sqrt{L_{ij} \cdot T_{ij}^{(k)}} \right)
\end{align*}
where $L_{ij} := L \cdot \min_{m \in \{i,j\}}\{\rank(\rho_{E,m})\}$ and 
\begin{align*}
&T_{ij}^{(k)} := \left\|\rho_{AE,ij}^k - \tfrac{L}{d_A} \pi_{L} \otimes \rho_{E,ij}\right\|_2^2.
\end{align*}
Here we used the definitions 
\begin{align*}
\psi_{ABE,ij}&:= \ket{\psi_{ABE,i}} \bra{\psi_{ABE,j}},\hspace{0,1cm} \rho_{E,ij} := \tr_{\hr_{AB}}(\psi_{ABE,ij}), 
\hspace{0.1cm} \text{and} \\ 
\rho_{AE,ij}^k &:= \tr_{\hr_B}((a_k\otimes \eins_{\hr_{BE}})\psi_{ABE,ij}(a_k^{\ast}\otimes \eins_{\hr_{BE}})) 
\end{align*}
for $i,j \in [N], k \in [D]$.
\end{lemma}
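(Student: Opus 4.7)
My plan is to derive the bound $Q$ by applying Lemma~\ref{lemma:one-shot-single} directly to the averaged state $\overline{\rho}_{AB}$ together with its purification $\overline{\psi}_{ABR}$ on $\hr_R := \hr_E \otimes \cc^N$. Given the $L$-instrument $\{a_k\}_{k=0}^D$, that lemma supplies a family of isometries $\{u_k\}_{k=0}^D$ completing it to an $L$-merging $\mathcal{M}$ such that
\[ F(\mathcal{M} \otimes id_{\hr_R}(\overline{\psi}_{ABR}), \phi_L \otimes \overline{\psi}_{B'BR}) \geq 1 - 2\left(p_0 + \sum_{k=1}^D \left\|\overline{\rho}_{AR}^k - \tfrac{L}{d_A}\pi_L \otimes \overline{\rho}_R\right\|_1\right), \]
where $\overline{\rho}_R$ and $\overline{\rho}_{AR}^k$ are defined from $\overline{\psi}_{ABR}$ in exactly the way $\rho_E$ and $\rho_{AE}^k$ were defined from $\psi_{ABE}$ in Lemma~\ref{lemma:one-shot-single}, and now $p_0 = \tr(a_0 \overline{\rho}_A a_0^\ast)$. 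The remaining work is to replace each trace norm in this sum by a per-pair quantity depending only on the original states $\rho_{AB,i}$.

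Using the explicit form $\ket{\overline{\psi}_{ABR}} = \frac{1}{\sqrt{N}}\sum_i \ket{\psi_{ABE,i}}\ket{e_i}$, a direct calculation yields the block expansions $\overline{\rho}_R = \frac{1}{N}\sum_{i,j}\rho_{E,ij}\otimes \ket{e_i}\bra{e_j}$ and $\overline{\rho}_{AR}^k = \frac{1}{N}\sum_{i,j}\rho_{AE,ij}^k \otimes \ket{e_i}\bra{e_j}$, so that the operator difference equals $\frac{1}{N}\sum_{i,j}\Delta_{ij}^{(k)}\otimes \ket{e_i}\bra{e_j}$ with $\Delta_{ij}^{(k)} := \rho_{AE,ij}^k - \frac{L}{d_A}\pi_L \otimes \rho_{E,ij}$. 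The triangle inequality combined with $\|X \otimes \ket{e_i}\bra{e_j}\|_1 = \|X\|_1$ then gives
\[ \left\|\overline{\rho}_{AR}^k - \tfrac{L}{d_A}\pi_L \otimes \overline{\rho}_R\right\|_1 \leq \frac{1}{N}\sum_{i,j=1}^N \|\Delta_{ij}^{(k)}\|_1. \]

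To pass from the trace norm to a Hilbert--Schmidt quantity I would invoke the standard estimate $\|X\|_1 \leq \sqrt{\rank(X)}\,\|X\|_2$, after which the right-hand side becomes $\frac{1}{N}\sum_{i,j}\sqrt{\rank(\Delta_{ij}^{(k)})\cdot T_{ij}^{(k)}}$. The remaining---and to my eye principal---step is the rank estimate $\rank(\Delta_{ij}^{(k)}) \leq L_{ij}$. The key observation is that a purification $\ket{\psi_{ABE,i}}$ of $\rho_{AB,i}$ must lie inside $\hr_{AB}\otimes \supp(\rho_{E,i})$ (otherwise the reduced state on $E$ would have support outside $\supp(\rho_{E,i})$). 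Applied to the rank-one operator $\ket{\psi_{ABE,i}}\bra{\psi_{ABE,j}}$, this forces both $\rho_{AE,ij}^k$ and $\pi_L \otimes \rho_{E,ij}$ to have left support in $\kr_A \otimes \supp(\rho_{E,i})$ and right support in $\kr_A \otimes \supp(\rho_{E,j})$. The same then holds for $\Delta_{ij}^{(k)}$, and hence $\rank(\Delta_{ij}^{(k)}) \leq L \cdot \min\{\rank(\rho_{E,i}),\rank(\rho_{E,j})\} = L_{ij}$.

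The main obstacle is precisely this rank bound: the crude triangle inequality for ranks would only yield $2L_{ij}$, and using the ambient dimension of $\kr_A \otimes \hr_E$ would give the much weaker estimate $Ld_E$, which is useless in the asymptotic setting where $d_E$ is exponential in the block length. The point of exploiting the purification property on \emph{both} sides of $\ket{\psi_{ABE,i}}\bra{\psi_{ABE,j}}$ is to cut the rank down to the intrinsic scale $L\cdot\min_m\rank(\rho_{E,m})$. Once this is in place, substituting $\|\Delta_{ij}^{(k)}\|_1 \leq \sqrt{L_{ij}\,T_{ij}^{(k)}}$ into the previous displays completes the proof.
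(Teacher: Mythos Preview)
Your proof is correct and follows essentially the same route as the paper: apply Lemma~\ref{lemma:one-shot-single} to $\overline{\rho}_{AB}$ with the block purification on $\hr_E\otimes\cc^N$, expand in the $\ket{e_i}\bra{e_j}$ blocks, use the triangle inequality and $\|\cdot\|_1\le\sqrt{\rank}\,\|\cdot\|_2$, and finally bound the rank by $L_{ij}$. For the rank step the paper writes out Schmidt decompositions $\psi_{ABE,i}=\sum_{k=1}^{r_i}\psi_{AB,k}^{(i)}\otimes f_k$ with a common $E$-basis (using the freedom in the choice of purifications) and reads off the $r_i\times r_j$ block-matrix structure; your support argument via $\ket{\psi_{ABE,i}}\in\hr_{AB}\otimes\supp(\rho_{E,i})$ is the same observation phrased invariantly, with the mild bonus that it does not require any special choice of purification.
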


\begin{proof}
Define 
\begin{align*}
  \overline{\rho}_R  := \tr_{\hr_{AB}}(\overline{\psi}_{ABR}),\hspace{0.2cm}\text{and}
  \hspace{0.2cm}\overline{\rho}_{AR}^k := 
  \tr_{\hr_B}((a_k\otimes \eins_{\hr_{BR}})\overline{\psi}_{ABR} (a_k^\ast \otimes \eins_{\hr_{BR}})) 
\end{align*}
for every $k \in [D]$. We bound the trace distance terms on the r.h.s. of (\ref{lemma:one-shot-single_1}) 
for $\rho_{AB}$ with its purification introduced in eq. (\ref{average_purif}). Explicitly, for every 
$k \in [D]$, we have 
\begin{align*}
 &\left\|\overline{\rho}_{AR}^k - \tfrac{L}{d_A} \pi_L \otimes \overline{\rho}_R \right\|_1  \\
 \overset{\text{(a)}}{=} &\left\| \frac{1}{N}\sum_{i,j=1}^N \left\{\rho_{AE,ij}^k  - 
    \frac{L}{d_A} \pi_L \otimes \rho_{E,ij}\right\} \otimes \ket{e_i}\bra{e_j}\right\|_1 \\
 \overset{\text{(b)}}{\leq} &\frac{1}{N} \sum_{i,j=1}^N \left\|\left\{\rho_{AE,ij}^k - 
  \frac{L}{d_A} \pi_L \otimes \rho_{E,ij}\right\} \otimes
  \ket{e_i} \bra{e_j}\right\|_1 \\
 \overset{\text{(c)}}{\leq} &\frac{1}{N} \sum_{i,j=1}^N \left\|\rho_{AE,ij}^k - 
      \frac{L}{d_A} \pi_L \otimes \rho_{E,ij} \right\|_1 \\
  \overset{\text{(d)}}{\leq} &\frac{1}{N} \sum_{i,j=1}^N \sqrt{L_{ij}} \left\|\rho_{AE,ij}^k - 
  \tfrac{L}{d_A} \pi_L \otimes \rho_{E,ij} \right\|_2.
\end{align*}
where $L_{ij} := L \cdot \min\{\rank(\rho_{E,i}), \rank(\rho_{E,j})\}$ for every $1 \leq i,j \leq N$. The above 
(in)equalities are justified by the following arguments. (a) by definition of 
$\overline{\rho}_R$ and $\overline{\rho}_{AR}^k$, (b) by use of the triangle inequality and (c) because the trace 
norm is multiplicative with respect to tensor products and the equality $\|\ket{e_i}\bra{e_j}\|_1 = 1$ for all 
$1 \leq i,j\leq N$. The well known relation $\|x\|_1 \leq \sqrt{r}\|x\|_2$ between the trace- and Hilbert-Schmidt 
norms with $r$ being the rank of $x$ justifies (d), if the rank of the matrix 
\begin{align*}
 \rho_{AE,ij}^k - \frac{L}{d_A} \pi_L \otimes \rho_{E,ij}
\end{align*}
is smaller or equal than $L_{ij}$ for all $i,j \in [N]$. This is fulfilled, which can be seen as follows. Let 
with an orthonormal basis $\{f_k\}_{k=1}^{\dim \hr_E}$ of $\hr_E$,
\begin{align}
 \psi_{ABE,i} := \sum_{k=1}^{r_i} \psi_{AB,k}^{(i)} \otimes f_k \label{schmidt_coeff_blmatrix}
\end{align}
be a Schmidt decomposition of $\psi_{ABE,i}$ for every $1 \leq i \leq N$, with the Schmidt coefficients incorporated
in the first tensor factors. This is always possible since we are free in the choice of the purifications. 
Using (\ref{schmidt_coeff_blmatrix}), one can verify, that
\begin{align*}
 &\rho_{AE,ij}^k - \frac{L}{d_A}\pi_L \otimes \rho_{E,ij} \\
 &= \sum_{k=1}^{r_i}\sum_{l=1}^{r_j} \left( a_k \tr_{\hr_B}(\ket{\psi_{AB,k}^{(i)}}\bra{\psi_{AB,l}^{(j)}})a_k^{\ast} 
     - \braket{\psi_{AB,l}^{(j)}, \psi_{AB,k}^{(i)}}\frac{L}{d_A}\pi_L \right) \otimes  \ket{f_k}\bra{f_l}.
\end{align*}
holds for every $i,j \in [N]$. This expression can be interpreted as an $r_i \times r_j$ block matrix with each 
block an $L \times L$ matrix. It has
therefore rank smaller or equal $L \cdot \min\{r_i, r_j\}$.
\end{proof}

Let $L \in \{1,...,d_A\}$ be fixed and an arbitrary but fixed $L$-instrument $\mathcal{A} := \{\mathcal{A}_k\}_{k=1}^D 
\subset \mathcal{C}^{\downarrow}(\hr_A, \kr_A)$ be given. Every unitary $v \in \mathfrak{U}(\hr_A)$ defines a channel 
$\mathcal{V} \in \mathcal{C}(\hr_A)$ via $\mathcal{V}(\cdot) := v(\cdot)v^\ast$. With these definitions, for every $v$, 
we get an $L$-instrument $\mathcal{A}(v)$ with
\begin{align}
 \mathcal{A}(v) := \{\mathcal{A}_k \circ \mathcal{\mathcal{V}}\}_{k=0}^D. \nonumber
\end{align}
Every collection of isometric channels $\{\mathcal{U}_k\}_{k=0}^D\subset \mathcal{C}(\hr_B, \kr_B \otimes \hr_{B'B})$ 
completes $\mathcal{A}(v)$ to an $L$-merging 
\begin{align}
 \sum_{k=0}^D \mathcal{A}_k \circ \mathcal{V} \otimes \mathcal{U}_k(\cdot)
\end{align}
We define the function
\begin{align}
F_m(\rho,\mathcal{A}(v)) := \max_{ \{\mathcal{U}_k\}_{k=0}^D} F_m(\rho_{AB}, \sum_{k=1}^D \mathcal{A}_k\circ \mathcal{V} 
			      \otimes \mathcal{U}_k(\cdot)) \label{max_fidelity_defined}
\end{align}
for every $v \in \mathfrak{U}(\hr_A)$, $\rho \in \st(\hr_{AB})$. The maximization in (\ref{max_fidelity_defined}) is over 
all collections $\{\mathcal{U}_k\}_{k=0}^D \subset \mathcal{C}(\hr_B, \kr_B \otimes \hr_{B'B})$ of isometric channels.

The expected merging fidelity under random selection of such $L$-mergings according to the normalized Haar measure on
$\mathfrak{U}(\hr_A)$ is bounded in the following lemma, which is the key technical result for the proof of the merging 
theorem.

\begin{lemma}\label{lemma:expectation}
For $L \in \{1,...,d_A\}$, a set $\{\rho_{AB,i}\}_{i=1}^N$ of states on $\hr_{AB}$ and $\psi_{ABE,i}$ a purification of 
$\rho_{AB,i}$ on $\hr_{ABE}$ for each $i$, we have
\begin{align}
  \int_{\mathfrak{U}(\hr_A)} F_m&(\overline{\rho}_{AB}, \mathcal{A}(v)) \ dv
  \geq 1 - 2 \left(\frac{L}{d_A}+ 2\cdot \sum_{i=1}^N \sqrt{L\cdot \rank(\rho_{E,i}) \| \rho_{B,i}\|_2^2} \right) 
  \label{lemma:expectation_1}
\end{align}
where the integration is with respect to the normalized Haar measure on $\mathfrak{U}(\hr_A)$.
\end{lemma}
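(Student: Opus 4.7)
The plan is to apply Lemma~\ref{lemma:pre_expectation} pointwise to the instrument $\mathcal{A}(v)$, obtaining for each $v$ the lower bound $F_m(\overline{\rho}_{AB},\mathcal{A}(v)) \geq 1 - Q(v)$ with $Q(v) = 2\bigl(p_0(v) + \tfrac{1}{N}\sum_{k=1}^{D}\sum_{i,j=1}^{N}\sqrt{L_{ij}\,T_{ij}^{(k)}(v)}\bigr)$, and then to integrate this inequality against the normalised Haar measure on $\mathfrak{U}(\hr_A)$. For the $p_0$ contribution, left-invariance of the Haar measure gives $\int v\,\overline{\rho}_A\,v^{\ast}\,dv = \eins_{\hr_A}/d_A$, and cyclicity together with the fact that $a_0^{\ast}a_0$ is a projection of rank $d_A - LD < L$ yields $\int p_0(v)\,dv = (d_A - LD)/d_A < L/d_A$, producing the first summand on the right-hand side of~(\ref{lemma:expectation_1}).

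For the fluctuation terms I would first use Jensen's inequality to pass $\int \sqrt{T_{ij}^{(k)}(v)}\,dv \leq \sqrt{\int T_{ij}^{(k)}(v)\,dv}$, reducing the task to a Haar second-moment computation of $\|X_{ij}^{(k)}(v)\|_2^2$ with $X_{ij}^{(k)}(v) := \rho_{AE,ij}^k(v) - \tfrac{L}{d_A}\pi_L\otimes \rho_{E,ij}$. Expanding this squared norm into its four terms and applying the identities $\int v^{\ast} P_k v\,dv = (L/d_A)\eins_{\hr_A}$ and $\int (v^{\ast} P_k v)\otimes (v^{\ast} P_k v)\,dv = \alpha\,\eins + \beta\,F_{AA}$ (with $F_{AA}$ the swap on $\hr_A^{\otimes 2}$, $P_k := a_k^{\ast}a_k$, and the coefficients $\alpha,\beta$ determined by $\tr P_k = L$), the resulting contribution proportional to $\|\rho_{E,ij}\|_2^2$ carries a non-positive overall coefficient and can be dropped. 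What remains is a bound of the form $\int T_{ij}^{(k)}(v)\,dv \leq \tfrac{L^2}{d_A(d_A-1)}\|\tau_{ij}\|_2^2$, where $\tau_{ij} := \tr_B \psi_{ABE,ij}$.

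The key purity-based identity is $\|\tau_{ij}\|_2^2 = \tr(\rho_{B,i}\rho_{B,j})$, which follows by inserting Schmidt decompositions of $\ket{\psi_{ABE,i}}$ and $\ket{\psi_{ABE,j}}$ across the $AE|B$ cut and expanding. Cauchy--Schwarz in the Hilbert--Schmidt inner product then gives $\tr(\rho_{B,i}\rho_{B,j}) \leq \|\rho_{B,i}\|_2\,\|\rho_{B,j}\|_2$. Using also $\sqrt{L_{ij}} \leq \sqrt{L}\,(r_i r_j)^{1/4}$ (from $\min\{r_i,r_j\} \leq \sqrt{r_i r_j}$ with $r_i := \rank\rho_{E,i}$) and an AM--GM step on the resulting double sum, the quantity $\tfrac{1}{N}\sum_{i,j}\sqrt{L_{ij}}\sqrt{\int T_{ij}^{(k)}(v)\,dv}$ collapses to a constant multiple of $\sum_i\sqrt{L\,r_i\,\|\rho_{B,i}\|_2^2}$. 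Summing over $k$ contributes a factor $D \leq d_A/L$ which, combined with the $L/\sqrt{d_A(d_A-1)}$ prefactor from the Haar step, yields a bounded overall constant and hence the second summand of~(\ref{lemma:expectation_1}).

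The main obstacle is the Haar second-moment calculation itself: one has to carry out the expansion of $\|X_{ij}^{(k)}(v)\|_2^2$ for the generically non-Hermitian $\tau_{ij}$ (when $i \neq j$) and verify that the $\|\rho_{E,ij}\|_2^2$ contributions arising from the swap formula and from the explicit cross-terms conspire to a non-positive net coefficient, so that only $\|\tau_{ij}\|_2^2$ survives. A secondary technical point is the combinatorial collapse of the $N\times N$ sum over $(i,j)$ to a single sum over $i$ without any $N$-dependence; this is made possible precisely by the \emph{minimum} of ranks in the definition of $L_{ij}$ together with the factored Cauchy--Schwarz bound $\|\tau_{ij}\|_2^2 \leq \|\rho_{B,i}\|_2\,\|\rho_{B,j}\|_2$.
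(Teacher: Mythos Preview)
Your overall skeleton matches the paper's proof: apply Lemma~\ref{lemma:pre_expectation} pointwise, integrate, pull the square root outside via Jensen, bound the Haar second moment of $T_{ij}^{(k)}$, and then collapse the $N\times N$ sum to a single sum over $i$. The paper does not carry out the Haar second-moment calculation itself; it simply invokes Lemma~6 of Ref.~\cite{horodecki07b}, which already gives $\int T_{ij,v}^{(k)}\,dv \leq (L^2/d_A^2)\,\|\tr_{\hr_B}(\ket{\psi_{ABE,i}}\bra{\psi_{ABE,j}})\|_2^2$ directly, so your proposed swap-operator expansion is extra work that the paper sidesteps by citation.

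The genuine difference is in how the double sum is collapsed. The paper sets $D_{ij}:=\|\tr_{\hr_B}(\ket{\psi_{ABE,i}}\bra{\psi_{ABE,j}})\|_2^2$ and appeals to two auxiliary lemmas: Lemma~\ref{lemma:maximum_trace} to establish $D_{ij}\leq \max\{D_{ii},D_{jj}\}$, and then the combinatorial Lemma~\ref{lemma:compound_lemma} (from Ref.~\cite{bjelakovic08c}) to conclude $\tfrac{1}{N}\sum_{i,j}\sqrt{L_{ij}D_{ij}} \leq 2\sum_i\sqrt{L_{ii}D_{ii}}$. Your route is different and in fact a bit sharper: the identity $\|\tau_{ij}\|_2^2 = \tr(\rho_{B,i}\rho_{B,j})$ together with Cauchy--Schwarz gives the geometric-mean bound $D_{ij}\leq\sqrt{D_{ii}D_{jj}}$ (stronger than the paper's $\max$ bound), and combined with $\min\{r_i,r_j\}\leq\sqrt{r_ir_j}$ the double sum factorises as $\tfrac{\sqrt{L}}{N}(\sum_i y_i)^2 \leq \sqrt{L}\sum_i y_i^2$ via Cauchy--Schwarz, yielding the target sum with constant $1$ rather than the paper's $2$. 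So your argument avoids both auxiliary lemmas at the cost of needing the purity identity for $\|\tau_{ij}\|_2^2$, and delivers a slightly better constant; the paper's route is more modular, reusing a lemma already proved for compound-channel coding.
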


To prove the claim of Lemma \ref{lemma:expectation} the following two lemmas are needed. 
\begin{lemma}[Ref. \cite{bjelakovic08c}, Lemma 3.2] \label{lemma:compound_lemma}
Let $L$ and $D$ be $N\times N$-matrices with nonnegative entries such that 
\begin{align} 
L_{jl} \leq L_{jj}, \hspace{0.1cm} L_{jl} \leq L_{ll}\; \text{and} \; D_{ij} \leq \max\{D_{ii}, D_{jj}\} \nonumber
\end{align}
for all $i,j \in \{1,...,N\}$. Then 
\begin{align}
 \sum_{i,j=1}^N\frac{1}{N}\sqrt{L_{ij}D_{ij}} \leq 2 \sum_{i=1}^N \sqrt{L_{ii}D_{ii}} \nonumber
\end{align}
\end{lemma}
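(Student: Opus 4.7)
The plan is to prove the inequality termwise: I will bound each summand $\sqrt{L_{ij}D_{ij}}$ by a sum of diagonal quantities $\sqrt{L_{ii}D_{ii}} + \sqrt{L_{jj}D_{jj}}$, and then sum over $i,j$. The two hypotheses on $L$ and $D$ are precisely what is needed to make this reduction work: the first gives $L_{ij} \leq \min\{L_{ii},L_{jj}\}$, while the second directly states $D_{ij} \leq \max\{D_{ii},D_{jj}\}$.

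The central step is then the combined bound
\begin{align*}
\sqrt{L_{ij}D_{ij}} \;\leq\; \sqrt{\min\{L_{ii},L_{jj}\}\cdot \max\{D_{ii},D_{jj}\}} \;\leq\; \max\{\sqrt{L_{ii}D_{ii}},\sqrt{L_{jj}D_{jj}}\}.
\end{align*}
The last inequality is the only non-immediate point in the argument: whichever of $D_{ii}$ and $D_{jj}$ realizes the maximum, the \emph{same-indexed} $L$-value is at least the minimum. Concretely, if $D_{ii} \geq D_{jj}$ then the middle expression equals $\sqrt{\min\{L_{ii},L_{jj}\}\,D_{ii}}\leq \sqrt{L_{ii}D_{ii}}$; the symmetric case is analogous. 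Once this is in place, I would further weaken the maximum to the sum $\sqrt{L_{ii}D_{ii}} + \sqrt{L_{jj}D_{jj}}$, which splits the bound cleanly along the two indices.

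To finish, I sum over all $i,j \in \{1,\ldots,N\}$ and divide by $N$. Each diagonal term $\sqrt{L_{ii}D_{ii}}$ contributes $N$ times from the $i$-marginal and $N$ times from the $j$-marginal, yielding the prefactor $2$ asserted in the conclusion. I do not expect any serious obstacle here; the whole argument is a two-line combinatorial observation, and the only step requiring a moment's thought is the pairing of the $L$- and $D$-extrema by a common index, which is what makes the hypotheses of the lemma (a two-sided bound on $L$ together with a one-sided bound on $D$) combine nicely under the square root.
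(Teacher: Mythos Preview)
Your argument is correct. The key observation---that whichever index realizes the maximum of $D_{ii},D_{jj}$ also has its $L$-value dominating the minimum, so $\min\{L_{ii},L_{jj}\}\max\{D_{ii},D_{jj}\}\leq \max\{L_{ii}D_{ii},L_{jj}D_{jj}\}$---is exactly the pairing trick that makes the two asymmetric hypotheses cooperate, and the subsequent weakening $\max\leq$ sum followed by the double-counting over $i,j$ gives the factor $2N$ as you describe.

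Note that the paper does not actually supply a proof of this lemma; it merely quotes it as Lemma~3.2 of Ref.~\cite{bjelakovic08c}. Your proof is the natural elementary argument and almost certainly coincides with the one in the cited reference.
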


\begin{lemma} \label{lemma:maximum_trace}
 Let $\tau$ and $\xi$ be elements of a bipartite Hilbert space $\hr \otimes \hr'$. Then
\begin{align*}
 \|\tr_{\hr'}(\ket{\tau}\bra{\xi})\|_2^2 \leq \max_{\chi \in \{\tau , \xi \}} \|\tr_{\hr'}(\ket{\chi}\bra{\chi}) \|_2^2 
\end{align*}
\end{lemma}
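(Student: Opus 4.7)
The plan is to pick product bases for $\hr\otimes \hr'$, encode $\tau$ and $\xi$ as matrices, and reduce the claim to the Cauchy-Schwarz inequality for the Hilbert-Schmidt inner product. Concretely, I would fix orthonormal bases $\{e_i\}$ of $\hr$ and $\{f_k\}$ of $\hr'$ and write $\tau = \sum_{i,k} T_{ik}\,e_i\otimes f_k$ and $\xi = \sum_{j,k} X_{jk}\, e_j\otimes f_k$, defining matrices $T$ and $X$. A direct calculation then gives
\begin{align*}
\tr_{\hr'}(\ket{\tau}\bra{\xi}) \;=\; T X^\ast
\end{align*}
as an operator on $\hr$, which is generally not Hermitian but whose Hilbert-Schmidt norm is nevertheless well defined.

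From this representation, cyclicity of the trace yields
\begin{align*}
\|\tr_{\hr'}(\ket{\tau}\bra{\xi})\|_2^2 \;=\; \tr(T X^\ast X T^\ast) \;=\; \tr\bigl((T^\ast T)(X^\ast X)\bigr).
\end{align*}
Setting $A := T^\ast T$ and $B := X^\ast X$, which are both positive semidefinite, Cauchy-Schwarz in the Hilbert-Schmidt inner product gives $\tr(AB) \leq \sqrt{\tr(A^2)\tr(B^2)}$. It remains to recognize $\tr((T^\ast T)^2) = \tr((T T^\ast)^2) = \|TT^\ast\|_2^2 = \|\tr_{\hr'}(\ket{\tau}\bra{\tau})\|_2^2$, and analogously for $\xi$, so that the bound reads
\begin{align*}
\|\tr_{\hr'}(\ket{\tau}\bra{\xi})\|_2^2 \;\leq\; \sqrt{\|\tr_{\hr'}(\ket{\tau}\bra{\tau})\|_2^2 \cdot \|\tr_{\hr'}(\ket{\xi}\bra{\xi})\|_2^2}.
\end{align*}

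Combining this with the elementary inequality $\sqrt{ab}\leq \max(a,b)$ for $a,b\geq 0$ yields the lemma. I do not anticipate a real obstacle here; the only mild subtlety is that $\tr_{\hr'}(\ket{\tau}\bra{\xi})$ is not in general Hermitian or positive, so one cannot reach the conclusion through a variational/spectral argument. The matrix-product identification $TX^\ast$ sidesteps this issue entirely, and the positivity of $T^\ast T$ and $X^\ast X$ is exactly what Cauchy-Schwarz needs.
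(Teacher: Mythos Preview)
Your proof is correct and follows essentially the same route as the paper: both reduce the claim to the Cauchy--Schwarz inequality for the Hilbert--Schmidt inner product and then use $\sqrt{ab}\leq\max(a,b)$. The only cosmetic difference is that the paper expands $\tau,\xi$ in $\hr'$ alone and works with the Gram matrices $X_{mn}=\braket{\tau_m,\tau_n}$, $Y_{mn}=\overline{\braket{\xi_m,\xi_n}}$, whereas you introduce full coefficient matrices $T,X$ and use the identification $\tr_{\hr'}(\ket{\tau}\bra{\xi})=TX^\ast$; since $T^\ast T$ and $X^\ast X$ are exactly those Gram matrices, the two computations coincide.
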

\begin{proof}[Proof of Lemma \ref{lemma:maximum_trace}]
Choose an orthonormal basis $\{e_m\}_{m=1}^d$ in $\hr'$ where $d:= \dim(\hr')$. The elements $\varphi$ and $\psi$ can be 
decomposed in the form
\begin{align*}
 \varphi &=  \sum_{m=1}^d \varphi_m \otimes e_m \hspace{1.0cm} \text{and}  \\
 \psi    &=  \sum_{m=1}^d \psi_m \otimes e_m 
\end{align*}
with suitable elements $\varphi_1,...,\varphi_d$ and $\psi_1,...,\psi_d$ in $\hr$. With these decompositions
\begin{align*}
 \tr_{\hr'}(\ket{\varphi}\bra{\psi}) = \sum_{m,n=1}^d \ket{\varphi_m}\bra{\psi_n}\cdot \tr(\ket{e_m}\bra{e_n}).
\end{align*}
Therefore
\begin{align}
 \|\tr_{\hr'}(\ket{\varphi}\bra{\psi})\|_2^2 &= \|\sum_{m=1}^d \ket{\varphi_m}\bra{\psi_m}\|_2^2 \label{matrixlemma_1} \\
				     &= |\tr\left(\sum_{m,n=1}^d (\ket{\varphi_m}\bra{\psi_m})^\ast(\ket{\varphi_n}
					\bra{\psi_n})\right)| \label{matrixlemma_2}\\
				     &= |\sum_{m,n=1}^d \braket{\varphi_m,\varphi_n}\braket{\psi_n,\psi_m}| 
					\label{matrixlemma_3}.
\end{align}
To show the assertion of the lemma consider 2 $d \times d$ matrices $X$ and $Y$ with entries
$X_{mn} := \braket{\varphi_m, \varphi_n}$ resp. $Y_{mn} := \overline{\braket{\psi_m, \psi_n}}$ for $0 < m,n \leq d$. Then
the r.h.s. of (\ref{matrixlemma_3}) can be read as $\tr(XY)$, and we have 
\begin{align}
 \left|\sum_{m,n=1}^d \braket{\varphi_m \varphi_n}\braket{\psi_n,\psi_m}\right| &= |\tr(XY)| \nonumber\\
								     & \leq \|X\|_2 \|Y\|_2 \label{cauchy_matrix} \\
								     & \leq \max_{Z\in \{X,Y\}}\|Z\|_2^2, \nonumber
\end{align}
where the r.h.s. of (\ref{cauchy_matrix}) is an application of the Cauchy-Schwarz inequality. It is easy to see that
$\|X\|_2^2 = \|\tr_{\hr'}(\ket{\varphi}\bra{\varphi})\|_2^2$ and $\|Y\|_2^2 = \|\tr_{\hr'}(\ket{\psi}\bra{\psi})\|_2^2$, 
so we are done.
\end{proof}

\begin{proof}[ Proof of Lemma \ref{lemma:expectation}] 
First we have to convince ourselves, that $F_m(\overline{\rho}_{AB}, \mathcal{A}(\cdot))$ depends measurably on $v \in 
\mathfrak{U}(\hr_A)$.  For each fixed set $\{\mathcal{U}_k\}_{k=0}^D$, the function 
$F_m(\overline{\rho}_{AB}, \sum_{k=1}^D \mathcal{A}_k \circ \mathcal{V} \otimes \mathcal{U}_k)$ clearly is continuous 
in $v$, therefore, $F_m(\overline{\rho}_{AB}, \mathcal{A}(v))$ as a maximum over such functions is lower semicontinous, 
which implies its measurability. \\
Using Lemma \ref{lemma:pre_expectation} we get 
\begin{align}
 F_m(\overline{\rho}_{AB}, \mathcal{A}(v)) \geq 1 - Q_v  \nonumber
\end{align}
with error
\begin{align}
 Q_v:= 2 \left(p_0^v + \frac{1}{N} \sum_{k=1}^D
\sum_{i,j=1}^N \sqrt{L_{ij} \cdot T_{ij,v}^{(k)}} \right). \nonumber
\end{align}
Here $p_0^v := \tr((\mathcal{A}_0\circ \mathcal{V})(\rho_A))$,
\begin{align*}
T_{ij,v}^{(k)} := \left\|\rho_{AE,ij,v}^k - \tfrac{L}{d_A} \pi_{L} \otimes \rho_{E,ij}\right\|_2^2
\end{align*}
and 
\begin{align*}
&\rho_{AE,ij,v}^k := \mathcal{A}_k\circ \mathcal{V}(\tr_{\hr_B}(\psi_{ABE,ij})).
\end{align*}
By virtue of Jensen's inequality
\begin{align}
 \int_{\mathfrak{U}(\hr_A)} Q_v \ dv
 \leq 2 \left(\int_{\mathfrak{U}(\hr_A)} p_0^v \ dv + \frac{1}{N} \sum_{k=1}^D \sum_{i,j=1}^N \left( L_{ij} \cdot 
  \int_{\mathfrak{U}(\hr_A)} T_{ij,v}^{(k)} \ dv \right)^{\frac{1}{2}}\right)
\nonumber
\end{align}
holds. It remains to bound the expectations in the right hand side of the above inequality. This was already done in
Lemma 6 of Ref. \cite{horodecki07b}. We have
\begin{align}
 \int_{\mathfrak{U}(\hr_A)} T_{ij,v}^k \ dv &\leq \frac{L^2}{d_A^2}\|\tr_{\hr_B}(\ket{\psi_{ABE,i}}
	  \bra{\psi_{ABE,j}})\|_2^2 , \hspace{0.2cm}\text{and} \hspace{0.4cm}  
    \int_{\mathfrak{U}(\hr_A)} p_0^v  \ dv \leq \frac{L}{d_A}.\label{haar_integrals}
\end{align}
Abbreviating $D_{ij}:= \|\tr_{\hr_B}(\ket{\psi_{ABE,i}}\bra{\psi_{ABE,j}})\|_2^2$ for every $i, j \in [N]$, 
(\ref{haar_integrals}) implies
\begin{align}
 \int_{\mathfrak{U}(\hr_A)} Q_v \ dv
&\leq 2 \left( \frac{L}{d_A} + \frac{1}{N} \sum_{k=1}^D \sum_{i,j=1}^N \sqrt{L_{ij} \cdot
\frac{L^2}{d_A^2}D_{ij}}\right)  \label{fidelitydu_0} \\
&\leq 2 \left( \frac{L}{d_A} + \frac{1}{N} \sum_{i,j=1}^N \sqrt{L_{ij} D_{ij}}\right). \label{fidelitydu}
\end{align}
The second inequality follows from the fact that the summands on the r.h.s. of (\ref{fidelitydu_0}) are independent of 
$k$ and $D\frac{L}{d_A} \leq 1$ by 
definition of $D$. By definition of $L_{ij}$, clearly $L_{ij} = \min\{L_{ii}, L_{jj}\}$ for all $i, j$ and 
so the first assumption of Lemma \ref{lemma:compound_lemma} is fulfilled. The second assumption 
(i.e. $D_{ij} \leq \max\{D_{ii}, D_{jj}\}$) holds by Lemma \ref{lemma:maximum_trace}. 
Using Lemma \ref{lemma:compound_lemma}, we obtain
\begin{align}
  \int_{\mathfrak{U}(\hr_A)} Q_v \ dv \leq 2 \left(\frac{L}{d_A} + 2 \sum_{i=1}^N 
  \sqrt{L\cdot \rank(\rho_{E,i})\|\rho_{B,i} \|_2^2} \right).\nonumber
\end{align}
Note that we replaced $\|\rho_{AE,i}\|_2$ by $\|\rho_{B,i}\|_2$ for every $i$, which is admissible, because they are 
complementary marginals of a pure state \cite{araki70}.
\end{proof}
\begin{corollary}\label{corollary1}
 Lemma \ref{lemma:expectation} provides the desired bound on the worst-case merging fidelity for finite sets. If we choose
$\mathcal{M}$ to be composed of the $L$-instrument $\mathcal{A}(\tilde{v})$ for some $\tilde{v}$ which fulfills the bound 
on the right hand side of (\ref{lemma:expectation_1}), and $\{\mathcal{U}_k\}_{k=1}^D$ which is a maximizer realizing 
$F_m(\overline{\rho}_{AB},\mathcal{A}(\tilde{v}))$ for $\tilde{v}$ (see eq. (\ref{max_fidelity_defined})), we have 
\begin{align*}
 F_m(\overline{\rho}_{AB}, \mathcal{M}) \geq 1 -  2 \left(\frac{L}{d_A} + 2 \sum_{i=1}^N 
  \sqrt{L\cdot \rank(\rho_{E,i})\|\rho_{B,i} \|_2^2} \right)
\end{align*}
which implies, together with the convexity property of $F_m$ (see Lemma \ref{merg_fid_rep}),
\begin{align*}
\underset{i \in [N]}{\min} F_m(\rho_{AB,i}, \mathcal{M}) \geq 1 -  2N \left(\frac{L}{d_A} + 2 \sum_{i=1}^N 
  \sqrt{L\cdot \rank(\rho_{E,i})\|\rho_{B,i} \|_2^2} \right).
\end{align*}
\end{corollary}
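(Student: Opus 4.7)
The plan is to derive the corollary in three short steps, each of which follows immediately from results established above.

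First, I invoke Lemma \ref{lemma:expectation} to obtain a lower bound on the Haar-average merging fidelity of $\overline{\rho}_{AB}$ over the random choice of unitary $v \in \mathfrak{U}(\hr_A)$. Since the Haar measure is a probability measure, there must exist at least one unitary $\tilde{v}$ for which $F_m(\overline{\rho}_{AB}, \mathcal{A}(\tilde{v}))$ meets or exceeds this average. Choosing $\mathcal{M}$ to consist of the $L$-instrument $\mathcal{A}(\tilde{v})$ together with a collection $\{\mathcal{U}_k\}_{k=0}^D$ realizing the maximum in the definition (\ref{max_fidelity_defined}) of $F_m(\overline{\rho}_{AB}, \mathcal{A}(\tilde{v}))$ then establishes the first claimed inequality directly.

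Second, to descend from the averaged state to individual members of $\mathcal{X}$, I use the convexity property of the merging fidelity established as item 2 of Lemma \ref{merg_fid_rep}. Since $\overline{\rho}_{AB} = \frac{1}{N}\sum_{i=1}^N \rho_{AB,i}$, iterating this convexity inequality yields
\begin{align*}
F_m(\overline{\rho}_{AB}, \mathcal{M}) \leq \frac{1}{N}\sum_{i=1}^N F_m(\rho_{AB,i}, \mathcal{M}),
\end{align*}
so the first inequality already forces the arithmetic mean of the individual fidelities to be at least $1 - 2\epsilon$, where I abbreviate $\epsilon := \frac{L}{d_A} + 2\sum_{i=1}^N\sqrt{L\cdot \rank(\rho_{E,i})\|\rho_{B,i}\|_2^2}$.

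Third, I convert this mean bound into a worst-case bound by the trivial observation that fidelities lie in $[0,1]$. If the arithmetic mean of $N$ quantities from $[0,1]$ is at least $1 - 2\epsilon$, then $\sum_i F_m(\rho_{AB,i}, \mathcal{M}) \geq N(1 - 2\epsilon)$; since each of the other $N-1$ summands is at most $1$, the smallest one must itself satisfy $\min_{i \in [N]} F_m(\rho_{AB,i}, \mathcal{M}) \geq N(1 - 2\epsilon) - (N-1) = 1 - 2N\epsilon$, which is precisely the second claimed bound. There is no genuine obstacle here; the factor $N$ in front of the error term in the second inequality is the familiar, unavoidable cost of passing from an average-case to a worst-case guarantee via this elementary pigeonhole argument.
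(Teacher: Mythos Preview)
Your proof is correct and follows precisely the route the paper sketches within the corollary statement itself: existence of a good $\tilde{v}$ from the Haar average of Lemma~\ref{lemma:expectation}, then convexity of $F_m$ from Lemma~\ref{merg_fid_rep} to pass to the arithmetic mean of the individual fidelities, and finally the elementary pigeonhole step (using $F_m \leq 1$) to convert the mean bound into a worst-case bound at the cost of the factor $N$. The paper gives no additional argument beyond what you have written out.
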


\section{\label{merging_proof}Proof of the merging theorem}
\begin{subsection}{Typical subspaces}
Here we state some properties of frequency typical projections which will be needed in the achievability 
proof. The concept of typicality is standard in classical and quantum information theory. Therefore we provide just the 
needed properties which can be found (along with basic definitions) in Ref. \cite{bjelakovic08c} (see Ref.\cite{csiszar11} for 
the properties of types and typical sequences). 
\begin{lemma}\label{types}
There exists a real number $c>0$ such that for every Hilbert space $\hr$ of dimension $d$ the following holds:
For each state $\rho$ on $\hr$, $\delta \in (0,\frac{1}{2})$ and $l \in \nn$ there is a projection $q_{\delta,l} 
\in \bo(\hr^{\otimes l})$ (its so-called \emph{frequency typical projection)} with
\begin{enumerate}
 \item $\tr(q_{\delta,l}\rho^{\otimes l}) \geq 1 - 2^{-l(c\delta^2-h(l))}$
 \item $q_{\delta,l}\rho^{\otimes l}q_{\delta,l} \leq 2^{-l(S(\rho)-\varphi(\delta))}q_{\delta,l}$
 \item $2^{l(S(\rho)-\varphi(\delta)-h(l))} \leq \rank(q_{\delta,l})\leq 2^{l(S(\rho)+\varphi(\delta))}$
\end{enumerate}
where the functions $\varphi(\delta)\rightarrow 0$ for $\delta \rightarrow 0$ and $h(l) \rightarrow 0$ for  
$l \rightarrow \infty$. Explicitly they are given by
\begin{align}
 h(l) = \frac{d}{l}\log(d+1)\hspace{0.2cm} \text{and} \hspace{0.4cm}  \varphi(\delta) = - \delta &\log\frac{\delta}{d} \nonumber 
\end{align}
for all $l \in \nn$ and $\delta \in (0,\frac{1}{2})$.
\end{lemma}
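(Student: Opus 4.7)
The plan is to prove Lemma \ref{types} by reducing to classical frequency typicality through spectral decomposition. Fix an eigendecomposition $\rho = \sum_{i=1}^d \lambda_i \ket{e_i}\bra{e_i}$. Then $\rho^{\otimes l}$ is diagonal in the product basis $\{\ket{e_{i_1}}\otimes\cdots\otimes\ket{e_{i_l}}\}_{i^l \in [d]^l}$ with eigenvalues $\lambda_{i^l} := \prod_{j=1}^l \lambda_{i_j}$. For a sequence $i^l \in [d]^l$ denote by $P_{i^l}(i) := N(i|i^l)/l$ the empirical type, and declare $i^l$ to be $\delta$-typical when $P_{i^l}$ is sufficiently close to $\lambda$ (e.g.\ in total variation, or coordinatewise with slack $\delta/d$). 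I then take $q_{\delta,l}$ to be the projection onto the span of those $\ket{e_{i^l}}$ indexed by typical sequences. Since $q_{\delta,l}$ and $\rho^{\otimes l}$ are simultaneously diagonal, all three assertions translate into statements about the classical i.i.d.\ distribution $\lambda^{\otimes l}$.

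For item (ii), if $i^l$ is typical then $-\log\lambda_{i^l} = -l\sum_i P_{i^l}(i)\log\lambda_i$, which differs from $lS(\rho) = -l\sum_i \lambda_i\log\lambda_i$ by an amount I control by the typicality condition combined with a Fannes-type continuity estimate for $x\mapsto -x\log x$; the precise form $\varphi(\delta) = -\delta\log(\delta/d)$ is the standard bound for the difference between two entropies whose underlying distributions are at most $\delta$ apart. Item (iii) is pure counting: by the method of types the number of sequences of any fixed type $P$ is at most $2^{lH(P)}$, and $H(P)\leq S(\rho)+\varphi(\delta)$ for typical $P$; combined with a polynomial-in-$l$ bound on the number of types this yields the upper rank bound, and the lower rank bound follows by a matching count for a rational approximation of $\lambda$.

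For item (i), I would estimate $\tr(q_{\delta,l}\rho^{\otimes l}) = \Pr_{\lambda^{\otimes l}}(\text{typical})$ by a union bound over atypical types together with a Sanov-style exponential estimate $\Pr(\text{type }P)\leq 2^{-lD(P\|\lambda)}$ and Pinsker's inequality $D(P\|\lambda)\geq c\,\|P-\lambda\|_1^2$ for an absolute constant $c>0$. The correction term $h(l) = (d/l)\log(d+1)$ absorbs the polynomial factor from counting atypical types, so that the overall deviation probability is $\leq 2^{-l(c\delta^2 - h(l))}$ as claimed.

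The main obstacle is bookkeeping the explicit constants so that they match the stated forms of $\varphi(\delta)$ and $h(l)$ exactly; conceptually no new idea is needed beyond the standard quantum typicality machinery, and indeed the statement is lifted directly from Ref.~\cite{bjelakovic08c}, whose proof I would follow verbatim.
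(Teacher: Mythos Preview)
Your proposal is correct and follows the standard route for frequency typical projections: diagonalize $\rho$, reduce to classical i.i.d.\ typicality, and invoke the method of types together with Sanov/Pinsker for the concentration bound. The paper itself does not prove this lemma at all; it merely states the properties and refers the reader to Ref.~\cite{bjelakovic08c} and Ref.~\cite{csiszar11}, exactly as you note in your final sentence, so there is nothing further to compare.
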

\end{subsection}

\subsection{\label{subsect:finite}Proof of the direct part in case of finite sets of states}
In this section we prove the optimal merging rate theorem using our one-shot result from Lemma \ref{lemma:expectation}. 
We first consider a finite set 
$\mathcal{X}:= \{\rho_{AB,i}\}_{i=1}^N \subset \st(\hr_{AB})$ with purifications $\psi_{ABE,1},...,\psi_{ABE,N} \in \hr_{ABE}$. For these states we
introduce some sort of ``typical reductions''. We define 
\begin{align*}
 \tilde{\psi}_{ABE,i,\delta}^l := \frac{1}{\sqrt{w_{i,\delta,l}}} \tilde{q}_{i,\delta}^l \psi_{ABE,i}^{\otimes l},
\end{align*}
where $w_{i,\delta,l} := \tr(\tilde{q}_{i,\delta}^l\psi_{ABE,i}^{\otimes l})$,
\begin{align}
 \tilde{\rho}_{B,i,\delta}^l := \tr_{\hr_{AE}^{\otimes l}}(\tilde{\psi}_{ABE,i,\delta}^l),\; \text{and} \hspace{0.3cm}
 \tilde{\rho}_{E,i,\delta}^l := \tr_{\hr_{AB}^{\otimes l}}(\tilde{\psi}_{ABE,i,\delta}^l). \nonumber
\end{align}
for all $i \in \{1,...,N\}, l\in \nn$ and $\delta \in
(0,\frac12)$. Here $\tilde{q}_{i,\delta}^l$ is given by the typical
projectors $q_{A,i}$, $q_{B,i}$ and $q_{E,i}$ of the corresponding marginals of $\psi_{ABE,i}$ 
\begin{align*}
 \tilde{q}_i := q_{A,i} \otimes q_{B,i} \otimes q_{E,i}  
\end{align*}
(here and in the following, the indices $\delta, l, i$ are sometimes omitted for the sake of brevity).
The following lemma provides some bounds needed later
\begin{lemma}\label{error_type}
 With the definitions given above, we have
\begin{enumerate}
 \item $w_{i,\delta,l}\geq 1 - 4\cdot 2^{-l(c\delta^2 - h(l))}$
 \item $\|\tilde{\rho}_{B,i,\delta}^{l}\|_2 \leq w_{i,\delta,l}^{-1} 2^{-\frac{l}{2}(S(\rho_{B,i}) - 
	3 \varphi(\delta)-h(l))}$
 \item $\rank(\tilde{\rho}_{E,i,\delta}^{l}) \leq 2^{l(S(\rho_{AB,i})+ \varphi(\delta))}$ 
\end{enumerate}
for all $i \in \{1,...,N\}$, $\delta \in (0,\frac{1}{2})$ and $l \in \nn$.
\end{lemma}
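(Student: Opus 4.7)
The three parts address complementary aspects of the vector $\ket{\tilde\psi_{ABE,i,\delta}^l} = w_{i,\delta,l}^{-1/2}\,\tilde q_i\,\ket{\psi_{ABE,i}^{\otimes l}}$ and each follows from a different facet of Lemma~\ref{types}. For (1), my plan is to exploit the telescoping identity for commuting projectors,
\begin{align*}
I - q_{A,i}\otimes q_{B,i}\otimes q_{E,i} = (I-q_{A,i})\otimes I_{BE} + q_{A,i}\otimes(I-q_{B,i})\otimes I_E + q_{A,i}\otimes q_{B,i}\otimes(I-q_{E,i}),
\end{align*}
then replace the leading factors $q_{A,i}$ and $q_{B,i}$ on the right-hand side by identities (which only weakens the operator inequality), and take the expectation against $\psi_{ABE,i}^{\otimes l}$. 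Each of the three resulting tails equals $\tr((I-q_{X,i})\rho_{X,i}^{\otimes l})$ for $X\in\{A,B,E\}$, which item~(1) of Lemma~\ref{types} bounds by $2^{-l(c\delta^2-h(l))}$. The sum gives $3\cdot 2^{-l(c\delta^2-h(l))}$, comfortably below the stated $4\cdot 2^{-l(c\delta^2-h(l))}$.

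Claim (2) is the main technical step; I intend to control the Hilbert--Schmidt norm by combining an operator inequality (giving a bound on $\|\cdot\|_\infty$) with a rank bound. The first ingredient is the identity
\begin{align*}
\tilde\rho_{B,i,\delta}^l = w_{i,\delta,l}^{-1}\, q_{B,i}\,\tr_{\hr_{AE}^{\otimes l}}\!\bigl((q_{A,i}\otimes q_{E,i})\ket{\psi_{ABE,i}^{\otimes l}}\bra{\psi_{ABE,i}^{\otimes l}}(q_{A,i}\otimes q_{E,i})\bigr)\,q_{B,i},
\end{align*}
which uses that $q_{B,i}$ commutes with the partial trace over $AE$. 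The crucial step is the inequality
\begin{align*}
\tr_{\hr_{AE}^{\otimes l}}\!\bigl((q_{A,i}\otimes q_{E,i})\ket{\psi_{ABE,i}^{\otimes l}}\bra{\psi_{ABE,i}^{\otimes l}}(q_{A,i}\otimes q_{E,i})\bigr) \leq \rho_{B,i}^{\otimes l},
\end{align*}
which I would prove by evaluating the partial trace in a product basis adapted to $q_{A,i}$ and $q_{E,i}$: the left-hand side is then a partial sum of the positive $B$-operators $\bra{a,e}\psi_{ABE,i}^{\otimes l}\ket{a,e}$ whose full sum equals $\rho_{B,i}^{\otimes l}$. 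Combining with item~(2) of Lemma~\ref{types} applied to $\rho_{B,i}$ yields
\begin{align*}
\tilde\rho_{B,i,\delta}^l \leq w_{i,\delta,l}^{-1}\, q_{B,i}\,\rho_{B,i}^{\otimes l}\, q_{B,i} \leq w_{i,\delta,l}^{-1}\,2^{-l(S(\rho_{B,i})-\varphi(\delta))}\, q_{B,i},
\end{align*}
so $\|\tilde\rho_{B,i,\delta}^l\|_\infty \leq w_{i,\delta,l}^{-1}\,2^{-l(S(\rho_{B,i})-\varphi(\delta))}$. Since the support of $\tilde\rho_{B,i,\delta}^l$ lies in the range of $q_{B,i}$, whose rank is at most $2^{l(S(\rho_{B,i})+\varphi(\delta))}$ by item~(3) of Lemma~\ref{types}, applying the elementary estimate $\|\rho\|_2\leq\|\rho\|_\infty\sqrt{\rank(\rho)}$ combines the two exponents into $-\tfrac{l}{2}(S(\rho_{B,i})-3\varphi(\delta))$, which is at least as strong as the stated bound (the extra $h(l)$ slack being harmless).

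For (3) the argument is essentially by inspection: $\ket{\tilde\psi_{ABE,i,\delta}^l}$ lies in the range of $q_{A,i}\otimes q_{B,i}\otimes q_{E,i}$, so the marginal $\tilde\rho_{E,i,\delta}^l$ is supported on the range of $q_{E,i}$, whose rank is at most $2^{l(S(\rho_{E,i})+\varphi(\delta))}$ by item~(3) of Lemma~\ref{types}. The identity $S(\rho_{E,i})=S(\rho_{AB,i})$ (since $\psi_{ABE,i}$ is a purification of $\rho_{AB,i}$) then converts this into the desired $2^{l(S(\rho_{AB,i})+\varphi(\delta))}$ bound. The only nontrivial input is the partial-trace inequality appearing in the proof of (2); the rest is a direct application of the typicality estimates from Lemma~\ref{types}.
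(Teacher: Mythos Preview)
Your proposal is correct and follows essentially the same route as the paper. For (1) the paper uses a slightly different algebraic decomposition of $\eins_{ABE}-\tilde q_i$ that ends up with the constant $4$, while your telescoping identity gives the tighter constant $3$; for (2) the paper bounds $\|\cdot\|_2^2$ directly via $\tr\bigl((q_{B,i}\rho_{B,i}^{\otimes l}q_{B,i})^2\bigr)$ rather than passing through $\|\cdot\|_\infty$ and the rank, but both arguments rest on the same partial-trace inequality $\tr_{\hr_{AE}^{\otimes l}}\bigl((q_{A,i}\otimes q_{E,i})\psi_{ABE,i}^{\otimes l}\bigr)\leq \rho_{B,i}^{\otimes l}$ and the same typicality inputs from Lemma~\ref{types}; (3) is identical.
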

Note, that the functions $\varphi$ and $h$ in Lemma \ref{types} depend on the dimensions of the individual Hilbert space, 
however the above lemma clearly holds if we take the functions $\varphi$ and $h$ in Lemma \ref{types} with 
$d=\dim(\hr_{ABE})$.
\begin{proof}
\emph{1.)} 
Some simple algebra shows that 
\begin{align*}
 \tilde{q}		&= \eins_{ABE} - q_A \otimes q_B^\perp \otimes \eins_E - q_A^\perp \otimes \eins_B \otimes q_E \\
			&\hphantom{\mathrel{=}}- \eins_A \otimes q_B \otimes q_E^\perp - q_A^\perp \otimes q_B^\perp 
			   \otimes q_E^\perp\\
			&\geq \eins_{ABE} - q_A^\perp \otimes \eins_{BE} - \eins_A \otimes  q_B^\perp \otimes \eins_E \\
			&\hphantom{\mathrel{\geq}}- 2(\eins_{AB} \otimes q_E^\perp) \nonumber
\end{align*}
holds. Therefore
\begin{align}
 w_{i,\delta,l} &= \tr(\tilde{q}_i\psi_{ABE,i}^{\otimes l}) \\
		&\geq 1 - \tr(q_{A,i}^\perp\rho_{A,i}^{\otimes l}) - 
		  \tr(q_{B,i}^{\perp}\rho_{A,i}^{\otimes l}) - 2\tr(q_{E,i}^{\perp}\rho_{A,i}^{\otimes l})\\
		&\geq 1 - 4 \cdot 2^{-(c\delta^2-h(l))}.
\end{align}
\emph{2.)}
We first show, that
\begin{align}
\tr \left(\tr_{\hr_{AE}^{\otimes l}}(\tilde{q}^l\psi_{ABE}^{\otimes l}\tilde{q}^l)^2\right) \leq \tr\left((q_{B}^l \rho_{B}^{\otimes
l}q_{B}^l)^2\right)  \label{types_inequality_1}
\end{align}
holds. Note, that 
\begin{align}
\tr_{\hr_{AE}^{\otimes l}}\left((q_A^l \otimes \eins_{\hr_B} \otimes q_E^l)  \psi_{ABE}^{\otimes l}(q_A^l \otimes 
      \eins_{\hr_B} \otimes q_E^l)\right) =
\tr_{\hr_{AE}^{\otimes l}}\left((q_A^l \otimes \eins_{\hr_B} \otimes q_E^l)\psi_{ABE}^{\otimes l}\right). 
    \label{types_inequality_2}
\end{align}
Additionally, we have
$\tr_{\hr_{AE}^{\otimes l}}\left((q_A^l \otimes \eins_{\hr_B} \otimes q_E^l)\psi_{ABE}^{\otimes l}\right) 
    \leq \rho_B^{\otimes l}$, because 
\begin{align*}
 \rho_{B}^{\otimes l} - \tr_{\hr_{AE}^{\otimes l}}\left((q_A^l \otimes \eins_{\hr_B} \otimes q_E^l)  
    \psi_{ABE}^{\otimes l} \right) 
   &=\tr_{\hr_{AE}^{\otimes l}}\left(q_A^{l\perp} \otimes \eins_{\hr_B} \otimes q_E^l  \psi_{ABE}^{\otimes l}\right) \\
   &+\tr_{\hr_{AE}^{\otimes l}}\left(q_A^l \otimes \eins_{\hr_B} \otimes q_E^{l\perp}  \psi_{ABE}^{\otimes l}\right) \\
   &+\tr_{\hr_{AE}^{\otimes l}}\left(q_A^{l\perp} \otimes \eins_{\hr_B} \otimes q_E^{l\perp}\psi_{ABE}^{\otimes l}\right),
\end{align*}
where all of the summands on the r.h.s. are nonnegative operators. Therefore
\begin{align*}
 \tr \left(\tr_{\hr_{AE}^{\otimes l}}(\tilde{q}_{i}^l\psi_{ABE,i}^{\otimes l}\tilde{q}_{i}^l)^2\right) 
 &= \tr \left(\left(q_B^l\tr_{\hr_{AE}^{\otimes l}}(q_A^l \otimes \eins_{\hr_B} \otimes q_E^l \psi_{ABE,i}^{\otimes l})q_B
      \right)^2\right)\\
 & \leq \tr \left((q_B^l \rho_B^{\otimes l} q_B^l) 
    (q_B^l\tr_{\hr_{AE}^{\otimes l}}(q_A^l \otimes \eins_{\hr_B} \otimes q_E^l \psi_{ABE,i}^{\otimes l})q_B^l)\right)\\
& \leq \tr \left((q_B^l \rho_B^{\otimes l} q_B^l )^2\right),
\end{align*}
which proves eq. (\ref{types_inequality_1}). The above inequalities rely on the fact, that $\tr(A(\cdot))$ and $q_B^l(\cdot)q_B^l$
are positive maps, if $A$ is a nonnegative operator.
Finally we arrive at
\begin{align}
 \|\tilde{\rho}_{B,i,\delta}^{l}\|_2^2 
  &= w_{i,\delta,l}^{-2} \tr \left(\tr_{\hr_{AE}^{\otimes l}}(\tilde{q}^l_{i,\delta}\psi_{ABE,i,\delta}^{\otimes l}
      \tilde{q}_{i,\delta}^l)^2\right) \nonumber \\
  &\leq w_{i,\delta,l}^{-2} \tr\left((q_{B,i,\delta}^l 
      \rho_{B,i,\delta}^{\otimes l}q_{B,i,\delta}^l)^2\right) \label{types_inequality_3}\\
  &\leq w_{i,\delta,l}^{-2}\tr(q_{B,i,\delta}^l)\cdot 2^{-2l(S(\rho_{B,i})-
    \varphi(\delta))}\label{types_inequality_4} \\
  &\leq w_{i,\delta,l}^{-2}2^{-l(S(\rho_{B,i})-3\varphi(\delta))} \nonumber
\end{align}
where the r.h.s. of eq. (\ref{types_inequality_3}) follows from (\ref{types_inequality_1}), and (\ref{types_inequality_4})  
results from Lemma \ref{types}.2 applied twice. The last of the above inequalities follows from 
Lemma \ref{types}.3 . \newline
\emph{3.)}
follows from the third claim in Lemma \ref{types} and the fact that $S(\rho_{AB,i}) = S(\rho_{E,i})$ holds.
\end{proof}

\begin{theorem}\label{theorem:finite_ach}
For a finite collection $\mathcal{X}:= \{\rho_{AB,i}\}_{i=1}^N$ of states on $\hr_{AB}$, it holds
\begin{align*}
C_m(\mathcal{X}) \leq \max_{1 \leq i \leq N} S(A|B;\rho_{AB,i}). 
\end{align*}
\end{theorem}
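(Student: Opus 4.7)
The plan is to combine the one-shot bound from Corollary \ref{corollary1} with the typical-subspace estimates of Lemma \ref{error_type}, applied to the $l$-fold tensor powers of the states in $\mathcal{X}$. Fix $\epsilon>0$ and write $R:=\max_{i\in[N]} S(A|B;\rho_{AB,i})$. The goal is to construct, for each sufficiently large $l$, an $L$-merging $\mathcal{M}_l$ whose worst-case merging fidelity on $\rho_{AB,1}^{\otimes l},\ldots,\rho_{AB,N}^{\otimes l}$ tends to $1$, while the entanglement rate $-\log L/l$ approaches $R$ from above.

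First I would replace each $\psi_{ABE,i}^{\otimes l}$ by its typical cutout $\tilde{\psi}_{ABE,i,\delta}^l$ for a small $\delta>0$ and apply Corollary \ref{corollary1} on $\hr_{AB}^{\otimes l}$ to the finite family $\{\tilde{\rho}_{AB,i,\delta}^l\}_{i=1}^N$. Plugging in the estimates of Lemma \ref{error_type}, the critical product inside the error bound satisfies
\[
 L\cdot \rank(\tilde{\rho}_{E,i,\delta}^l)\cdot \|\tilde{\rho}_{B,i,\delta}^l\|_2^2 \;\leq\; (1+o(1))\cdot 2^{\,\log L+l\,S(A|B;\rho_{AB,i})+l\,O(\varphi(\delta)+h(l))}.
\]
Provisionally assuming $R<0$ and setting $\log L\approx -l(R+\epsilon)$ with $\delta$ chosen so that $\varphi(\delta)\ll\epsilon$, this exponent becomes $-l(\epsilon-O(\varphi(\delta)+h(l)))$ and the corresponding summand in Corollary \ref{corollary1} vanishes as $l\to\infty$; the side term $L/d_A^l$ is harmless since $-(R+\epsilon)<\log d_A$ (because $S(A|B)\geq -\log d_A$ always). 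This yields an $L$-merging $\mathcal{M}_l$ with $\min_i F_m(\tilde{\rho}_{AB,i,\delta}^l,\mathcal{M}_l)\to 1$. To transfer the bound back to the actual product states I would use Lemma \ref{error_type}.1 together with a gentle-measurement argument to obtain $\|\rho_{AB,i}^{\otimes l}-\tilde{\rho}_{AB,i,\delta}^l\|_1\to 0$ exponentially, and then invoke (\ref{fvg_1}), (\ref{fvg_2}) along with trace-norm monotonicity under $\mathcal{M}_l$ to upgrade to $F_m(\rho_{AB,i}^{\otimes l},\mathcal{M}_l)\to 1$ uniformly in $i$.

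The step I expect to be the main obstacle is handling $R\geq 0$: $L$-mergings as defined only create, never consume, entanglement, so the construction above formally demands $L<1$ in this regime. I would circumvent this by pre-sharing between $A$ and $B$ a maximally entangled auxiliary state $\phi_K^{A'B'}$ of Schmidt rank $K=\lceil 2^{l(R+\epsilon/2)}\rceil$ and then running the same argument on the enlarged source $\rho_{AB,i}^{\otimes l}\otimes \phi_K^{A'B'}$. Its conditional entropy is shifted by $-\log K$, yielding a per-copy value of $-\epsilon/2<0$; the preceding analysis then produces an $L$-merging with $\log L\approx l\epsilon/4$, and the resulting net entanglement cost $(\log K-\log L)/l$ is at most $R+\epsilon$. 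Letting $\epsilon\downarrow 0$ finally gives $C_m(\mathcal{X})\leq R$.
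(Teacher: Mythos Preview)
Your proposal is correct and follows essentially the same route as the paper: apply Corollary~\ref{corollary1} to the typically reduced states $\tilde{\rho}_{AB,i,\delta}^l$, feed in the estimates of Lemma~\ref{error_type} with $L_l\approx 2^{-l(R+\epsilon)}$, and use a gentle-measurement argument to transfer the fidelity bound back from $\tilde{\psi}_{ABE,i,\delta}^l$ to $\psi_{ABE,i}^{\otimes l}$. For the case $R\geq 0$ the paper likewise pre-appends a maximally entangled state $\phi_K$ to push the conditional entropy below zero; the only cosmetic difference is that the paper fixes a per-copy $K=2^{\lceil R\rceil+1}$ and then takes tensor powers, whereas you choose a single block-dependent $K\approx 2^{l(R+\epsilon/2)}$, which is an equivalent bookkeeping choice.
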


\begin{proof}
The proof is similar to the corresponding one given in Ref. \cite{horodecki07b}, but uses the one-shot bound given in Lemma 
\ref{lemma:expectation}. We show, that the for every $\epsilon > 0$, the number $\max_{i \in [N]} 
S(A|B;\rho_{AB,i}) + \epsilon$ is an achievable rate for a merging of $\mathcal{X}$. First assume, that 
$\max_{i \in [N]} S(A|B;\rho_{AB,i}) < 0$. Let $\delta \in (0,\frac{1}{2})$ such that $\frac{\epsilon}{5}
< \varphi(\delta)$. It suffices to consider $\epsilon$ with $0 < \epsilon < |\max_{1<i\leq N} S(A|B,\rho_{AB,i})|$.
Define
\begin{align*}
L_l := \left\lfloor \exp\left(-l\left(\max_{i \in [N]}S(A|B;\rho_{AB,i}) + \epsilon\right)\right)\right\rfloor. 
\end{align*}
According to Lemma \ref{lemma:expectation} along with Corollary \ref{corollary1}, there is an $L_l$-merging 
$\mathcal{M}_l$ which fulfills 
\begin{align*}
\min_{i \in [N]}F(\mathcal{M}_l\otimes id_{\hr_E^{\otimes l}}(\tilde{\psi}^{l}_{ABE,i,\delta}), 
\phi_{L_l} \otimes \tilde{\psi}^{l}_{B'BE,i,\delta}) \geq 1 - NQ
\end{align*}
with 
\begin{align}
Q:= 2 \left(\frac{L_l}{\dim(\hr_A^{\otimes l})} + 2 \sum_{i=1}^N \sqrt{L_l\cdot \rank(\tilde{\rho}^{l}_{E,i,\delta})
    \|\tilde{\rho}_{B,i,\delta}^{l} \|_2^2} \right). \label{error_term}
\end{align}
With help of Lemma $\ref{error_type}$ it is easy to bound the summands on the r.h.s. of eq. (\ref{error_term}). Explicitly 
it holds
\begin{align*}
\frac{L_l}{\dim(\hr_{A}^{\otimes l})}   & \leq \frac{L_l}{\tr(q_{A,i})}\leq 2^{-6l\varphi(\delta)},  \\
\sqrt{L_l\cdot \rank(\tilde{\rho}^{l}_{E,i,\delta})
    \|\tilde{\rho}_{B,i,\delta}^{l} \|_2^2} &\leq \frac{2^{-\frac{l}{2}\varphi(\delta)}}
      {1-4\cdot2^{-l(c\delta^2 - h(l))}}.
\end{align*}
Therefore 
\begin{align*}
 \min_{i \in [N]}F\left(\mathcal{M}_l\otimes id_{\hr_E^{\otimes l}}(\tilde{\psi}^{l}_{ABE,i,\delta}), 
\phi_{L_l} \otimes \tilde{\psi}^{l}_{B'BE,i,\delta}\right) \geq 1 - \tilde{f}(l, N, \delta)
\end{align*}
holds, where
\begin{align}
 \tilde{f}(l, N, \delta) :=  2N \left(2^{-6l\varphi(\delta)} + 2N\frac{2^{-\frac{l}{2}\varphi(\delta)}}
  {1-4\cdot2^{-l(c\delta^2 - h(l))}}\right) 
\end{align}
for $l, N \in \nn$ and $\delta \in (0, \frac{1}{2})$. 
The desired bound for the merging fidelity of the original set $\mathcal{X}$ of states follows from Winter's 
gentle measurement Lemma (cf. Ref. \cite{winter99a}, Lemma 9). Explicitly, it holds
\begin{align}
  \min_{i \in [N]}F(\rho_{AB,i}^{\otimes l},\mathcal{M}_l) \geq 1 - f(l, N, \delta). 
  \label{error_end}
\end{align}
where $f(l, N, \delta) := 2\sqrt{\tilde{f}(l,N,\delta)} - 2\sqrt{32\cdot 2^{-l(c \delta^2 -h(l))}}$.
It remains to consider the case $\max_{i \in \{1,...,N\}} S(A|B;\rho_{AB,i}) \geq 0$. The above argument can be 
used with additional assistance of a sufficient amount of entanglement shared by the merging partners. Let $\phi_K$ be 
a maximally entangled state shared by $A$ and $B$ of Schmidt rank 
$K := 2^{\lceil \max_{i \in [N]} S(A|B,\rho_{AB,i})\rceil + 1}$ then for every $i$ the state 
\begin{align}
 \phi_K \otimes \rho_{AB,i}  \nonumber 
\end{align}
has negative conditional von Neumann entropy. Therefore the above argument holds for these states giving an 
$L_l$-merging $\widetilde{\mathcal{M}}_l$ with  
\begin{align}
 L_l = \exp\left(-l\left(\max_{1 \leq i \leq N} S(A|B,\rho_{AB,i}) - \left\lceil \max_{1 \leq i \leq N}S(A|B,\rho_{AB,i})\right\rceil - 1 + \epsilon\right)\right)
\end{align}
and $\min_{i \in [N]} F_m((\phi_K \otimes \rho_{AB,i})^{\otimes l}, \mathcal{M}_l)$ is lower bounded by a function
as on the r.h.s. of eq. (\ref{error_end}). Some unitaries which rearrange the tensor factors do the rest. Because 
\begin{align}
 \frac{1}{l}\log\left(\frac{K^l}{L_l}\right) = \underset{i \in [N]}{\max} S(A|B,\rho_{AB,i}) + \epsilon + o(l^0)
\end{align}
we are done
\end{proof}

\subsection{\label{subsect:general}Proof of the direct part for arbitrary sets of states}
In this section we aim to show that the achievability part of Theorem \ref{merging_theorem} does hold for any 
arbitrary set $\mathcal{X}$ of states as well. This can be achieved by approximating $\mathcal{X}$ by a sequence of 
(finite) nets and using the result obtained in the previous sections. The argument parallels the one given in case of compound quantum channels in Ref. \cite{bjelakovic09d}. \newline
A $\tau$-net in $\st(\hr)$ is a finite set $\{\rho_i\}_{i=1}^N$ such that for each state $\rho$ on $\hr$ there is at 
least one $i \in \{1,...,N\}$ with $\|\rho - \rho_i\|_1 < \tau$. We find such a finite set for every $\tau>0$ due to 
compactness of $\st(\hr)$. For our proof we have to ensure, that we find $\tau$-nets with cardinality upper bounded 
in an appropriate sense. This is the claim of the next lemma, which is a special case of Lemma 2.6 in Ref.  \cite{mil80}.
\begin{lemma}\label{card_bound}
 For any $\tau \in (0,1]$ there is a $\tau$-net $\{\rho_i\}_{i=1}^N$ in $\st(\hr)$ with cardinality 
 \begin{align*}
  N \leq  \left(\frac{3}{\tau}\right)^{2d^2} 
 \end{align*}
\end{lemma}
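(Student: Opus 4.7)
The plan is to use a standard packing/volume comparison argument. The ambient space here is $\bo(\hr)$, viewed as a real vector space equipped with the trace norm $\|\cdot\|_1$. Complex $d \times d$ matrices form a real vector space of dimension $2d^2$, and $\st(\hr)$ is contained in the closed unit ball $B := \{X \in \bo(\hr) : \|X\|_1 \leq 1\}$ since every quantum state has trace one and is positive.

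First I would pick a maximal $\tau$-separated subset $\{\rho_1,\ldots,\rho_N\} \subset \st(\hr)$, i.e.\ a maximal collection of states with $\|\rho_i-\rho_j\|_1 \geq \tau$ for $i \neq j$. Such a maximal set exists by compactness of $\st(\hr)$, and by maximality it is automatically a $\tau$-net: if some state $\rho$ were at distance $\geq \tau$ from every $\rho_i$, then $\{\rho_1,\ldots,\rho_N,\rho\}$ would still be $\tau$-separated, contradicting maximality. This neatly sidesteps the issue of producing a net whose elements are actually states (as opposed to arbitrary operators in some covering of $B$).

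Next I would carry out the volume comparison. The open balls $\rho_i + \tfrac{\tau}{2}\,\mathrm{int}(B)$ are pairwise disjoint by $\tau$-separation (triangle inequality), and each is contained in $(1+\tfrac{\tau}{2})B$ because $\rho_i \in B$. Let $\mathrm{vol}$ denote any translation-invariant Lebesgue measure on $\bo(\hr)$ induced by identifying it with $\rr^{2d^2}$. Then disjointness gives
\begin{align*}
 N \cdot \mathrm{vol}\left(\tfrac{\tau}{2} B\right) \;\leq\; \mathrm{vol}\left((1+\tfrac{\tau}{2})B\right),
\end{align*}
and homogeneity of $\mathrm{vol}$ in real dimension $2d^2$ yields
\begin{align*}
 N \;\leq\; \left(\frac{1+\tau/2}{\tau/2}\right)^{2d^2} \;=\; \left(1+\frac{2}{\tau}\right)^{2d^2} \;\leq\; \left(\frac{3}{\tau}\right)^{2d^2},
\end{align*}
where the last inequality uses $\tau \leq 1$.

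There is no real obstacle in this argument; the only point to watch is the choice of the ambient dimension. One could alternatively embed $\st(\hr)$ into the real subspace of Hermitian operators (of real dimension $d^2$) and obtain a sharper exponent $d^2$, but the weaker bound with exponent $2d^2$ suffices for the later approximation arguments and matches the cited form from \cite{mil80}.
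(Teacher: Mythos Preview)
Your proposal is correct and follows exactly the standard packing/volume comparison argument from \cite{mil80} that the paper invokes (via the identical proof in \cite{bjelakovic08c}); the paper does not spell out any details beyond that reference, so your write-up is in fact more explicit than the paper's own proof. Your remark that embedding into the Hermitian operators would sharpen the exponent to $d^2$ is also correct, but, as you note, the stated bound with exponent $2d^2$ is all that is needed downstream.
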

\begin{proof}
 The proof is exactly the same as the one given in Ref. \cite{bjelakovic08c} with the sets and norms replaced by the ones which are treated here.
\end{proof}
Let  $\mathcal{X} \subseteq \st(\hr_{AB})$ be an arbitrary set of states on $\hr_{AB}$. For a 
$\frac{\tau}{2}$-net $\tilde{\mathcal{X}}_{\tau}$, which fulfills the bound given in Lemma \ref{card_bound}, i.e.
\begin{align}
 |\tilde{\mathcal{X}}_\tau| \leq \left(\frac{6}{\tau}\right)^{2d_{AB}^2} \nonumber 
\end{align}
where $d_{AB} := \dim(\hr_{AB})$, we define the set
\begin{align}
 \mathcal{X}_{\tau} := \{\rho_i \in \tilde{\mathcal{X}}_\tau: \exists \rho \in \mathcal{X} \;\text{with} \; 
			 \|\rho_i - \rho\|_1 < \frac{\tau}{2} \}. \label{approx_net}
\end{align}
The following lemma provides some statements concerning $\tau$-nets needed later.
\begin{lemma}\label{net_lemma}
 Let $\mathcal{X} \subseteq \st(\hr_{AB})$ be a set of bipartite states on $\hr_{AB}$ and $\mathcal{X}_\tau$, for 
$\tau \in (0, \frac{1}{e}]$, the set
defined in (\ref{approx_net}). It holds
\begin{enumerate}
 \item $|\mathcal{X}_\tau| \leq \left(\frac{6}{\tau} \right)^{2d^2_{AB}}$,
 \item For every $\rho \in \st(\hr_{AB})$ there is a state $\rho_{i}$ in $\mathcal{X}_\tau$ satisfying
      \begin{align*}
       \|\rho^{\otimes l} - \rho_{i}^{\otimes l} \|_1 < l \cdot \tau,
      \end{align*}
 \item $|\underset{\rho \in \mathcal{X}}{\sup} S(A|B,\rho) - \underset{\rho_i \in \mathcal{X_\tau}}{\max} S(A|B,\rho_i)| 
	\leq \tau + 2\cdot \tau \log\left(\frac{d_{AB}}{\tau}\right)$, and
 \item Let $\mathcal{M}$ be any merging operation for states on $\hr_{AB}$. Then
 \begin{align}
  \underset{\rho_{i} \in \mathcal{X}_\tau}{\min} F_m(\rho_{i}^{\otimes l}, \mathcal{M}) 
  \geq 1 - \epsilon \Rightarrow \underset{\rho \in \mathcal{X}}{\inf} F_m(\rho^{\otimes l}, \mathcal{M}) \geq
   1 - 2 
  \sqrt{\epsilon} - 4 \sqrt{l\cdot \tau}
 \end{align}
\end{enumerate}
\end{lemma}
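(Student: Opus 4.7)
\emph{Parts 1 and 2.} The cardinality bound is immediate from the definition $\mathcal{X}_\tau \subseteq \tilde{\mathcal{X}}_\tau$ together with Lemma \ref{card_bound} applied to the $\tfrac{\tau}{2}$-net. For part 2, fix $\rho \in \mathcal{X}$; by the net property of $\tilde{\mathcal{X}}_\tau$ there is some $\rho_i \in \tilde{\mathcal{X}}_\tau$ with $\|\rho - \rho_i\|_1 < \tau/2$, and such $\rho_i$ lies in $\mathcal{X}_\tau$ by construction. The tensor power bound then follows from the standard telescoping identity
\begin{align*}
\rho^{\otimes l} - \rho_i^{\otimes l} = \sum_{k=1}^{l} \rho^{\otimes(k-1)} \otimes (\rho - \rho_i) \otimes \rho_i^{\otimes(l-k)},
\end{align*}
the triangle inequality, and multiplicativity of the trace norm under tensor products, yielding $\|\rho^{\otimes l} - \rho_i^{\otimes l}\|_1 \leq l\|\rho - \rho_i\|_1 < l\tau/2 < l\tau$.

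\emph{Part 3.} Write $S(A|B,\sigma) = S(\sigma_{AB}) - S(\sigma_B)$ and apply a Fannes-type continuity estimate to each term. For any $\rho \in \mathcal{X}$, the associated $\rho_i \in \mathcal{X}_\tau$ from part 2 satisfies $\|\rho_{AB} - \rho_{i,AB}\|_1 < \tau/2$ and, by monotonicity of the trace norm under the partial trace, also $\|\rho_B - \rho_{i,B}\|_1 < \tau/2$. Applying the Fannes inequality (in the convenient form $|S(\sigma_1) - S(\sigma_2)| \leq \|\sigma_1 - \sigma_2\|_1 \log d + \eta(\|\sigma_1 - \sigma_2\|_1)$ with $\eta(x) = -x\log x$) to both entropies and combining via the triangle inequality gives $|S(A|B,\rho) - S(A|B,\rho_i)| \leq \tau + 2\tau\log(d_{AB}/\tau)$ under the assumption $\tau \leq 1/e$. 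Taking the supremum over $\rho \in \mathcal{X}$ on one side and the maximum over $\rho_i \in \mathcal{X}_\tau$ on the other (and using that each $\rho_i \in \mathcal{X}_\tau$ is also close to some $\rho \in \mathcal{X}$) yields the desired two-sided bound between $\sup_{\mathcal{X}} S(A|B,\cdot)$ and $\max_{\mathcal{X}_\tau} S(A|B,\cdot)$.

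\emph{Part 4.} This is the main step. Let $\rho \in \mathcal{X}$ and pick $\rho_i \in \mathcal{X}_\tau$ close to it. From part 2 and the Fuchs--van de Graaf bound $\sqrt{F(\sigma,\sigma')} \geq 1 - \tfrac{1}{2}\|\sigma - \sigma'\|_1$ we obtain $F(\rho^{\otimes l}, \rho_i^{\otimes l}) \geq 1 - l\tau$. Since the merging fidelity is independent of the chosen purification (Lemma \ref{merg_fid_rep}), we may invoke Uhlmann's theorem to select purifications $\psi^{\otimes l}$ and $\psi_i^{\otimes l}$ saturating the fidelity, so that $\|\psi^{\otimes l} - \psi_i^{\otimes l}\|_1 \leq 2\sqrt{1 - F(\rho^{\otimes l},\rho_i^{\otimes l})} \leq 2\sqrt{l\tau}$ by (\ref{fvg_2}).

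Now insert the intermediates $\mathcal{M}\otimes id(\phi_0 \otimes \psi_i^{\otimes l})$ and $\phi_1 \otimes \psi_{i,B'BE}^{\otimes l}$ and apply the triangle inequality to the trace norm
\begin{align*}
\bigl\|\mathcal{M}\otimes id(\phi_0 \otimes \psi^{\otimes l}) - \phi_1 \otimes \psi_{B'BE}^{\otimes l}\bigr\|_1.
\end{align*}
The first summand is controlled by monotonicity of the trace norm under $\mathcal{M} \otimes id_{\hr_E^{\otimes l}}$, giving $\leq \|\psi^{\otimes l} - \psi_i^{\otimes l}\|_1 \leq 2\sqrt{l\tau}$; the middle summand is bounded by $2\sqrt{\epsilon}$ using (\ref{fvg_2}) and the hypothesis $F_m(\rho_i^{\otimes l},\mathcal{M}) \geq 1 - \epsilon$; the third summand equals $\|\psi^{\otimes l} - \psi_i^{\otimes l}\|_1 \leq 2\sqrt{l\tau}$ since the relabeling $\hr_A \to \hr_{B'}$ is isometric. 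Summing and applying (\ref{fvg_1}) yields $F_m(\rho^{\otimes l},\mathcal{M}) \geq 1 - 2\sqrt{\epsilon} - 4\sqrt{l\tau}$, uniformly in $\rho \in \mathcal{X}$, which is the claim.

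The only non-routine step is part 4: the essential trick is to do the comparison at the level of purifications via Uhlmann and then transport trace-norm closeness through the channel $\mathcal{M}\otimes id_{\hr_E^{\otimes l}}$ by monotonicity; parts 1--3 are standard net and Fannes-type calculations.
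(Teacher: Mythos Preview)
Your proposal is correct and follows essentially the same route as the paper: parts 1--2 via the net definition and the telescoping tensor-power estimate, part 3 via Fannes' inequality applied to $S(\rho_{AB})$ and $S(\rho_B)$, and part 4 via Uhlmann's theorem to pass to close purifications, followed by a three-term triangle inequality on the trace norm combined with monotonicity under $\mathcal{M}\otimes id_{\hr_E^{\otimes l}}$ and the Fuchs--van de Graaf relations (\ref{fvg_1}), (\ref{fvg_2}). The only cosmetic difference is that the paper picks the Uhlmann purification at the single-copy level and then tensors (so that the resulting purification of $\rho^{\otimes l}$ is genuinely of product form $\psi^{\otimes l}$); your phrasing ``select purifications $\psi^{\otimes l}$ and $\psi_i^{\otimes l}$ saturating the fidelity'' tacitly uses this together with multiplicativity of the fidelity, which you might make explicit.
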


\begin{proof}
The first assertion is obvious from the definition of $\mathcal{X}_\tau$ together with Lemma \ref{card_bound}. The argument which proves the second one 
is exactly the same as done in Ref. \cite{bjelakovic08c} for channels. The third claim is a consequence of Fannes' inequality. 
Namely, to every positive real number $\tau$ we find states $\rho'$ in $\mathcal{X}$ and $\rho_i$ in $\mathcal{X}_\tau$
such that
\begin{align}
\|\rho' - \rho_i\|_1 < \tau \label{netzabstand}
\end{align}
and 
\begin{align}
 S(A|B,\rho') \geq \sup_{\rho \in \mathcal{X}} S(A|B,\rho) - \tau. \label{entropieabstand}
\end{align}
Eq. (\ref{netzabstand}) implies 
\begin{align}
 S(A|B,\rho') - S(A|B,\rho_i) \leq 2 \tau \log\left(\frac{d_{AB}}{\tau}\right) \nonumber
\end{align}
via twofold application of Fannes inequality \cite{fannes73}. Therefore
\begin{align}
 \sup_{\rho \in \mathcal{X}} S(A|B,\rho) - \tau &\leq S(A|B, \rho') \\
						&\leq S(A|B,\rho_i) + 2 \tau \log(\frac{d_{AB}}{\tau}). 
\end{align}
which proves the assertion. To verify the last claim of the lemma we first fix a purification corresponding to every 
member of 
$\mathcal{X}_\tau$ (remember that we are free in our choice of the purifications). Let $\psi_{ABE,i}$ be a purification 
of $\rho_{AB,i}$ on $\hr_{ABE}$ for $1 \leq i \leq N$. Let $\rho_{AB}$ an arbitrary element of $\mathcal{X}$, then we 
find at least one element of $\mathcal{X}_\tau$ satisfying
\begin{align}
 \|\rho_{AB,i}-\rho_{AB}\|_1 < \tau. \label{spurnetzbedingung}
\end{align}
As a consequence of Uhlmann's theorem, there exists a purification $\psi_{ABE}$ of $\rho_{AB}$ on $\hr_{ABE}$ such that
\begin{align}
 F(\rho_{AB}^{\otimes l},\rho_{AB,i}^{\otimes l}) = F(\psi_{ABE}^{\otimes l}, \psi_{ABE,i}^{\otimes l}). 
\label{uhlmann_net}
\end{align}
Now let $\phi_0$ and $\phi_1$ the maximally entangled input and output states associated with $\mathcal{M}$, then
\begin{align}
 &F_m(\rho_{AB}^{\otimes l},\mathcal{M}) \\
 =\;&F(\mathcal{M} \otimes id_{\hr_E^{\otimes l}}(\phi_0 \otimes \psi_{ABE}^{\otimes l}), \phi_1 \otimes \psi_{B'BE}^{\otimes l})\\
 \geq\;&1 - \|\mathcal{M} \otimes id_{\hr_E^{\otimes l}}(\phi_0 \otimes \psi_{ABE}^{\otimes l})-\phi_1 \otimes \psi_{B'BE}^{\otimes l}
  \|_1 \label{tdt}
\end{align}
where the last inequality follows from the bound given in eq. (\ref{fvg_1}). By an application of the triangle 
inequality, the trace distance on the r.h.s. of eq. (\ref{tdt}) is upper bounded by 
\begin{align}
  \|\mathcal{M} \otimes id_{\hr_E^{\otimes l}}(\phi_0 \otimes \psi_{ABE}^{\otimes l})-\phi_1 \otimes \psi_{B'BE}^{\otimes l}\|_1 
 \leq  & \|\mathcal{M} \otimes id_{\hr_E^{\otimes l}}(\phi_0 \otimes \psi_{ABE,i}^{\otimes l})-
      \phi_1 \otimes \psi_{B'BE,i}^{\otimes l}\|_1 \nonumber \\
 +\;& \|\mathcal{M} \otimes id_{\hr_E^{\otimes l}}(\phi_0 \otimes (\psi_{ABE}^{\otimes l} - 
      \psi_{ABE,i}^{\otimes l}))\|_1 \nonumber \\
 +\;& \|\phi_1 \otimes (\psi_{B'BE}^{\otimes l}-\psi_{B'BE,i}^{\otimes l})\|_1. \label{fidelity_net_1}
\end{align}
By monotonicity of the trace distance under the use of channels and eq. (\ref{spurnetzbedingung}), each of the two last 
summands can be upper bounded by $\|\psi_{ABE,i}^{\otimes l} - \psi_{ABE}^{\otimes l}\|_1$, and
\begin{align}
\|\psi_{ABE,i}^{\otimes l} - \psi_{ABE}^{\otimes l}\|_1 
	    &\leq 2 \sqrt{1 - F(\rho_{AB,i}^{\otimes l}, \rho_{AB}^{\otimes l})} \label{fidelity_net_2_1} \\
	    &\leq 2 \sqrt{\|\rho_{AB,i}^{\otimes l} - \rho_{AB}^{\otimes l}\|_1} \nonumber \\
	    &\leq 2 \sqrt{l\tau} \label{fidelity_net_2_2}
\end{align}
holds. Eq. (\ref{fidelity_net_2_1}) is justified by (\ref{uhlmann_net}) along with the relation given in eq. 
(\ref{fvg_2}), and  (\ref{fidelity_net_2_2}) is by the second claim of the present lemma.
The first summand is upper bounded by 
\begin{align}
 \|\mathcal{M} \otimes id_{\hr_E^{\otimes l}}(\phi_0 \otimes \psi_{ABE,i}^{\otimes l})-
  \phi_1 \otimes \psi_{B'BE,i}^{\otimes l}\|_1 \leq 2 \sqrt{\epsilon} \label{fidelity_net_3}
\end{align}
again with eq. (\ref{fvg_2}) and the assumptions. Eqns. (\ref{fidelity_net_1}), (\ref{fidelity_net_2_2}) and 
(\ref{fidelity_net_3}) justify
\begin{align}
 \inf_{\rho_{AB} \in \mathcal{X}} F_m(\rho_{AB}^{\otimes l}, \mathcal{M}) \geq 1 - 2 \sqrt{\epsilon} - 4 \sqrt{l \tau}
\end{align}
\end{proof}

\begin{theorem}\label{theorem:general_ach}
Let $\mathcal{X} \subset \st(\hr_{AB})$ be a set of states on $\hr_{AB}$. For the merging cost of $\mathcal{X}$ it holds
\begin{align}
 C_m(\mathcal{X}) \leq \sup_{\rho \in \mathcal{X}} S(A|B,\rho).
\end{align}
\end{theorem}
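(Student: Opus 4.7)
The plan is to prove achievability for arbitrary $\mathcal{X}$ by approximating with finite nets and invoking Theorem~\ref{theorem:finite_ach} on each net, then transferring the fidelity guarantee from the net to $\mathcal{X}$ via Lemma~\ref{net_lemma}.4. Concretely, fix $\epsilon>0$; I would choose a sequence $\tau_l\searrow 0$ (for instance $\tau_l:=1/l^2$) and consider the finite sets $\mathcal{X}_{\tau_l}$ from~(\ref{approx_net}). Two things need to be simultaneously true: the entropy $\max_{\rho_i\in\mathcal{X}_{\tau_l}}S(A|B,\rho_i)$ must be within $\epsilon/2$ of $\sup_{\rho\in\mathcal{X}}S(A|B,\rho)$, and the fidelity guaranteed for the net must survive the transfer to~$\mathcal{X}$.

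The first point is handled directly by Lemma~\ref{net_lemma}.3: for $\tau_l$ small the quantity $\tau_l+2\tau_l\log(d_{AB}/\tau_l)$ is below $\epsilon/2$. Hence applying Theorem~\ref{theorem:finite_ach} to $\mathcal{X}_{\tau_l}$ produces, for each large enough $l$, an $(l,k_l)$-merging $\mathcal{M}_l$ whose rate $\tfrac{1}{l}\log k_l$ lies below $\sup_{\rho\in\mathcal{X}}S(A|B,\rho)+\epsilon$ and such that
\begin{align*}
  \min_{\rho_i\in\mathcal{X}_{\tau_l}}F_m(\rho_i^{\otimes l},\mathcal{M}_l)\;\geq\;1-f(l,N_l,\delta),
\end{align*}
where $N_l:=|\mathcal{X}_{\tau_l}|$ and $f$ is the error function appearing in the proof of Theorem~\ref{theorem:finite_ach}. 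Here is where one must check that the polynomial growth of $N_l$ (bounded by $(6/\tau_l)^{2d_{AB}^2}$ via Lemma~\ref{net_lemma}.1, hence polynomial in $l$ for $\tau_l=1/l^2$) does not overwhelm the exponentially small terms of the form $2^{-cl\varphi(\delta)}$ present in $f$; this is the only genuinely delicate balancing step, but with $\delta$ held fixed and $N_l$ polynomial, the product $N_l^{2}\,2^{-cl\varphi(\delta)}\to 0$ is immediate.

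Finally, Lemma~\ref{net_lemma}.4 promotes this guarantee from the net to the whole set: since $l\tau_l=1/l\to 0$, the worst-case fidelity
\begin{align*}
  \inf_{\rho\in\mathcal{X}}F_m(\rho^{\otimes l},\mathcal{M}_l)\;\geq\;1-2\sqrt{f(l,N_l,\delta)}-4\sqrt{l\tau_l}
\end{align*}
tends to $1$. This exhibits $\sup_{\rho\in\mathcal{X}}S(A|B,\rho)+\epsilon$ as an achievable entanglement rate for $\mathcal{X}$; letting $\epsilon\to 0$ gives the bound $C_m(\mathcal{X})\leq \sup_{\rho\in\mathcal{X}}S(A|B,\rho)$.

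The only real obstacle is the simultaneous tuning of $\tau_l$, $\delta$, and the dimension-dependent constants so that (i) the cardinality of the net grows slowly enough to be absorbed by $f(l,N_l,\delta)$, (ii) $l\tau_l\to 0$ keeps the net-to-$\mathcal{X}$ transfer term in Lemma~\ref{net_lemma}.4 negligible, and (iii) the Fannes gap in Lemma~\ref{net_lemma}.3 is at most $\epsilon/2$. All three constraints are loose enough that $\tau_l=l^{-2}$ and a fixed, small $\delta$ (chosen only after $\epsilon$ is fixed) suffice, which keeps the argument clean.
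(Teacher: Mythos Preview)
Your proposal is correct and follows essentially the same approach as the paper: approximate $\mathcal{X}$ by the finite nets $\mathcal{X}_{\tau_l}$, invoke the explicit error bound from the proof of Theorem~\ref{theorem:finite_ach} (not merely the asymptotic statement of the theorem itself), and then use Lemma~\ref{net_lemma} parts 3 and 4 to control the rate gap and transfer the fidelity guarantee. The paper leaves the choice of $\tau_l$ abstract, only requiring $\tau_l\to 0$, $\sqrt{l\tau_l}\to 0$, and polynomial growth of $N_{\tau_l}$; your concrete choice $\tau_l=l^{-2}$ meets all three and makes the balancing explicit.
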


\begin{proof}
We show that 
\begin{align*}
 \sup_{\rho \in \mathcal{X}}S(A|B,\rho) + \epsilon 
\end{align*}
is an achievable rate for every $\epsilon$ satisfying $0 < \epsilon < |\sup_{\rho \in \mathcal{X}}S(A|B,\rho)|$.
Fix $\tau \in (0, \frac{1}{e})$ for the moment and consider the corresponding set $\mathcal{X}_\tau$ given in 
(\ref{approx_net}) which approximates $\mathcal{X}$. According to the proof of Theorem \ref{theorem:finite_ach} we find, 
for $l$ large enough,  an $(l,k_l)$-merging with
\begin{align}
 k_l &\leq \exp\left(l\left(\max_{1\leq i \leq N_\tau}S(A|B,\rho_i) + \frac{\epsilon}{2}\right)\right) \nonumber \\
    &\leq \exp\left(l\left(\sup_{\rho \in \mathcal{X}} S(A|B,\rho) + \frac{\epsilon}{2} + \tau + 2 \tau \log \frac{\dim(\hr_{AB})}{\tau}\right)\right) , 
    \label{net_ende_1}
\end{align}
where the second inequality is from Lemma \ref{net_lemma}. Another consequence of Lemma \ref{net_lemma} 
is the inequality
\begin{align}
 \inf_{\rho \in \mathcal{X}} F_m(\rho^{\otimes l}, \mathcal{M}_l) \geq 1 - 2 \sqrt{f(l, N_\tau,\delta)} - 
      4 \sqrt{l \cdot \tau}. \label{net_ende_2}
\end{align}
If we now choose a sequence $\{ \tau_l\}_{l\in \nn}$ such that $\lim_{l \rightarrow \infty} \tau_l = 0$ and 
$\lim_{\rightarrow \infty} \sqrt{l \cdot \tau_l} = 0$ hold, and additionally $N_{\tau_l}$ is growing polynomially (which is possible because 
Lemma \ref{card_bound} holds), then 
(\ref{net_ende_1}) and (\ref{net_ende_2}) show that $\sup_{\rho \in \mathcal{X}}S(A|B,\rho) + \epsilon$ is achievable.
\end{proof}

\subsection{\label{subsect:converse}Proof of the converse part}
Because we have shown that any rate above the least upper bound of the entanglement costs of the members of $\mathcal{X}$ 
achievable, our converse follows immediately from the original converse for single states from Ref. \cite{horodecki07b}. 
The argument given there is based on the fact that entanglement measures must be monotone under LOCC operations along with
an application of Fannes' inequality. As the proof is carried out in detail there, we just 
extend the argument to our present case.\\
Let $\delta > 0$ and $\chi_{AB}$ a member of $\mathcal{X}$ which satisfies
\begin{align}
 S(A|B,\chi_{AB}) \geq \sup_{\rho \in \mathcal{X}} S(A|B,\rho) - \delta .
\end{align}
Following the argument of the single state converse, we arrive at  
\begin{align}
 \frac{1}{l}\log(k_l) &\geq S(A|B,\chi_{AB}) - g(l)2\sqrt{\epsilon}(1 - \log(2\sqrt{\epsilon})) \nonumber \\
		      &=\sup_{\rho \in \mathcal{X}} S(A|B,\rho) - \delta - g(l)2\sqrt{\epsilon}
			  (1 - \log(2\sqrt{\epsilon}))
\end{align}
with a function $g$ which is O(1) for $l \rightarrow \infty$. Therefore the entanglement cost of $\mathcal{X}$ is 
least $\sup_{\rho \in \mathcal{X}} S(A|B,\rho) - \delta$ for every $\delta > 0$. 

\section{\label{sect:classical_cost}Classical communication cost of state merging}
Having determined the optimal entanglement cost of a state merging process, we consider the classical cost of state 
merging in this section. By classical cost, we mean the rate of classical communication from $A$ to $B$, which is at 
least required for an asymptotically perfect merging process. More precisely, if $\{\mathcal{M}_l\}_{l=1}^\infty$ is a 
sequence of $A \rightarrow B$ one-way LOCCs for a set $\mathcal{X}$, where $A$ distinguishes a number of $D_l$ measurement outcomes (see Section 
\ref{sect:definitions}, eq. (\ref{locc_allg})) within the application of $\mathcal{M}_l$, the classical cost is given by
\begin{align*}
 R_c = \limsup_{l \rightarrow \infty} \frac{1}{l}  \log D_l .
\end{align*}
In case of a single state $\rho_{AB}$, the minimum rate of classical communication for merging protocols achieving entanglement 
rate $R_q = S(A|B,\rho_{AB})$ was determined in Ref. \cite{horodecki07b} as 
$R_c = I(A;E, \rho_{AE})$, where $\rho_{AE}$ is the marginal on the subsystems belonging to $A$ and $E$ of an arbitrary 
purification $\psi_{ABE}$ of $\rho_{AB}$. In this section we deal with the case of a set of states to be merged and for
the sake of simplicity, we restrict ourselves to finite sets of states. 
Clearly, the classical communication cost of a merging procedure for a set $\mathcal{X}$ of states is lower bounded by 
the maximum of the communication costs for the individual states in $\mathcal{X}$. This is a direct consequence
of the known result for single states, which was given in Ref. \cite{horodecki07b}. The original proof given there is 
based on properties of the closely related ``mother protocol'' \cite{abeyesinghe09} and general assertions within the 
resource framework from Ref. \cite{devetak08}. Here, we give a more elementary proof for the reader not familiar with the 
results of Refs. \cite{devetak08} and \cite{abeyesinghe09}. Moreover, this result and a converse statement for the case that $A$ 
and $B$ are restricted to $L$-mergings show, that the protocol class we considered to show achievability of the merging 
cost, is suboptimal regarding the classical cost. 

\begin{proposition}[cf. Ref. \cite{horodecki07b}, Theorem 8]\label{single_class_converse}
Let $\rho_{AB} \in \st(\hr_{AB})$ be a bipartite state with purification $\psi_{ABE}$ on a space $\hr_{ABE}$ and 
$\epsilon \in (0,1)$. If $\mathcal{M}(\cdot) := \sum_{k=1}^D \mathcal{A}_k \otimes \mathcal{B}_k(\cdot)$ 
is an $A  \rightarrow B$ one-way LOCC such that
\begin{align}
 F(\mathcal{M}\otimes id_{\hr_E^{\otimes l}}(\phi_K \otimes \psi_{ABE}^{\otimes l}), \phi_L \otimes 
    \psi_{B'BE}^{\otimes l}) \geq 1 - \epsilon
\label{single_class_converse_error}
\end{align}
holds with maximally entangled states $\phi_K , \phi_L$ of Schmidt rank $K$ resp. $L$, then
\begin{align}
\frac{1}{l}\log(D) \geq I(A;E, \rho_{AE}) - 6 \sqrt{\epsilon}\left(\frac{1}{l}\log(KL) + \log \dim \hr_{AB}\right)
      - 3\eta(2\sqrt{\epsilon}) \label{single_class_converse_bound}
\end{align}
holds, where the function $\eta$ is defined on $[0,1]$ by
\begin{align}
 \eta(x) := \begin{cases} -x\log x & 0 < x \leq \frac{1}{e}  \\  
			  \frac{\log e}{e}& \frac{1}{e} < x \leq 1 \end{cases} \label{fannes_eta}
\end{align}
and $\eta(0) := 0$.
\end{proposition}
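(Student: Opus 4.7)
The plan is to lower-bound $\log D$ by tracking how much quantum mutual information with the environment $E$ Bob's systems carry during the protocol. The classical register, of size at most $D$, is the only vehicle by which this mutual information can grow during a one-way LOCC, while the merging condition forces the final value to be close to $2l S(\rho_{AB})$. Comparing these two bounds yields the claim via the elementary identity $2S(\rho_{AB})-I(B;E,\rho_{BE})=I(A;E,\rho_{AE})$, which follows from the purity relations $S(\rho_E)=S(\rho_{AB})$ and $S(\rho_{BE})=S(\rho_A)$.

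First I would decompose $\mathcal{M}$ into three stages: Alice applies the instrument $\{\mathcal{A}_k\}_{k=1}^D$ producing a classical outcome $k$; she transmits $k$ to Bob, where it is stored in a register $K$; Bob applies $\mathcal{B}_k$ conditionally on $K=k$. Let $\sigma$ denote the joint state on Bob's systems $BB^K K$ and the environment $E$ right after the second stage. Since $\sum_k\mathcal{A}_k$ is trace preserving and acts only on Alice's side, the marginal of $\sigma$ on $BB^K E$ coincides with the initial marginal $\pi_K\otimes\rho_{BE}^{\otimes l}$, so $I(BB^K;E)_\sigma = l\,I(B;E,\rho_{BE})$. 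The chain rule together with $H(K)\leq \log D$ gives
\begin{align*}
I(KBB^K;E)_\sigma \;=\; I(BB^K;E)_\sigma + I(K;E\,|\,BB^K)_\sigma \;\leq\; l\,I(B;E,\rho_{BE}) + \log D.
\end{align*}
Because Bob's third stage is a local quantum channel on his side, data processing for quantum mutual information yields
\begin{align*}
I(B^L B'B;E)_\omega \;\leq\; l\,I(B;E,\rho_{BE}) + \log D,
\end{align*}
where $\omega := \mathcal{M}\otimes id_{\hr_E^{\otimes l}}(\phi_K\otimes\psi_{ABE}^{\otimes l})$.

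For the matching lower bound on $I(B^L B'B;E)_\omega$ I would use the merging hypothesis. By \eqref{single_class_converse_error} and Fuchs--van de Graaf, $\|\omega-\phi_L\otimes\psi_{B'BE}^{\otimes l}\|_1 \leq 2\sqrt{\epsilon}$, and this bound persists after tracing out $A^L$. On the target state $\phi_L$ is decoupled from $E$ and $\psi_{B'BE}^{\otimes l}$ is pure on $B'BE$, so $I(B^L B'B;E)_{\mathrm{ideal}} = 2S(\rho_E^{\otimes l}) = 2l\,S(\rho_{AB})$. Applying Fannes' inequality to each of $S(B^L B'B)$, $S(E)$, and $S(B^L B'B E)$ between $\omega$ and the target (noting $S(E)$ is unchanged since $E$ is untouched) gives $I(B^L B'B;E)_\omega \geq 2l\,S(\rho_{AB}) - \Delta$, with $\Delta$ of Fannes form $O(\sqrt{\epsilon}) \cdot (\text{log of relevant dimensions}) + O(\eta(2\sqrt{\epsilon}))$. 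Combining with the preceding display and invoking the identity $2S(\rho_{AB}) - I(B;E,\rho_{BE}) = I(A;E,\rho_{AE})$ yields $\log D \geq l\,I(A;E,\rho_{AE}) - \Delta$, and dividing by $l$ produces the stated bound.

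The conceptual content sits in the data-processing step of the second paragraph; once one observes that Bob's mutual information with $E$ can grow only through the classical link and is constrained on both sides by the merging condition and by $\log D$, everything else is routine. The main technical delicacy is book-keeping the Fannes constants so that $\Delta$ fits exactly the form $6\sqrt{\epsilon}\bigl(\log(KL)+l\log\dim\hr_{AB}\bigr) + 3l\,\eta(2\sqrt{\epsilon})$ of the proposition; in particular, the $\log K$ contribution reflects that the Fannes comparison is cleanest on states that still carry the input-ebit systems, and one must decide whether to absorb the $l$ tensor copies into a single bulk Fannes application or into per-copy steps (the latter being responsible for the factor $l$ on the $\eta$-term).
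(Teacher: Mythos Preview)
Your approach is correct and takes a genuinely different route from the paper's.

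The paper lower-bounds $\log D$ via the almost-convexity of the von Neumann entropy,
\[
\log D \;\geq\; S\!\left(\sum_{k}\mathcal{A}_k\otimes id(\psi_0)\right) - \sum_k p_k\, S\!\left(\tfrac{1}{p_k}\mathcal{A}_k\otimes id(\psi_0)\right),
\]
and then controls the two terms separately: the first by the Araki--Lieb inequality plus a Fannes comparison on Alice's output marginal, the second by passing to Stinespring dilations of each $\mathcal{A}_k$ and $\mathcal{B}_k$, invoking Uhlmann's theorem branch-by-branch, and applying Fannes on the dilated spaces. The $\log K$ and the somewhat large constants in the proposition arise precisely from the dimensions of those Stinespring ancillas $\hr_{C'},\hr_{C''}$.

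You instead track the quantum mutual information $I(\,\cdot\,;E)$ on Bob's side through the protocol: it starts at $l\,I(B;E,\rho_{BE})$, can increase by at most $H(K)\leq \log D$ when the classical register arrives (chain rule plus classicality of $K$), cannot increase under Bob's local processing (data processing), and must end near $2l\,S(\rho_{AB})$ by the merging hypothesis and Fannes. The pure-state identity $2S(\rho_{AB})-I(B;E)=I(A;E)$ then closes the argument. This is cleaner and avoids the Stinespring--Uhlmann machinery entirely; it is closer in spirit to the Groisman--Popescu--Winter correlation-destruction picture the paper cites as inspiration but does not follow so directly.

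One remark on your closing paragraph: your Fannes comparisons live on the output systems $B^L B'B$ and $B^L B'B E$, whose dimensions involve $L$ and $l\log\dim\hr_{AB}$ (and $\dim\hr_E\leq\dim\hr_{AB}$) but \emph{not} $K$. So your method naturally yields a bound without the $\log K$ term and with only two $\eta$-contributions --- strictly stronger than the stated proposition, which it therefore implies. Your speculation that the $\log K$ ``reflects that the Fannes comparison is cleanest on states that still carry the input-ebit systems'' is off the mark: in the paper $\log K$ enters through the Stinespring ancilla dimensions, a detour your argument simply does not take. You should not try to force your constants to match the proposition's; just note that your error term is dominated by it.
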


\begin{proof}
The proof is inspired by ideas from Ref. \cite{groisman05}. Fix $\epsilon \in (0,1)$ and $l \in \nn$. Let $\phi_K \in 
\kr_{AB}^0$ and $\phi_L \in \kr_{AB}^1$ maximally entangled input resp. output states of the protocol such that 
with notations
\begin{align*}
 \psi_0 := \phi_K \otimes \psi_{ABE}^{\otimes l},\  \text{and} 
  \hspace{0.4cm} \psi_1 := \phi_L \otimes \psi_{B'BE}^{\otimes l}
\end{align*}
eq. (\ref{single_class_converse_error}) reads
\begin{align}
F(\mathcal{M} \otimes id_{\hr_E^{\otimes l}}(\psi_0), \psi_{1}) \geq 1 - \epsilon. \label{single_class_converse_error_2}
\end{align}
We use the abbreviations $\hr^0_{BE} := \kr_B^0 \otimes \hr_{BE}^{\otimes l}$, $p_k := \tr(\mathcal{A}_k\otimes
id_{\hr^0_{BE}}(\psi_0))$ for 
$k \in [D]$, and $T = \{k \in [D]: p_k \neq 0\}$. It is well known, that the von Neumann entropy is an almost convex function, i.e.
for a state $\overline{\rho}$ defined as a mixture $\overline{\rho} := \sum_{i=1}^N {p_i} \rho_i$ of quantum states,  
\begin{align}
 S(\overline{\rho}) \leq H(p_1,...,p_N) + \sum_{i=1}^N p_i S(\rho_i) \nonumber
\end{align}
holds, where $H(p_1,...,p_N)$ is the Shannon entropy of the probability distribution on $[N]$ given by $p_1,...,p_N$. 
Using this fact, we obtain the lower bound 
\begin{align}
 \log D &\geq H(p_1,...,p_D) \nonumber  \\
        &\geq S\left(\sum_{k\in T}\mathcal{A}_k \otimes id_{\hr^0_{BE}}(\psi_0)\right) - 
	  \sum_{k \in T} p_k S\left(\frac{1}{p_k} \mathcal{A}_k\otimes id_{\hr^0_{BE}}(\psi_0)\right) 
	  \label{single_class_converse_fast_konv}
\end{align}
on $\log D$.
We separately bound the terms on the r.h.s. of eq. (\ref{single_class_converse_fast_konv}). 
With definitions $\pi_{K,A} := \tr_{\kr_B^0}(\phi_K)$, $\pi_{K,B} := \tr_{\kr_A^0}(\phi_K)$ and $\pi_{L,A} 
:= \tr_{\kr_B^1}(\phi_L)$ (these are maximally mixed states of rank $K$ resp. $L$) and $\mathcal{A}(\cdot) 
:= \sum_{k\in T} \mathcal{A}_k(\cdot)$, we obtain
\begin{align}
 S\left(\sum_{k \in T}\mathcal{A}_k \otimes id_{\hr^0_{BE}}(\psi_0)\right) 
	 &\geq S(\pi_{K,B} \otimes \rho_{BE}^{\otimes l})
      - S(\mathcal{A}(\pi_{K,A} \otimes \rho_A^{\otimes l})) \label{single_class_converse_term_1_1}\\
	& \geq \log K + lS(\rho_{BE}) - \log L - \Delta_1(\epsilon) \label{single_class_converse_term_1_2}\\
        & = \log\frac{K}{L} - l S(\rho_A) - \Delta_1(\epsilon) \label{single_class_converse_term_1_3}
\end{align}
where $\Delta_1(\cdot):= 2 \sqrt{\cdot} \log(L) + \eta(2 \sqrt{\cdot})$. Here eq. (\ref{single_class_converse_term_1_1}) 
is by the Araki-Lieb inequality \cite{araki70}, and  eq. (\ref{single_class_converse_term_1_3}) is due to the fact that 
$S(\rho_A) = S(\rho_{BE})$ holds. Eq. (\ref{single_class_converse_term_1_2}) is justified as follows. Using 
the relation between fidelity and trace distance from (\ref{fvg_2}) along with the fact, that the latter is monotone under 
taking partial traces, 
(\ref{single_class_converse_error_2}) implies
\begin{align}
 \|\mathcal{A}(\pi_{K,A} \otimes \rho_A^{\otimes l}) - \pi_{L,A} \|_1 \leq 2\sqrt{\epsilon}.
\end{align}
This, via application of Fannes' inequality leads to 
\begin{align}
 S(\mathcal{A}(\pi_{K,A} \otimes \rho_{A}^{\otimes l})) \leq S(\pi_{L,A}) - 2 \sqrt{\epsilon}\log L - 
    \eta(2 \sqrt{\epsilon}),
\end{align}
where $\eta$ is the function defined in (\ref{fannes_eta}). To bound the second term on the r.h.s. of 
(\ref{single_class_converse_fast_konv}), we use Stinespring extensions of the individual trace decreasing channels 
which constitute $\mathcal{M}$. Let for each $k \in [D]$, 
\begin{align}
v_k: \kr_A^0 \otimes \hr_{A}^{\otimes l} \rightarrow \kr_A^1 \otimes \hr_{C'} \nonumber
\end{align}
be a Stinespring extension of $\mathcal{A}_k$ and 
\begin{align}
u_k : \kr_B^0 \otimes \hr_B^{\otimes l} \rightarrow \kr_B^{1} \otimes \hr_{B'B}^{\otimes l} 
\otimes \hr_{C''}
\end{align}
be a Stinespring extension of $\mathcal{B}_k$. Here $\hr_{C'}$ is a Hilbert space associated to $A$ 
and $\hr_{C''}$ belongs to $B$. We fix notations $\mathcal{V}_k(\cdot):=v_k(\cdot)v_k^\ast$ and $\mathcal{U}_k := 
u_k(\cdot)u_k^\ast$ and denote the normalized outputs of these extensions by
\begin{align}
 \gamma_k := \frac{1}{p_k} \mathcal{V}_k \otimes \mathcal{U}_k \otimes id_{\hr_E^{\otimes l}}(\psi_0) 
    \label{single_class_kosten_gamma_k}
\end{align}
for every $k \in T$. Note that $\mathcal{V}_1,...,\mathcal{V}_D$ are trace decreasing, while $\mathcal{U}_1,...,
\mathcal{U}_D$ are channels. For every $k \in T$, we have
\begin{align}
 S\left(\tfrac{1}{p_k}\mathcal{A}_k \otimes id_{\hr^0_{BE}}(\psi_0)\right) &= 
  S\left(\tfrac{1}{p_k}\tr_{\hr_{C'}}\mathcal{V}_k\otimes id_{\hr^0_{BE}}(\psi_0)\right) \nonumber \\
 &=S\left(\tfrac{1}{p_k}\tr_{\hr_{C'}}\mathcal{V}_k\otimes \mathcal{U}_k \otimes id_{\hr_{E}^{\otimes l}}
    (\psi_0)\right)\nonumber \\
 &= S(\tr_{\hr_{C'}}\gamma_k), \label{single_klass_kosten_erstens}
\end{align}
where the second equality is by the fact that $u_k$ is an isometry and consequently the action of $\mathcal{U}_k$ does 
does not change the entropy.
Note, that (\ref{single_class_converse_error_2}) implies, 
because fidelity is linear in the first input here, existence of a positive number $c_k$ for every 
$k \in T$, such that 
\begin{align}
F\left(\frac{1}{p_k}\mathcal{A}_k \otimes \mathcal{B}_k \otimes id_{\hr_E^{\otimes l}}(\psi_0), \psi_1\right) = 1 - c_k 
      \label{single_class_converse_stine_fid}
\end{align}
and $\sum_{k\in T} p_k c_k \leq \epsilon$ hold. Because $\gamma_k$ is a purification of 
$\frac{1}{p_k}\mathcal{A}_k \otimes \mathcal{B}_k \otimes id_{\hr_E^{\otimes l}}(\psi_0)$ and $\psi_1$ is already pure,
Uhlmann's Theorem ensures existence of a pure state $\varphi_k$ on $\hr_{C'} \otimes \hr_{C''}$ with
\begin{align}
 F(\gamma_k, \psi_1 \otimes \varphi_k) &= \max\{|\braket{\gamma_k, \sigma}|^2 : \sigma \ \text{purification of}\ \psi_0 \ 
		    \text{on}\ 
			  \kr_{AB}^1 \otimes \hr_{B'BE}^{\otimes l} \otimes \hr_{C'}\otimes \hr_{C''} \} \nonumber \\
		     &= F\left(\frac{1}{p_k}\mathcal{A}_k \otimes \mathcal{B}_k \otimes id_{\hr_E^{\otimes l}}(\psi_0), 
			  \psi_1\right)
			\label{single_class_converse_uhl}
\end{align}    
for every $k \in T$. From eqns. (\ref{single_class_converse_stine_fid}) and (\ref{single_class_converse_uhl}) we conclude,
again via the well known relation between fidelity and trace distance from (\ref{fvg_2}),
\begin{align}
 \|\gamma_k - \psi_1 \otimes \varphi_k\|_1 \leq 2 \sqrt{c_k}, \label{single_class_converse_kleinklein}
\end{align}
which implies, again via Fannes' inequality and  monotonicity of the trace distance under partial tracing
\begin{align}
 S(\tr_{\hr_{C'}}\gamma_k) & \leq S(\psi_1 \otimes \tr_{\hr_{C'}}\varphi_k) +  \Delta_2(c_k) \nonumber \\
		           & \leq S(\tr_{\hr_{C'}}\varphi_k) + \Delta_2(c_k). 
				    \label{single_klass_kosten_fannes_anwendung_2}
\end{align}
where $\Delta_2(\cdot) = 2 \sqrt{\cdot} \log(\dim \hr_{AB}^2\dim\hr_{C''}) + \eta(2 \sqrt{\cdot})$. Consequently, we have
\begin{align}
  \sum_{k \in T} p_k S\left(\frac{1}{p_k}\mathcal{A}_k \otimes id_{\hr^0_{BE}}(\psi_0)\right)
  &=\sum_{k\in T} p_k S(\tr_{\hr_{C'}}\gamma_k) \nonumber \\
 &\leq \sum_{k \in T} p_kS(\tr_{\hr_{C'}}\varphi_k) + \Delta_2(\epsilon). 
     \label{single_class_converse_bound_delta_2}
\end{align}
The above equality is by (\ref{single_klass_kosten_erstens}), the inequality follows by 
(\ref{single_klass_kosten_fannes_anwendung_2}) and the fact, that that $\Delta_2$ is monotone and concave (see the 
definition of $\eta$ in \ref{fannes_eta})). It remains
to bound $\sum_{k \in T} p_kS(\tr_{\hr_{C'}}\varphi_k)$. Abbreviating 
$\hr_{AE}^1 := \kr_A^1 \otimes \hr_E^{\otimes l} \otimes \hr_{C'}$, an argument very similar to the one above gives 
(again via (\ref{single_class_converse_kleinklein}) and an application of Fannes' inequality) the bound
\begin{align}
 S(\tr_{\hr_{AE}^1}(\gamma_k)) & 
\geq S(\tr_{\hr_{AE}^1}(\psi_1 \otimes \varphi_k)) - \Delta_3(c_k) \nonumber\\
&= S(\pi_{L,B} \otimes \rho_{B'B}^{\otimes l} \otimes \tr_{\hr_{C'}}\varphi_k) - \Delta_3(c_k) 
    \label{single_class_converse_bound_delta_3}
\end{align}
with the function $\Delta_3(\cdot) := 2\sqrt{\cdot}(\log(K) + l\log(\dim\hr_{AB}\cdot \dim\hr_{C''})) 
+ 2 \eta(\sqrt{\cdot})$. And, using monotonicity and concavity of $\Delta_3$ together with 
(\ref{single_class_converse_bound_delta_3}), we obtain
\begin{align}
 \sum_{k \in T} p_k S(\tr_{\hr_{AE}^1}(\gamma_k)) &
 \geq \log(L) + lS(\rho_{AB}) + \sum_{k\in T}p_k S(\tr_{\hr_{C'}}\varphi_k) - \Delta_3(\epsilon) 
 \label{single_class_converse_deltarechnerei_1}
\end{align}
where we used, that $S(\rho_{B'B}) = S(\rho_{AB})$ holds.
If we now look at $\sum_{k=1}^D \mathcal{V}_k \otimes \mathcal{U}_k \otimes id_{\hr_E^{\otimes l}}(\cdot)$ as an 
one-way LOCC-channel with local operations on systems belonging to $A$ and $E$ on one side and $B$ on the 
other side which 3 the pure input state $\psi_0$ to the state described by the pure state mixture 
$\sum_{k\in T} p_k \gamma_k$, we have
\begin{align}
 S(\pi_K \otimes \rho_B^{\otimes l}) &= S(\tr_{\kr_{A}^0 \otimes \hr_{AE}^{\otimes l}}\psi_0)  \nonumber \\
				     &= S\left(\tr_{\hr_{AE}^1}\left(\sum_{k=1}^D \mathcal{V}_k 
					  \otimes id_{\hr_{BE}^0}(\psi_0)\right)\right) \nonumber \\
				     &\geq \sum_{k \in T} p_k \, S\left(\frac{1}{p_k}\tr_{\hr_{AE}^1}\mathcal{V}_k \otimes 
					    id_{\hr_{BE}^0}(\psi_0)\right) \nonumber \\
				     &= \sum_{k \in T} p_k\, S\left(\frac{1}{p_k}\tr_{\hr_{AE}^1}\mathcal{V}_k \otimes 
					    \mathcal{U}_k\otimes id_{\hr_E^{\otimes l}}(\psi_0)\right) 
				        \label{single_class_kosten_locc_mon_pre} \\
				      &= \sum_{k \in T} p_k \, S\left(\tr_{\hr_{AE}^1}\gamma_k\right).
					\label{single_class_kosten_locc_mon}
\end{align}
The second of the above equalities is due to the fact, that $\sum_{k=1}^D \mathcal{V}_k(\cdot)$ is trace preserving, the 
inequality is by concavity of the von Neumann entropy. Eq. (\ref{single_class_kosten_locc_mon_pre}) is because the von 
Neumann entropy is not changed by application of unitary channels in the input. The last equality is by the definitions 
introduced in (\ref{single_class_kosten_gamma_k}). With (\ref{single_class_converse_deltarechnerei_1}), 
(\ref{single_class_kosten_locc_mon}) and the equality $S(\rho_{AB}) = S(\rho_E)$, we obtain
\begin{align}
  S(\pi_K \otimes \rho_{B}^{\otimes l}) 
  & \geq \log(L) + lS(\rho_{E}) +  \sum_{k \in T} p_k S(\tr_{\hr_{C'}}\varphi_k) - \Delta_3(\epsilon). 
\label{single_class_converse_delta_4}
\end{align}
Rearranging the terms in inequality (\ref{single_class_converse_delta_4}) and using 
(\ref{single_class_converse_bound_delta_2}) leads to the bound
\begin{align}
 \sum_{k \in T} p_k S\left(\tfrac{1}{p_k}\mathcal{A}_k \otimes id_{\hr^0_{BE}}(\psi_0)\right) \leq \log \frac{K}{L} + 
      l(S(\rho_{AE}) - S(\rho_E))+ \Delta_2(\epsilon)+\Delta_3(\epsilon). \label{single_class_converse_term_2}
\end{align}
Here, we additionally used the fact, that $S(\rho_B) = S(\rho_{AE})$ holds. Combining the bounds from 
(\ref{single_class_converse_term_1_2}) and (\ref{single_class_converse_term_2}) with 
(\ref{single_class_converse_fast_konv}), we arrive at 
\begin{align}
 \frac{1}{l} \log D \geq I(A;E,\rho_{AE})-\frac{1}{l} 
   (\Delta_1(\epsilon) + \Delta_2(\epsilon) + \Delta_3(\epsilon)).
\end{align}
In fact, we find Stinespring extensions on spaces $\hr_{C'}$ and $\hr_{C''}$ with
\begin{align}
\dim \hr_{C'} &= K\cdot L \cdot \dim \hr_A^l\\
\dim \hr_{C''} &= K \cdot L \cdot \dim \hr_B^{2l} \dim \hr_A^l.
\end{align}
Using the definition of $\Delta_1, \Delta_2$ and $\Delta_3$ with the above dimensions, we conclude
\begin{align}
\frac{1}{l} \log D \geq I(A;E,\rho_{AE}) - 6 \sqrt{\epsilon}\left(\frac{\log KL}{l} + \log \dim \hr_{AB}\right) 
 - 3\eta(2\sqrt{\epsilon}),
\end{align}
which we aimed to prove.
\end{proof}
\begin{remark}

It is worth noting here, that the lower bound for the classical cost established in the proof of Proposition 
\ref{single_class_converse} does not explicitly rely on the entanglement rate of the protocol. 
Consequently, there is no chance to significantly reduce the required classical communication by 
admitting a higher entanglement rate, as long as one demands the protocol to be asymptotically perfect.
\end{remark}

In contrast to the above result, the following lemma indicates the limitations of the class of protocols used for 
establishing the achievability of the merging cost.

\begin{lemma}\label{l_merging_class_converse}
Let $\{\rho_{AB,i}\}_{i=1}^N$ be a set of states on $\hr_{AB}$. For every $\epsilon \in (0,1)$ and $\delta > 0$, there exists a number $l_0(\epsilon, \delta)$, such that
if $l > l_0$ and $\mathcal{M}(\cdot) := \sum_{k=1}^D \mathcal{A}_k \otimes \mathcal{B}_k(\cdot)$ is an 
$L$-merging for states on $\hr_{AB}^{\otimes l}$ for some $L \in \{1,...,\dim(\hr_{A}^{\otimes l})\}$ with 
\begin{align}
 \min_{1 \leq i \leq N} F_m(\rho_{AB,i}^{\otimes l}, \mathcal{M}) \geq 1 - \epsilon,
\end{align}
then
\begin{align}
 \frac{1}{l}\log(D) \geq \max_{1 \leq i \leq N} S(\rho_{A,i}) +  \frac{1}{l}\log\frac{K}{L} - \delta 
 \label{l_merging_class_unterbound}
\end{align}
holds.
\end{lemma}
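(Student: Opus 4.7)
The strategy is to lower-bound, for each $i \in [N]$, the Shannon entropy $H(p^i)$ of the classical message produced by $A$'s instrument when the protocol is run on $\phi_K \otimes \rho_{AB,i}^{\otimes l}$, and then take the maximum over $i$. Because $A$ transmits one of $D$ outcomes, we have the trivial ceiling $H(p^i) \leq \log D$; a matching floor will come from the $L$-merging structure, which forces $A$'s instrument to be implemented via partial isometries $a_k$ whose initial projections $P_k := a_k^\ast a_k$ are pairwise orthogonal, sum to the identity on $A$'s input space $\kr_A^0 \otimes \hr_A^{\otimes l}$, and each have rank at most $L$.

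The bound on $H(p^i)$ follows from a two-sided estimate of the entropy of the dephased state $\tilde{\sigma}_i := \sum_{k} P_k(\pi_K \otimes \rho_{A,i}^{\otimes l}) P_k$. On one hand, dephasing by a projective measurement does not decrease the von Neumann entropy (Lindblad), so
\begin{align*}
S(\tilde{\sigma}_i) \;\geq\; S(\pi_K \otimes \rho_{A,i}^{\otimes l}) \;=\; \log K + l\,S(\rho_{A,i}).
\end{align*}
On the other hand, the pairwise orthogonality of the ranges of the $P_k$ gives the exact decomposition
\begin{align*}
S(\tilde{\sigma}_i) \;=\; H(p^i) + \sum_k p_k^i\,S(\tau_k^i) \;\leq\; H(p^i) + \log L,
\end{align*}
where $p_k^i := \tr(P_k(\pi_K \otimes \rho_{A,i}^{\otimes l}))$ and $\tau_k^i$ is the normalized conditional state, supported in an at most $L$-dimensional subspace. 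Chaining the two inequalities gives $H(p^i) \geq \log K + l\,S(\rho_{A,i}) - \log L$, and dividing by $l$ and maximizing over $i$ produces the claimed bound, ideally with $\delta = 0$.

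The main obstacle, in my view, is not the entropy computation itself --- which is essentially deterministic and does not actually call on the fidelity hypothesis --- but the correct identification of $K$ and $L$ as the Schmidt ranks genuinely entering the $L$-merging. The fidelity premise $F_m(\rho_{AB,i}^{\otimes l},\mathcal{M}) \geq 1 - \epsilon$ ensures that the protocol is truly consuming/producing entanglement at rate $\tfrac{1}{l}\log(K/L)$, so the projections $P_k$ partition the right joint input space; the slack $\delta$ and the threshold $l_0(\epsilon,\delta)$ presumably absorb the small $o(l)$ corrections arising from the exceptional Kraus operator $a_0$ of rank strictly below $L$ and from any defect of the $P_k$ from exactly summing to the identity, which a Fannes-type estimate on the dephased state can control once $l$ is large enough.
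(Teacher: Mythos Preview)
Your argument is correct and, in fact, cleaner and strictly stronger than the paper's own proof. The paper proceeds quite differently: for each state it fixes the set $M \subset [D]$ of outcomes occurring with positive probability, observes from the fidelity hypothesis and monotonicity that the projection $q := \sum_{k \in M} p_k$ satisfies $\tr(q\,(\pi_K \otimes \rho_{A,i}^{\otimes l})) \geq 1-\epsilon$, and then invokes the classical fact (Csisz\'ar--K\"orner, Lemma 2.14) that any subspace carrying probability $\geq 1-\epsilon$ under an i.i.d.\ state must asymptotically have dimension at least $2^{l(S(\rho_{A,i})-\delta)}$. Since $\tr(q)=L\cdot|M|$ and $D \geq |M|$, this yields the bound, but only for $l$ large and with the slack $\delta$.

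Your route bypasses all of this: the pinching inequality $S(\sum_k P_k \sigma P_k) \geq S(\sigma)$ together with the block decomposition $S(\sum_k P_k \sigma P_k)=H(p)+\sum_k p_k S(\tau_k) \leq H(p)+\log L$ immediately gives $\log D \geq H(p^i) \geq \log K + l\,S(\rho_{A,i}) - \log L$ for every $l$, with no use of the fidelity hypothesis, no $l_0$, and $\delta=0$. Your final paragraph's worries are therefore unfounded: the exceptional operator $a_0$ causes no trouble (you only need $\mathrm{rank}\,P_k \leq L$), the $P_k$ genuinely sum to the identity since $\sum_k \mathcal{A}_k$ is trace preserving and the $a_k$ are partial isometries, and no Fannes correction is required. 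What your approach buys is a non-asymptotic, one-shot bound that is stronger than stated; what the paper's approach illustrates, by contrast, is how the fidelity condition alone already forces the instrument to resolve the typical subspace, which is conceptually closer to the strong-converse viewpoint but comes at the cost of the asymptotic parameters.
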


\begin{proof}
First we consider for an arbitrary but fixed number $l \in \nn$ and an arbitrary single state $\rho_{AB}$. 
Let $M \subset [D]$ be a set of indices which fulfills
\begin{align*}
 F\left(\sum_{k \in M} \mathcal{A}_k \otimes \mathcal{B}_k \otimes id_{\hr_E^{\otimes l}}(\phi_K \otimes 
	\psi_{ABE}^{\otimes l}), \phi_L \otimes \psi_{B'BE}^{\otimes l} \right) \geq 1 - \epsilon
\end{align*}
We use abbreviations 
\begin{align*}
 \psi_0 := \phi_K \otimes \psi_{ABE}^{\otimes l} \hspace{0.3cm} \text{and} \hspace{0.3cm} 
    \rho_0 := \tr_{\kr_B^0 \otimes \hr_{BE}^{\otimes l}}(\psi_0) = \pi_K \otimes \rho_{A}^{\otimes l}
\end{align*}
Without any loss we assume that $M$ contains no index $k$ with $\tr(\mathcal{A}_k(\rho_0)) = 0$. Because we are concerned 
with an $L$-merging for $\psi_0$ here, we have
\begin{align*} 
 \mathcal{A}_k(\cdot) = u_k p_k(\cdot)p_k u_k^\ast  \hspace{0.3cm}
\end{align*}
for every $k$ in $M$ where $\{p_k\}_{k \in M}$ is a set of mutually orthogonal projections of rank $L$. We have
\begin{align*}
 \tr(\mathcal{A}_k (\rho_0)) = \tr(p_k\rho_0),
\end{align*}
and
\begin{align}
\tr(q \rho_0) = \sum_{k \in M}\tr(\mathcal{A}_k(\rho_0)), \label{single_class_lmerging_last}
\end{align}
where we used the definition $q := \sum_{k\in M} p_k$. It holds
\begin{align}
 1 - \epsilon &\leq \sum_{k \in M} F(\mathcal{A}_k \otimes \mathcal{B}_k \otimes id_{\hr_E^{\otimes l}}
			(\psi_0), \phi_L\otimes \psi_{B'BE}^{\otimes l}) \nonumber \\
	      &\leq \sum_{k \in M} F(\mathcal{A}_k(\rho_0), \pi_L) \label{single_class_lmerging_mon}\\
	      &\leq \sum_{k \in M} \tr(\mathcal{A}_k(\rho_0)) \label{single_class_lmerging_hom} \\
	      &= \tr(q \rho_0). \label{single_class_lmerging_hom_2}
\end{align}
Here, (\ref{single_class_lmerging_mon}) follows from the monotonicity of the fidelity under partial traces, 
(\ref{single_class_lmerging_hom}) by the fact that it is homogeneous in its inputs. The last equality is 
by (\ref{single_class_lmerging_last}). We may w.l.o.g. assume, that $\rho_0$ is of the form $\phi_0^{\otimes l} \otimes \psi_{ABE}^{\otimes l}$ with some maximally entangled state $\phi_0$, otherwise one could add a maximally entangled system to achieve this. In this case, Eq. (\ref{single_class_lmerging_hom_2}) would hold with the projector $\eins \otimes q$ instead of $q$, and this can be done without changing in the asymptotic rates. The well known fact, that subspaces of large probability, asymptotically, cannot have dimension substantially smaller than the typical subspace (see Ref. \cite{csiszar11}, Lemma 2.14) guarantees
\begin{align}
 \frac{1}{l}\log\tr(q) \geq S(\pi_0) + S(\rho_A) - \delta
\end{align}
if $l$ is sufficiently large. If we take into account, that $q$ is a sum of $|M|$ mutually orthogonal projections of rank $L$ (i.e. $\tr(q) = L\cdot |M|$), we have
\begin{align}
\frac{1}{l} \log|M| \geq S(\rho_A) - \frac{1}{l}\log \frac{K}{L} - \delta.
\end{align}
If we now consider a set $\mathcal{X}:= \{\rho_{AB,i}\}_{i=1}^N$ and and repeat the above argument with sets $M_1,...,M_N$ 
for this case we arrive at
\begin{align}
 \frac{1}{l}\log D \geq \frac{1}{l} \log \max_{1 \leq i \leq N} |M_i| \geq \max_{1 \leq i \leq N} S(\rho_{A,i}) - 
      \frac{1}{l} \log\frac{K}{L} - \delta
\end{align}
which concludes our proof.
\end{proof}

\begin{theorem}[classical cost of L-merging]
Let $\mathcal{X} := \{\rho_{AB,i}\}_{i=1}^N$ be a set of bipartite states on $\hr_{AB}$ and $\delta > 0$. For a merging procedure, 
where $A$ and $B$ are restricted to $L$-mergings (together with adding some further input pure entanglement) and  entanglement rate
\begin{align}
R_q = \max_{1 \leq i \leq N} S(A|B,\rho_{AB,i}) + \delta
\end{align}
is achieved, the optimal rate of classical communication is
\begin{align*}
 R_c = \max_{1 \leq i \leq N} S(\rho_{A,i}) + \max_{1 \leq i \leq N} S(A|B,\rho_{AB,i}) + \delta.
\end{align*}
\end{theorem}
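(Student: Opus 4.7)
The argument splits into a converse bound and a matching direct construction.

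For the converse direction $R_c \geq \max_i S(\rho_{A,i}) + R_q$, the plan is to invoke Lemma~\ref{l_merging_class_converse} directly. Consider any sequence of $L$-mergings $\mathcal M_l = \sum_{k=0}^{D_l}\mathcal A_{k,l}\otimes \mathcal B_{k,l}$ with pre-shared input entanglement of Schmidt rank $K_l$ and output entanglement of rank $L_l$ that merges $\mathcal X^{\otimes l}$ with vanishing worst-case infidelity and realizes the entanglement rate $R_q = \max_i S(A|B,\rho_{AB,i})+\delta$. The achievability definition gives $\limsup_l \tfrac{1}{l}\log(K_l/L_l)\leq R_q$, and equality must hold because the converse part of Theorem~\ref{merging_theorem} forbids a strictly smaller limit. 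Then for any $\delta'>0$ and every sufficiently large $l$, Lemma~\ref{l_merging_class_converse} yields
\[
\tfrac{1}{l}\log D_l \;\geq\; \max_{1\leq i\leq N} S(\rho_{A,i}) + \tfrac{1}{l}\log(K_l/L_l) - \delta'.
\]
Taking $\limsup_l$ followed by $\delta'\to 0$ produces $R_c\geq \max_i S(\rho_{A,i}) + R_q$.

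For the direct construction I plan to tighten the protocol built in the proof of Theorem~\ref{theorem:finite_ach} so that the $L$-instrument acts effectively on a subspace containing the typical subspaces of all $\rho_{A,i}^{\otimes l}$ simultaneously, rather than on the entire $\hr_A^{\otimes l}$. Let $\tilde q_\delta^l$ be a projection whose range contains $q_{A,i,\delta}^l\hr_A^{\otimes l}$ for every $i\in [N]$; since $N$ is finite, Lemma~\ref{types} gives $\tr(\tilde q_\delta^l)\leq N\cdot 2^{l(\max_i S(\rho_{A,i})+\varphi(\delta))}$. One precomposes the original construction with an isometric embedding of $\tilde q_\delta^l\hr_A^{\otimes l}$ into a fresh Hilbert space $\tilde\hr_A^l$ of dimension $\tr(\tilde q_\delta^l)$, and then applies the $L_l$-merging of Theorem~\ref{theorem:finite_ach} on $\tilde\hr_A^l$. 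Running the Haar average of Lemma~\ref{lemma:expectation} on $\mathfrak U(\tilde\hr_A^l)$ (so that the $d_A$ appearing in the error estimate is replaced by $\tr(\tilde q_\delta^l)$) and then following the rest of the proof of Theorem~\ref{theorem:finite_ach} via Lemma~\ref{error_type} and Winter's gentle measurement lemma produces an $L_l$-merging with vanishing worst-case infidelity and $D_l+1 \leq \tr(\tilde q_\delta^l)/L_l + 1 \leq 2^{l(\max_i S(\rho_{A,i}) + R_q + o(1))}$ outcomes.

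The main obstacle is reconciling this compressed construction with the rigid $L$-merging template of Sect.~\ref{subsect:protocol}, which forces the leftover operator $a_0$ to have rank strictly less than $L$. The remedy is to read the preliminary compression together with the subsequent $L_l$-instrument as a single $A\rightarrow B$ one-way LOCC whose $A$-side instrument has $D_l+1$ outcomes: one (asymptotically negligible) atypical branch absorbing the complement of $\tilde q_\delta^l$, and $D_l$ rank-$L_l$ branches partitioning the compressed space. Since the classical cost as defined at the beginning of Sect.~\ref{sect:classical_cost} only counts the total number of outcomes, the resulting rate $\tfrac{1}{l}\log(D_l+1)$ tends to $\max_i S(\rho_{A,i}) + R_q$ from above, matching the converse lower bound established in the first step and completing the proof.
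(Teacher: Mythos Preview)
Your proposal is correct and follows essentially the same route as the paper: Lemma~\ref{l_merging_class_converse} for the lower bound, and restriction of the $A$-side instrument to the span of the $A$-typical subspaces of the $\rho_{A,i}^{\otimes l}$ for achievability, with the outcome count bounded via $\tr(\tilde q_\delta^l)\le N\cdot 2^{l(\max_i S(\rho_{A,i})+\varphi(\delta))}$. The paper phrases the restriction slightly differently (it passes to the support of the $A$-marginal of the averaged typically-reduced state $\overline{\rho}_{AB}^l$ rather than introducing an explicit atypical branch), but this is the same idea.

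One small correction to your converse argument: your appeal to the converse part of Theorem~\ref{merging_theorem} to force $\limsup_l\frac{1}{l}\log(K_l/L_l)=R_q$ does not work. That converse only yields $\limsup_l\frac{1}{l}\log(K_l/L_l)\ge \max_i S(A|B,\rho_{AB,i})$, which is strictly smaller than $R_q=\max_i S(A|B,\rho_{AB,i})+\delta$, so it does not exclude values in the interval $[\max_i S(A|B,\rho_{AB,i}),R_q)$. The equality $\limsup_l\frac{1}{l}\log(K_l/L_l)=R_q$ should simply be read as part of the hypothesis ``entanglement rate $R_q$ is achieved,'' which is exactly what the paper does in eq.~(\ref{l_merging_class_rate}). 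With that adjustment your converse goes through unchanged.
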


\begin{proof}
The converse statement follows directly from Lemma \ref{l_merging_class_converse}. If $\{\mathcal{M}_l\}_{l=1}^\infty$ is
a merging which fulfills the assumptions of the Theorem, then
\begin{align*}
F(\mathcal{M}_l \otimes id_{\hr_E}^{\otimes l}(\phi_{K_l} \otimes \psi_{ABE,i}^{\otimes l}), 
      \phi_{L_l} \otimes \psi_{B'BE}^{\otimes l}) \geq 1 - o(l^0)
\end{align*}
with maximally entangled states $\phi_{K_l}$ resp. $\phi_{L_l}$ of Schmidt ranks for $K_l$ an $L_l$ for every $i \in [N]$, $l \in \nn$, and 
\begin{align}
 \limsup_{l \rightarrow \infty}\ \frac{1}{l}\log\left(\frac{K_l}{L_l}\right) = \max_{1 \leq i \leq N} S(A|B,\rho_{AB,i}) + \delta \label{l_merging_class_rate}
\end{align}
hold. With (\ref{l_merging_class_rate}) and Lemma \ref{l_merging_class_converse} it follows
\begin{align*}
\limsup_{l \rightarrow \infty} \frac{1}{l} \log(D_l) \geq \max_{1 \leq i \leq N} S(\rho_{A,i}) + 
    \max_{1 \leq i \leq N} S(A|B,\rho_{AB,i}) + \delta.
\end{align*}
To prove achievability, we step back to Section \ref{subsect:protocol}. Because $A$ and $B$ are using an $L_l$-merging for every $l$, the 
distinct number of measurement results $A$ has to communicate to $B$ is given by
\begin{align*}
 D_l = \frac{\dim\hr_{A}^{\otimes l}}{L_l}.
\end{align*}
The argument in Section \ref{subsect:protocol} shows, that the desired quantum rate can be 
achieved by choosing $L$-mergings for the mixtures 
\begin{align*}
\overline{\rho}_{AB}^l := \frac{1}{N}\sum_{i=1}^N\tilde{\rho}_{AB,i}^l,
\end{align*}
where $\tilde{\rho}_{AB,i}^l$ is the $\frac{\delta}{2}$-typically reduced state for $\phi_{K} \otimes \rho_{AB,i}$ for every $l \in \nn$ 
some $\delta \in (0,\frac{1}{2})$. 
We can therefore assume  $\hr_A^{\otimes l}$ to be restricted to the support of $\tilde{\rho}_A^l$. Clearly, it holds
\begin{align*}
 \rank \overline{\rho}_A^l &\leq \sum_{i = 1}^N \rank \tilde{\rho}_{A,i}^l  \\
			    &\leq N \cdot \max_{1 \leq i \leq N} \rank \tilde{\rho}_{A,i}^l\\
			    &\leq N \cdot \exp\left(l \left(\max_{1 \leq i \leq N}S(\pi_K \otimes  \rho_{A,i}) + \frac{\delta}{2}\right)\right).
\end{align*}
Therefore 
\begin{align}
 D_l \leq    \frac{N}{L_l} \cdot \exp\left(l\left(S(\pi_{K}) + \max_{1 \leq i \leq N} S(\rho_{A,i}) + \frac{\delta}{2}\right)\right)
\end{align}
and
\begin{align}
\frac{1}{l} \log(D_l) &\leq   \max_{1 \leq i \leq N} S(\rho_{A,i}) + \log \frac{1}{l}\left(\frac{K_l}{L_l}\right)+ 
			  \frac{N}{l} + \frac{\delta}{2} \nonumber \\
             	       &\leq \max_{1 \leq i \leq N} S(\rho_{A,i}) + \max_{1 \leq i \leq N} S(A|B,\rho_{AB,i}) + \delta 
\end{align}
if $l$ is large enough.
\end{proof}
The converse statement in the preceding Theorem is more strict than the one given in Prop. \ref{single_class_converse}.
The following example shows, that there are sets $\mathcal{X}$, where the optimal classical cost is surely not achieved
by using $L$-mergings. However, here we achieve the desired classical rate just by simple modifications of the protocol.

\begin{example}
Consider the set $\{\rho_{AB,1}, \rho_{AB,2} \} \subset \st(\hr_{AB})$ consisting of two members $\rho_{AB,1}= \phi_L$ and 
$\rho_{AB,2} = \pi_{M} \otimes \pi_{M}$, where $\phi_L$ is a maximally entangled state of Schmidt rank $L$ on a subspace
of $\hr_{AB}$ and $\pi_M$ is the maximally mixed state. We assume, that $L>M$ and
\begin{align}
 \supp(\rho_{A,1}) \perp \supp(\rho_{A,2}) \hspace{0.3cm} \label{example_othogonality}
\end{align} 
holds. In this case, we have 
\begin{align}
 \max_{i=1,2} I(A;E, \rho_{AE,i}) &= S(\rho_{A,2}) + S(A|B, \rho_{AB,2}) \\
				  &< S(\rho_{A,1}) + S(A|B, \rho_{AB,2}) \\
				  &= \max_{i=1,2} S(\rho_{A,i}) + \max_{i=1,2} S(A|B,\rho_{AB,i}). 
				      \label{klass_kosten_gap}
\end{align}

Since the supports of the $A$-marginals are orthogonal,
$A$ can perfectly distinguish his parts of the states (using one copy) and therefore get state knowledge. 
The rest is done by tracing out remaining entanglement to make both mergings have the same entanglement cost.
\end{example}

\begin{section}{Applications}\label{sect:applications}
In this section we give some indications how the result obtained so far has impact on other problems in quantum 
Shannon theory. As an example we provide another achievability proof for the entanglement generating capacity of a 
compound quantum channel with uninformed users. The original proof\cite{bjelakovic09d} was based on an one-shot 
result for entanglement transmission, a closely related concept (actually their capacities were shown to be equal). 
Here we follow another line of reasoning, namely we use the close correspondence between the task of distilling 
entanglement from bipartite sources and generating entanglement over quantum channels. To this end we prove a compound 
version of the so-called hashing bound which is known as a prominent lower bound on distillable entanglement for 
perfectly known states \cite{devetak05c}. For convenience we restrict ourselves to the case of finite sets of states 
and finite compound channels. The results are easily generalized to arbitrary sets using approximation techniques as it 
was done in Sect. \ref{subsect:general}.

\begin{subsection}{Entanglement distillation under state uncertainty}
Following Ref. \cite{devetak05c}, we define a $(l,k_l)$-\emph{protocol for one-way distillation} of states on $\hr_{AB}$ as a 
combination of an instrument $\{\mathcal{A}_k\}_{k=1}^D \subset \mathcal{C}^{\downarrow}(\hr_A^{\otimes l}, \kr^l)$ and 
a set of quantum channels $\{\mathcal{B}_k\}_{k=1}^D \subset \mathcal{C}(\hr_B^{\otimes l}, \kr^l)$ of the form
\begin{align*}
 \mathcal{T} = \sum_{k=1}^D \mathcal{A}_k \otimes \mathcal{B}_k, 
\end{align*}
such that 
$\dim(\kr^l) = k_l$. For a set $\mathcal{X} \subset \st(\hr_{AB})$ of states on $\hr_{AB}$ a nonnegative number $R$ is an 
achievable (one-way) entanglement distillation rate, if there is a sequence $\{\mathcal{T}_l\}_{l=1}^\infty$ of 
$(l, k_l)$-entanglement distillation protocols such that 
\begin{enumerate}
 \item $\underset{l \rightarrow \infty}{\liminf} \frac{1}{l}\log(k_l) \geq R $
 \item $\underset{l \rightarrow \infty}{\lim} \underset{\rho \in \mathcal{X}}{\inf} F(\mathcal{T}_l (\rho^{\otimes l}), 
	\phi_l) = 1$
\end{enumerate}
where $\phi_l$ is a maximally entangled state on $\kr^l \otimes \kr^l$. The number
\begin{align*}
 D_{\rightarrow}(\mathcal{X}) := \sup\{R: R\; \text{is an achievable rate for one-way entanglement distillation}\}. 
\end{align*}
is called the \emph{(one way) entanglement capacity} of $\mathcal{X}$. The following lemma is a compound analog to 
Theorem 3.1 in Ref \cite{devetak05c}.

\begin{lemma}\label{compound_hashing}
Let $\mathcal{X}:=\{\rho_i\}_{i=1}^N \subset \st(\hr_{AB})$ be a (finite) set of bipartite states on $\hr_{AB}$. Then
\begin{align}
 D_{\rightarrow}(\mathcal{X}) \geq - \underset{1\leq i \leq N}{\max}\; S(A|B,\rho_i) \label{hashing_ineq}
 \end{align}
\end{lemma}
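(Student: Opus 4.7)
The plan is to construct the distillation protocol directly from the merging protocol supplied by Theorem \ref{theorem:finite_ach}, exploiting the fact that a merging with strictly negative entanglement cost produces rather than consumes maximally entangled states.

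First I would dispose of the trivial case. If $\max_{1\le i\le N}S(A|B,\rho_i)\ge 0$, the right hand side of \eqref{hashing_ineq} is nonpositive and, since by definition achievable distillation rates are nonnegative, the inequality $D_{\rightarrow}(\mathcal{X})\ge 0$ holds trivially (any protocol outputting a fixed pure state will do). So from now on assume $\max_i S(A|B,\rho_i)<0$ and fix $\epsilon>0$ with $\max_i S(A|B,\rho_i)+\epsilon<0$. In this regime the proof of Theorem \ref{theorem:finite_ach} does not need any auxiliary input entanglement and furnishes, for every sufficiently large $l$, an $A\rightarrow B$ one-way LOCC channel (an $L_l$-merging in the sense of \eqref{lmerging})
\[
  \mathcal{M}_l:\bo(\hr_{AB}^{\otimes l})\longrightarrow \bo(\kr_{AB})\otimes\bo(\hr_{B'B}^{\otimes l}),
\]
with $\frac{1}{l}\log L_l\to -\max_i S(A|B,\rho_i)-\epsilon$ and
\[
  \min_{i\in[N]}F\!\bigl(\mathcal{M}_l\otimes id_{\hr_E^{\otimes l}}(\psi_{ABE,i}^{\otimes l}),\,\phi_{L_l}\otimes\psi_{B'BE,i}^{\otimes l}\bigr)\;\underset{l\to\infty}{\longrightarrow}\;1.
\]

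The distillation protocol is then obtained by simply discarding the auxiliary output systems on $B$'s side:
\[
  \mathcal{T}_l \;:=\;(id_{\kr_{AB}}\otimes \tr_{\hr_{B'B}^{\otimes l}})\circ \mathcal{M}_l.
\]
Writing $\mathcal{M}_l=\sum_{k}\mathcal{A}_k\otimes\mathcal{B}_k$ in its LOCC decomposition, one has $\mathcal{T}_l=\sum_k \mathcal{A}_k\otimes(\tr_{\hr_{B'B}^{\otimes l}}\!\circ\mathcal{B}_k)$, which is itself an $A\rightarrow B$ one-way LOCC channel of the exact form required by the definition of an $(l,L_l)$-distillation protocol. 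Monotonicity of the fidelity under partial traces (applied over $\hr_E^{\otimes l}$ and $\hr_{B'B}^{\otimes l}$) yields
\[
  F(\mathcal{T}_l(\rho_{AB,i}^{\otimes l}),\phi_{L_l})\;\ge\;F\!\bigl(\mathcal{M}_l\otimes id_{\hr_E^{\otimes l}}(\psi_{ABE,i}^{\otimes l}),\,\phi_{L_l}\otimes\psi_{B'BE,i}^{\otimes l}\bigr),
\]
so the infimum over $i\in[N]$ of the left hand side tends to $1$. Letting $\epsilon\downarrow 0$ gives the desired bound.

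There is no substantial obstacle beyond this bookkeeping; the one mild subtlety is the verification that $\mathcal{T}_l$ stays within the class of one-way LOCC channels of the correct signature for a distillation protocol. That is immediate from the Kraus form \eqref{lmerging} of an $L$-merging, since the partial trace just modifies the isometric channels $\mathcal{U}_k$ on $B$'s side (turning them into trace-preserving channels into $\kr_B$) without affecting the classical communication pattern from $A$ to $B$.
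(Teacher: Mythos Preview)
Your proposal is correct and takes essentially the same approach as the paper: both construct the distillation protocol by composing the $L$-merging from Theorem~\ref{theorem:finite_ach} with a partial trace over the $\hr_{B'B}^{\otimes l}$ output, and both invoke monotonicity of the fidelity under partial traces (over $\hr_{B'B}^{\otimes l}$ and $\hr_E^{\otimes l}$) to transfer the merging fidelity bound to the distillation fidelity. Your handling of the trivial case and your verification that $\mathcal{T}_l$ remains an $A\rightarrow B$ one-way LOCC of the required signature mirror the paper's argument exactly.
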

\begin{proof}
It suffices to consider the case of a set with $\max_{1 \leq i \leq N} S(A|B,\rho_i) < 0$, since rate $0$ can always be
achieved by using a trivial protocol which distills no entanglement at all.
Let $\mathcal{M} := \sum_{k=1}^D \mathcal{A}_k \otimes \mathcal{U}_k$ be an $L$-merging for $\mathcal{X}$ satisfying
\begin{align}
 \underset{1 \leq i \leq N}{\min} \;  F(\mathcal{M} \otimes id_{\hr_E^{\otimes l}}(\psi_{ABE,i}^{\otimes l}), 
  \phi_l \otimes \psi_{B'BE,i}^{\otimes l}) \geq 1 - \epsilon.
\end{align}
Then $\mathcal{T}(\cdot):= \sum_{k=1}^D \mathcal{A}_k \otimes \mathcal{R}_k(\cdot)$ with
$\mathcal{R}_k := \tr_{\hr_{B'BE}^{\otimes l}} \circ (\mathcal{U}_k\otimes id_{\hr_E^{\otimes l}})$ for every $k$ is a 
one-way entanglement distillation protocol for $\mathcal{X}$ satisfying
\begin{align}
 &\hphantom{\mathrel{\geq}} \; F(\mathcal{T}(\rho_i^{\otimes l}), 
   \phi_l) \nonumber \\
 &\geq F(\mathcal{M}\otimes id_{\hr_E^{\otimes l}}(\psi_{ABE,i}^{\otimes l}), 
 \phi_l \otimes \psi_{B'BE,i}^{\otimes l}) \label{part_trace_fid} \\
 &\geq \; 1 - \epsilon. \nonumber
\end{align}
for every $1 \leq i \leq N$. Eq. (\ref{part_trace_fid}) is justified by the fact that taking partial traces
cannot decrease fidelity. Following the proof of Theorem \ref{theorem:general_ach}, we find for $\epsilon > 0$ and $l \in \nn$ large enough an $L_l$-merging $\mathcal{M}_l$ for $\mathcal{X}$ such that 
\begin{align}
 L_l \geq \left\lfloor \exp\left(-l(\max_{1\leq i \leq N} S(A|B,\rho_i) + \epsilon + o(l^0))\right)\right\rfloor \label{rate_dist}
\end{align}
and 
\begin{align}
\underset{1 \leq i \leq N}{\min} \;  F(\mathcal{M}_l \otimes id_{\hr_E^{\otimes l}}(\psi_{ABE,i}^{\otimes l}), 
  \phi_l \otimes \psi_{B'BE,i}^{\otimes l}) \geq 1 - o(l^0). 
\end{align}
holds. Eqns. (\ref{part_trace_fid}) and (\ref{fid_dist}) give
\begin{align}
 \min_{1 \leq i \leq N}F(\mathcal{T}_l(\rho_i^{\otimes l}), \phi_l) \geq 1 - o(l^0). \label{fid_dist}
\end{align}
The achievability of $-\max_{1 \leq i \leq N}S(A|B,\rho_i)$ follows from (\ref{rate_dist}) and (\ref{fid_dist}).
\end{proof}
The above lemma provides the main building block for determining the one-way entanglement capacity for sets of states,
which is done in the following theorem.
\begin{theorem}\label{ent_dist_theorem}
 Let $\mathcal{X} := \{\rho_i\}_{i=1}^N \subset \st(\hr_{AB})$. Then
\begin{align}
 D_{\rightarrow}(\mathcal{X}) =
  \underset{l \rightarrow \infty}{\lim}\frac{1}{l} D^{(1)}(\mathcal{X}^{\otimes l}) \label{dist_capacity}
\end{align}
with 
\begin{align}
 D^{(1)}(\mathcal{X}) := 
-\underset{\mathcal{T}}{\min}\; \underset{1 \leq i \leq N}{\max}\sum_{j : \lambda_j^{(i)}\neq 0} 
    \lambda_j^{(i)} S(A|B, \rho_j^{(i)})   \label{d_1}
\end{align}
where the minimization is over quantum instruments $\mathcal{T}$ of the form $\mathcal{T}:=\{ \mathcal{T}_j\}_{j=1}^J$ on
$\hr_A$ with definitions 
\begin{align}
 \lambda_j^{(i)} := \tr(\mathcal{T}_j(\tr_{\hr_B}\rho_i)) \hspace{0.2cm} \text{and} \hspace{0.2cm} 
  \rho_{j}^{(i)} := \frac{1}{\lambda_{j}^{(i)}}\mathcal{T}_j \otimes id_{\hr_B}(\rho_i) \label{distillation_capacity}
\end{align}
for $1 \leq j \leq J$ and $1 \leq i \leq N$ with $\lambda_{j} \neq 0$. In fact, we can restrict ourselves to 
$J \leq dim(\hr_A)^2$ (see \cite{devetak05c}).
\end{theorem}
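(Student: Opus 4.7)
The plan is to show both inequalities. For the achievability, fix $l$ and an instrument $\mathcal{T}=\{\mathcal{T}_j\}_{j=1}^J$ on $\hr_A^{\otimes l}$ that nearly attains the minimum in the definition of $D^{(1)}(\mathcal{X}^{\otimes l})$. First apply $\mathcal{T}$ on $A$'s side and classically broadcast the outcome $j$ to $B$; this is an $A\rightarrow B$ one-way LOCC whose effect on $\rho_i^{\otimes l}$ is to produce the cq-state
\begin{align*}
 \tilde{\rho}^{(i,l)} := \sum_{j=1}^J \lambda_j^{(i,l)} \ket{j}\bra{j}_{J_A}\otimes \ket{j}\bra{j}_{J_B}\otimes \rho_j^{(i,l)}
\end{align*}
on an enlarged bipartite Hilbert space where the classical register is held by both parties. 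The cq-structure gives the standard identity $S(AJ_A | BJ_B, \tilde{\rho}^{(i,l)}) = \sum_j \lambda_j^{(i,l)} S(A|B, \rho_j^{(i,l)})$. Applying the compound hashing bound of Lemma \ref{compound_hashing} to the finite set $\{\tilde{\rho}^{(i,l)}\}_{i=1}^N$ extracts entanglement at rate $-\max_i S(AJ_A|BJ_B,\tilde{\rho}^{(i,l)}) = D^{(1)}(\mathcal{X}^{\otimes l})$ per outer copy, i.e.\ per $l$ copies of the source. Concatenating preprocessing with the hashing protocol gives a one-way distillation protocol for $\mathcal{X}$ of rate $\tfrac{1}{l}D^{(1)}(\mathcal{X}^{\otimes l})$, so $D_{\rightarrow}(\mathcal{X})\geq \tfrac{1}{l}D^{(1)}(\mathcal{X}^{\otimes l})$ for every $l$.

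For the converse, let $\mathcal{T}_l=\sum_m \mathcal{A}_m^{(l)}\otimes \mathcal{B}_m^{(l)}$ be an $(l,k_l)$-distillation protocol achieving rate $R$, so $F(\mathcal{T}_l(\rho_i^{\otimes l}), \phi_{k_l})\geq 1-\epsilon_l$ for every $i$, with $\epsilon_l\to 0$. The instrument $\{\mathcal{A}_m^{(l)}\}_m$ on $\hr_A^{\otimes l}$ is a legitimate candidate in the infimum defining $D^{(1)}(\mathcal{X}^{\otimes l})$; denoting the induced probabilities and post-measurement states by $\lambda_m^{(i,l)}$ and $\sigma_m^{(i,l)}$ as in (\ref{distillation_capacity}), we obtain
\begin{align*}
 D^{(1)}(\mathcal{X}^{\otimes l}) \geq -\max_{i\in [N]} \sum_m \lambda_m^{(i,l)} S(A|B,\sigma_m^{(i,l)}).
\end{align*}
Now $S(A|B)$ is monotone increasing under channels applied to $B$'s side, so $S(A|B,\sigma_m^{(i,l)})\leq S(A|B, (\mathrm{id}_A\otimes \mathcal{B}_m^{(l)})\sigma_m^{(i,l)})$; and the cq-identity together with the fact that classical conditioning on the outcome register reduces conditional entropy yields
\begin{align*}
 \sum_m \lambda_m^{(i,l)}\, S(A|B, (\mathrm{id}_A\otimes \mathcal{B}_m^{(l)})\sigma_m^{(i,l)}) \leq S(A|B, \mathcal{T}_l(\rho_i^{\otimes l})).
\end{align*}
A continuity estimate for conditional entropy (Alicki--Fannes/Audenaert) applied to the pair $\mathcal{T}_l(\rho_i^{\otimes l}),\phi_{k_l}$ gives $S(A|B,\mathcal{T}_l(\rho_i^{\otimes l}))\leq -\log k_l + o(l)$, since $\log k_l = O(l)$ along the rate-$R$ sequence. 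Combining, $\tfrac{1}{l}D^{(1)}(\mathcal{X}^{\otimes l})\geq \tfrac{\log k_l}{l} - o(1)$, and passing to the limit yields $D_{\rightarrow}(\mathcal{X})\leq \lim_{l\to\infty}\tfrac{1}{l}D^{(1)}(\mathcal{X}^{\otimes l})$.

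It remains to justify that the limit in (\ref{dist_capacity}) exists and equals $\sup_l\tfrac{1}{l}D^{(1)}(\mathcal{X}^{\otimes l})$, so that both bounds truly match. I would verify superadditivity by taking near-optimal instruments $\mathcal{T}^{(l)},\mathcal{T}^{(k)}$ for the $l$- and $k$-fold products and using their tensor product as a test instrument for $l+k$ copies; the additivity $S(A|B,\rho\otimes \sigma)=S(A|B,\rho)+S(A|B,\sigma)$ and the elementary inequality $\max_i(f_i+g_i)\leq \max_i f_i+\max_i g_i$ then give $D^{(1)}(\mathcal{X}^{\otimes(l+k)})\geq D^{(1)}(\mathcal{X}^{\otimes l})+D^{(1)}(\mathcal{X}^{\otimes k})$, and Fekete's lemma yields the claim. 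The main obstacle is the converse direction: the naive bound $D_{\rightarrow}(\mathcal{X})\leq \min_i D_{\rightarrow}(\rho_i)$ coming from the single-state converse is generally too weak, and one must instead exploit the fact that any compound protocol fixes a single instrument that works uniformly for all $i$, which is precisely the saddle-point structure ($\min_{\mathcal{T}}\max_i$) present in $D^{(1)}(\mathcal{X})$; handling the $B$-side processing via data processing rather than throwing away entropy is the delicate step.
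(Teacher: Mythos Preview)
Your proposal is correct and follows essentially the same route as the paper. The achievability via preprocessing with an instrument, appending the classical outcome, and then invoking Lemma~\ref{compound_hashing} on the resulting finite family is exactly what the paper does (they append the register only on $B$'s side, you on both, which is immaterial since the copies are perfectly correlated); the converse via using the protocol's own $A$-instrument as a candidate in the $\min_{\mathcal{T}}$, data processing on $B$, concavity/conditioning of $S(A|B)$, and a Fannes-type continuity bound against $\phi_{k_l}$ matches the paper line for line, and your superadditivity/Fekete argument for the existence of the limit is precisely the content of the paper's remark.
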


\begin{remark}
One easily verifies, that the limit in (\ref{dist_capacity}) exists. Clearly, 
\begin{equation}
 D^{(1)}(\mathcal{X}^{\otimes k}) + D^{(1)}(\mathcal{X}^{\otimes l}) \leq D^{(1)}(\mathcal{X}^{\otimes (k+l)})
\end{equation}
holds for any $k,l \in \nn$, because if $\mathcal{T}^{(k)}$ and $\mathcal{T}^{(l)}$ are instruments on 
$\hr_A^{\otimes k}$ resp. $\hr_A^{\otimes l}$, then $\mathcal{T}^{(k)}\otimes \mathcal{T}^{(l)}$ is an instrument on 
$\hr_{A}^{\otimes (k+l)}$. The rest is by Fekete's Lemma \cite{fekete23}.
\end{remark}

\begin{proof}[Proof of Theorem \ref{ent_dist_theorem}]
We begin with the direct part of the Theorem. Our proof parallels the one given in Ref. \cite{devetak05c} for the single 
state case. However, for the direct part, we use Lemma 
\ref{compound_hashing} instead of the single state hashing bound. To prove achievability,
let $\mathcal{T} := \{\mathcal{T}_j\}_{j=1}^J$ be any instrument on $\hr_A$, $\mathcal{P} := \{\mathcal{P}_j\}_{j=1}^J$ a 
set of channels of the form
\begin{align}
\mathcal{P}_j(\chi) := \chi \otimes \ket{e_j}\bra{e_j} 
\end{align}
for every $\chi \in \st(\hr_{B})$ and $1 \leq j \leq J$, where $e_1,...,e_J$ are members of an orthonormal basis of a 
Hilbert space $\hr_{B'}$ located at $B$'s site. Define states 
\begin{align*}
 \tilde{\rho}_i := \sum_{j=1}^J \mathcal{T}_j \otimes \mathcal{P}_j(\rho_i) 
 =\sum_{j, \lambda_j^{(i)} \neq 0} \lambda_j^{(i)} \rho_j^{(i)} \otimes \ket{e_j}\bra{e_j}
\end{align*}
for $1 \leq i \leq N$. These preprocessed states have conditional von Neumann entropy
\begin{align*}
S(A|BB', \tilde{\rho}_i) = \sum_{j: \lambda_j^{(i)} \neq 0} \lambda_j^{(i)} S(A|B, \rho_j^{(i)}).
\end{align*}
Direct application of Lemma \ref{compound_hashing} gives achievability. The converse statement can be proven just by the
same arguments as given in Ref. \cite{devetak05c}, we give the proof for convenience. We consider an
arbitrary $(l,k_l)$ one-way distillation protocol with rate $R$, given by a LOCC channel with $A \rightarrow B$ 
classical communication 
\begin{align*} 
 \mathcal{T}(\cdot) := \sum_{j=1}^J \mathcal{T}_j \otimes \mathcal{R}_j(\cdot)
\end{align*}
with $\mathcal{T}_j \in \mathcal{C}^{\downarrow}(\hr_A^{\otimes l}, \kr)$ and 
$\mathcal{R}_j \in \mathcal{C}(\hr_{B}^{\otimes l}, \kr)$ , $1 \leq j \leq J$, such that for a given 
$\tau \in (0,\frac{1}{2})$
\begin{align}
 F(\mathcal{T}(\rho_i^{\otimes l}), \phi) \geq 1 - \tau && (i \in \{1,...,N\}) \label{ent_dist_con_fid}
\end{align}
holds, where $\phi$ is a maximally entangled state on $\kr \otimes \kr$ and $\dim \kr = \lfloor 2^{l R} \rfloor$. We fix notations
\begin{align*}
 \lambda_j^{(i)} &:= \tr(\mathcal{T}_j \otimes \mathcal{R}_j(\rho_i^{\otimes l})), \hspace{0.3cm} \text{and} \hspace{0.3cm}
  \omega_j^{(i)} := \frac{1}{\lambda_j^{(i)}} \mathcal{T}_j \otimes R_j(\rho_i^{\otimes l}), \\
  \rho_{j}^{(i)} &:= \frac{1}{\lambda_j^{(i)}} \mathcal{T}_j \otimes id_{\hr_B^{\otimes l}}(\rho_i^{\otimes l})
\end{align*}
for $i \in [N]$,\ $j \in [J]$ with $\lambda_j^{(i)} \neq 0$. Application of $\mathcal{T}$ on $\rho_i$ results in the state
\begin{align*}
 \Omega^{(i)} := \sum_{j : \lambda_j^{(i)} \neq 0} \lambda_j^{(i)} \omega_j^{(i)}.
\end{align*}
Using the relation from (\ref{fvg_2}), (\ref{ent_dist_con_fid}) implies, that
\begin{align*}
 \|\Omega^{(i)}- \phi \|_1 \leq 2\sqrt{\tau} 
\end{align*}
holds for all $i \in [N]$, which leads us to 
\begin{align}
 | S(A | B, \Omega^{(i)}) - S(A | B, \phi) | \leq \epsilon \label{ent_dist_con_fannes}
\end{align}
with $\epsilon := 2(2 \sqrt{\tau} \log(\dim \kr^2) + \eta(2 \sqrt{\tau}))$ via twofold application of Fannes' inequality.
Eq. (\ref{ent_dist_con_fannes}) along with $S(A | B, \phi) = - l \cdot R$ implies
\begin{align}
 lR & \leq - S(A|B, \Omega^{(i)}) + 4 \sqrt{\tau} \cdot lR + 2 \eta(2 \sqrt{\tau}). \label{ent_dist_con_rate_1}
\end{align}
Moreover, we have
\begin{align}
  S(A|B,\Omega^{(i)})  
  & \geq \sum_{j:\lambda_j^{(i)}\neq 0} \lambda_j^{(i)} S(A|B, \omega_j^{(i)}) \nonumber \\
  & \geq \sum_{j:\lambda_j^{(i)}\neq 0} \lambda_j^{(i)} S(A|B, \rho_j^{(i)}),  \label{ent_dist_con_entr}
\end{align}
where the first inequality is by concavity of the map $\rho \mapsto S(A|B,\rho)$ for quantum states, the second is
by application of the quantum data processing inequality. Combining (\ref{ent_dist_con_rate_1}) and 
(\ref{ent_dist_con_entr}), we obtain
\begin{align*}
 lR  &\leq -\max_{i \in [N]}\sum_{j:\lambda_j^{(i)}\neq 0} \lambda_j^{(i)} 
	  S(A|B, \rho_j^{(i)}) + 4 \sqrt{\tau} l \cdot R + 2 \eta(2\sqrt{\tau}) \\
      &\leq -\min_{\mathcal{T}} \max_{i \in [N]} \sum_{j: \lambda_{j}^{(i)}\neq 0} \lambda_j^{(i)} 
	  S(A|B, \rho_j^{(i)}) + 4 \sqrt{\tau} l \cdot R + 2 \eta(2\sqrt{\tau}) \\
      &\leq D^{(1)}(\mathcal{X}^{\otimes l}) + 4 \sqrt{\tau} l \cdot R + 2 \eta(2 \sqrt{\tau})
\end{align*}
\end{proof}
\begin{remark}
Theorem \ref{ent_dist_theorem} shows, that one may have to pay an additional price for imperfect knowledge of the state. 
Namely, the capacity for a set $\mathcal{X}$ is, in general, strictly smaller than the minimum over the single-state 
capacities of the individual states in $\mathcal{X}$, as can be seen from eq. (\ref{d_1}). 
\end{remark}
\end{subsection}

\begin{subsection}{Entanglement generation over compound quantum channels}
Finally, we give another proof for the direct part of the coding theorem for entanglement generation over compound 
channels, which was originally given in Ref. \cite{bjelakovic09d}, Theorem 13. We first recall some definitions from Ref. \cite{bjelakovic09d}.
Let $\mathfrak{I}$ be a compound quantum channel generated by a set $\mathfrak{I}\subseteq \mathcal{C}(\hr_A,\hr_B)$ of
channels. We consider the uninformed user scenario, where precise knowledge about the identity of the channel
is available neither to encoder nor decoder. An \emph{entanglement generating $(l,k_l)$-code} for $\mathfrak{I}$ is a pair 
$(\mathcal{R}^l,\varphi^l)$ where $\mathcal{R}^l \in \mathcal{C}(\hr_B^{\otimes l}, \mathcal{K}^l)$ is a channel with 
$k_l = \dim \kr^l$ and $\varphi_l$ a pure state on $\kr^l \otimes  \hr_A^{\otimes l}$. A positive number $R$ is 
an \emph{achievable rate for entanglement generation} over $\mathfrak{I}$ if there is a sequence of $(l,k_l)$-entanglement 
generating codes satisfying
\begin{enumerate}
 \item $\underset{l \rightarrow \infty}{\liminf} \ \frac{1}{l} \log k_l \geq R$, and
 \item $\underset{l \rightarrow \infty}{\lim} \underset{\mathcal{N} \in \mathfrak{I}}{\inf}\ F(\phi_l, (id_{\kr^l} \otimes 
	\mathcal{R}^l\circ \mathcal{N}^{\otimes l})(\varphi_l)) = 1$, where $\phi_l$ denotes a maximally entangled state
	on $\kr^l \otimes \kr^l$.
\end{enumerate}
The number
\begin{align*}
 E(\mathfrak{I}) := \sup \{R : R  \; \text{is an achievable rate for entanglement generation over}\ \mathfrak{I} \}.
\end{align*}
is called the \emph{entanglement generating capacity} of $\mathfrak{I}$. 
\begin{theorem}[cf. Ref. \cite{bjelakovic09d}, Th. 13]\label{ent_generation}
Let $\mathfrak{I} := \{\mathcal{N}_i\}_{i =1}^N$ be a finite compound quantum channel, $\mathfrak{I} \subset 
\mathcal{C}(\hr_A, \hr_B)$. Then
\begin{align}
E(\mathfrak{I}) \geq \underset{l \rightarrow \infty}{\lim} \frac{1}{l}\underset{\rho \in \st(\hr_{A}^{\otimes l})}{\max} 
	\underset{1 \leq i \leq N}{\min} I_c(\rho, \mathcal{N}_i^{\otimes l}) \label{ent_gen_capacity}
\end{align}
holds
\end{theorem}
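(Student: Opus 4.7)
The plan is to combine the compound hashing bound (Lemma \ref{compound_hashing}) with the standard reduction from one-way entanglement distillation on channel outputs to entanglement generation across the channel. First I would fix a block length $l \in \nn$ and pick $\rho \in \st(\hr_A^{\otimes l})$ that approximately attains the inner maximum $\max_\rho \min_{1 \le i \le N} I_c(\rho,\mathcal{N}_i^{\otimes l})$. Let $\psi$ be a purification of $\rho$ on $\hr_A^{\otimes l} \otimes \hr_{A'}^l$, and for each $i$ set $\omega_i := (\mathcal{N}_i^{\otimes l} \otimes id_{\hr_{A'}^l})(\ket{\psi}\bra{\psi})$, regarded as a bipartite state shared between the sender (who retains $\hr_{A'}^l$) and the receiver (who obtains the channel output in $\hr_B^{\otimes l}$). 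A standard identity using an isometric Stinespring dilation of $\mathcal{N}_i^{\otimes l}$ yields $-S(A'|B,\omega_i) = I_c(\rho,\mathcal{N}_i^{\otimes l})$ for every $i$.

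Next I would apply Lemma \ref{compound_hashing} to the finite set $\{\omega_i\}_{i=1}^N$, identifying its $A$-system with the sender's $A'$. This produces, for each sufficiently large $L$, an $A' \to B$ one-way LOCC $\mathcal{T}_L = \sum_k \mathcal{A}_k \otimes \mathcal{R}_k$ which distills from $\omega_i^{\otimes L}$ a maximally entangled state of log-Schmidt-rank at least $L\cdot \min_i I_c(\rho,\mathcal{N}_i^{\otimes l}) - o(L)$, with error tending to $0$ uniformly in $i \in [N]$.

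The delicate step is turning $\mathcal{T}_L$ into a genuine entanglement generation code for $\mathfrak{I}$. The sender's instrument operations $\{\mathcal{A}_k\}$ act only on her retained purifier $\hr_{A'}^{lL}$, so she may perform them independently of the channel use; the quantum data fed into $\mathcal{N}_i^{\otimes lL}$ is still the $\hr_A^{\otimes lL}$-part of $\psi^{\otimes L}$, and the receiver applies the isometry $\mathcal{R}_k$ conditioned on the message $k$. The only remaining obstruction is this forward classical message of $O(lL)$ bits. I would dispose of it by dedicating a sublinear fraction of channel uses to transmitting $k$ via a classical code for the compound channel; this is legitimate because the statement is only non-trivial when $\min_i I_c(\rho,\mathcal{N}_i^{\otimes l})>0$ for some $l$, which forces the classical capacity of $\mathfrak{I}$ to be strictly positive (since it dominates the quantum capacity). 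This produces an $(lL+o(lL), k_{lL})$ entanglement generation code whose asymptotic rate tends to $\frac{1}{l}\min_i I_c(\rho,\mathcal{N}_i^{\otimes l})$.

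Finally, optimizing $\rho$ at block length $l$ and passing $l \to \infty$ gives the asserted bound on $E(\mathfrak{I})$; the right-hand-side limit exists by super-additivity of the map $l \mapsto \max_\rho \min_i I_c(\rho,\mathcal{N}_i^{\otimes l})$ (take products of optimizers at lengths $k$ and $l$ and invoke Fekete's Lemma, exactly as in the remark after Theorem \ref{ent_dist_theorem}). The hardest part of the plan is the bookkeeping in the conversion step: one must carefully embed the forward classical communication into the channel coding protocol without inflating the block length beyond $lL+o(lL)$. A more conceptually pleasing alternative is to coherify the LOCC distillation so that the outcome $k$ is kept as a quantum register the sender entangles into her channel input, but either route works without introducing techniques beyond those already established in Sections \ref{sect:one-shot} and \ref{merging_proof}.
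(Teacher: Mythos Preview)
Your overall strategy—reduce to compound distillation via Lemma \ref{compound_hashing}, then convert the distillation protocol back into an entanglement-generation code—matches the paper's. The gap is in the conversion step. You correctly observe that the sender's instrument $\{\mathcal{A}_k\}$ acts on her retained purifier and can be performed before the channel is used, but you then try to transmit the outcome $k$ through the channel. The classical message is, as you yourself note, $\Theta(lL)$ bits; sending it via a classical code for the compound channel therefore consumes $\Theta(lL)$ additional channel uses, not $o(lL)$. This depresses the achieved entanglement rate by a fixed multiplicative factor and you do not recover the bound \eqref{ent_gen_capacity}. The coherification alternative you sketch is not developed enough to close the gap either.

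The paper eliminates the classical side-channel altogether by exploiting a structural feature of the protocols produced in Lemma \ref{compound_hashing}: they come from $L$-mergings, so each $\mathcal{A}_k$ is implemented by a single partial isometry and the post-instrument states $\varphi_k := p_k^{-1}(\mathcal{A}_k \otimes id_{\hr_A^{\otimes l}})(\chi^{\otimes l})$ are \emph{pure}. Because $\phi_l$ is pure, the distillation guarantee unfolds into a convex combination $\sum_k p_k\, F\!\left(id_{\kr^l} \otimes \mathcal{R}_k \circ \mathcal{N}_i^{\otimes l}(\varphi_k),\phi_l\right) \geq 1 - o(l^0)$ for every $i$; averaging also over $i$ and derandomising in $k$ yields a single index $k'$ for which $(\mathcal{R}_{k'},\varphi_{k'})$ is already a valid $(l,k_l)$ entanglement-generation code for every channel in $\mathfrak{I}$. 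No forward message is needed: the sender simply prepares the fixed pure state $\varphi_{k'}$ and the receiver applies $\mathcal{R}_{k'}$ unconditionally. The purity of the $\mathcal{A}_k$ is precisely what you are missing—without it the $\varphi_k$ would be mixed and the pair $(\mathcal{R}_{k'},\varphi_{k'})$ would not qualify as an entanglement-generation code in the sense of the definition.
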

\begin{proof}
First note that the limit on the r.h.s of (\ref{ent_gen_capacity}) exists by standard arguments (see Ref. \cite{bjelakovic09d}, Remark 2). 
We just have to prove that the number 
\begin{align*}
\underset{1 \leq i \leq N}{\min} I_c(\rho,\mathcal{N}_i)-\epsilon 
\end{align*}
is an achievable rate for every state $\rho$ on $\hr_A$ and every $\epsilon>0$, the rest is by standard blocking 
arguments. There is nothing to prove for sets with $\underset{1 \leq i \leq N}{\min} I_c(\rho,\mathcal{N}_i)\leq 0$. 
Therefore let $\rho$ be a state on $\hr_A$ with $\min_{1 \leq i \leq N} I_c(\rho,\mathcal{N}_i)> 0$. 
Consider the set $\mathcal{X} := \{\rho_i\}_{i = 1}^N$ of bipartite states in $\hr_{AB}$, where $\rho_i$ is defined
\begin{align}
\rho_i := (id_{\hr_A} \otimes \mathcal{N}_i)(\chi) \label{def_comp_set}
\end{align} 
for $1 \leq i \leq N$. Here $\chi$ is the pure state on $\hr_A \otimes \hr_A$ such that the partial trace over any of 
the two subsystems results in the state $\rho$. We show that a good entanglement distillation protocol for 
the set $\mathcal{X}$ of bipartite states generated by $\mathfrak{I}$ implies the existence of 
a good entanglement generation code for $\mathfrak{I}$. Following the proof of Lemma \ref{compound_hashing},
there exists an $(l,k_l)$-distillation protocol $\mathcal{T} = \sum_{k=0}^D \mathcal{A}_k\otimes \mathcal{R}_k$ 
for $\mathcal{X}$ with $\mathcal{A}_k \in \mathcal{C}^{\downarrow}(\hr_A^{\otimes l}, \kr^l)$ and 
$\mathcal{R}_k \in \mathcal{C}(\hr_B^{\otimes l}, \kr^l)$ for $k \in \{1,...,D\}$ with $D$ determined by 
$\dim\hr_A$ and $\dim \kr^l$ such that 
\begin{align}
 \dim\kr^l \geq \left \lfloor \exp\left(l\left(\min_{1 \leq i \leq N} I_c(\rho, \mathcal{N}_i)-\epsilon\right)\right)\right \rfloor \label{ent_gen_rate} 
\end{align}
and
\begin{align}
 \min_{1 \leq i \leq N} F(\mathcal{T}(\rho_i),\phi_l) \geq 1 - o(l^0) \label{ent_gen_fid}
\end{align}
with $\phi_l$ being the maximally entangled state on $\kr^l$. Notice, that in eq. (\ref{ent_gen_rate}), we used the 
identity 
\begin{align*}
 I_c(\rho, \mathcal{N}_i) = - S(A|B,\rho_i) 
\end{align*}
for every $i \in \{1,...,N\}$.
The definitions given in eq. (\ref{def_comp_set}) imply
\begin{align*}
\mathcal{A}_k \otimes \mathcal{R}_k(\rho) = (id_{\kr^l} \otimes \mathcal{R}_k \circ \mathcal{N}_i)(\mathcal{A}_k\otimes 
    id_{\hr_A^{\otimes l}}(\chi))
\end{align*}
for every $0\leq k \leq D$ and $1 \leq i \leq N$. Therefore, 
\begin{align}
 F(\mathcal{T}(\rho_i),\phi_l) 
  &= \sum_{k=0}^D F(id_{\kr^l} \otimes \mathcal{R}_k\circ \mathcal{N}_i^{\otimes l}(\mathcal{A}_k\otimes 
      id_{\hr_A^{\otimes l}}(\chi)), \phi_l) \nonumber \\
 &= \sum_{k: p_k\neq 0} p_k F(id_{\kr^l} \otimes \mathcal{R}_k\circ \mathcal{N}_i^{\otimes l}(\varphi_k), \phi_l) 
      \label{av_ent_gen_code}
\end{align}
holds for every $i$, where we used the definitions
\begin{align*}
 p_k := \tr(\mathcal{A}_k(\rho)), \; \text{and} \hspace{0.7cm} 
    \varphi_k := \frac{1}{p_k}(\mathcal{A}_k \otimes id_{\hr_A^{\otimes l}})(\chi)
\end{align*}
for $p_k \neq 0$, $0\leq k \leq D$. Notice, that $\varphi_0,...,\varphi_D$ are pure states, because the operations 
$\mathcal{A}_k$ are pure since they arise from an $L$-merging (see the proof of Lemma \ref{compound_hashing}). 
Again because the fidelities are affine functions of the first input, (\ref{ent_gen_fid}) and (\ref{av_ent_gen_code}) 
imply 
\begin{align}
    \sum_{k:p_k\neq 0} p_k F\left(id_{\kr^l} \otimes \mathcal{R}_k\circ \frac{1}{N}\sum_{i=1}^N\mathcal{N}_i^{\otimes l}
    (\varphi_k), \phi_l\right) \geq 1 - o(l^0). \label{pre_end}
\end{align}
The r.h.s. of equation (\ref{av_ent_gen_code}) is, in fact, an average of
fidelities of entanglement generating codes $(\mathcal{R}_1,\varphi_1),...,(\mathcal{R}_D,\varphi_D)$ 
with probabilities $p_1,...,p_D$. This implies the existence of a number $k' \in \{1,...,D\}$ such that with 
$\varphi := \varphi_{k'}$ and $\mathcal{R} := \mathcal{R}_{k'}$ 
\begin{align}
 \min_{1 \leq i \leq N} F\left(id_{\kr^l} \otimes \mathcal{R}\circ \mathcal{N}_i^{\otimes l}(\varphi), \phi_l\right)
\geq 1 - o(l^0) \label{end_fid}
\end{align}
holds. Eqns. (\ref{end_fid}) and (\ref{ent_gen_rate}) show that
\begin{align*} 
 \underset{1 \leq i \leq N}{\min}I_c(\rho, \mathcal{N}_i) - \epsilon
\end{align*}
is an achievable rate.
\end{proof}
To conclude this section we compare the proof of Theorem \ref{ent_generation} given above with the one given in
Ref. \cite{bjelakovic09d}. The original achievability proof relies on the fact that good entanglement generation codes 
can be deduced from entanglement transmission codes working good on maximally mixed states on certain subspaces of the
input space of the channels. The passage to arbitrary states is done by a compound version of the 
so-called BSST-Lemma \cite{bennett02}. Indeed, one of the results from Ref. \cite{bjelakovic09d} is that the entanglement 
transmission capacity $\mathcal{Q}(\mathfrak{I})$ equals $E(\mathfrak{I})$ for every compound channel 
$\mathfrak{I}$.\newline
The proof given above follows a more direct route by taking advantage of a direct correspondence between entanglement 
distillation from bipartite states and entanglement generation over quantum channels, which is very close even in the 
compound setting. In this way, we have demonstrated, that quantum state merging provides a genuine approach to problems 
of entanglement generation over quantum channels even in the compound setting. 
\end{subsection}
\end{section}
\begin{section}{\label{sect:conclusion}Conclusion}
In this work, we have extended the concept of quantum state merging to the case, where the users are partially ignorant of
the parameters which describe the state they keep. We have determined the optimal entanglement cost of state merging in 
this setting, and found out that, in principle, a merging process is possible with the worst case merging cost in the set 
representing this uncertainty. We also derived a lower bound on the classical cost for merging with state uncertainty, 
based on an elementary proof of the corresponding result for single states. Whether or not this bound is achievable in 
general, is left as an open question. In particular, we have shown, that the class of protocols (called ``$L$-mergings'' 
in this work), which contains protocols optimal for the quantum as well the classical part of the state merging problem 
in case of perfectly known states is 
suboptimal in its classical costs for situations with state uncertainty. However, in some special cases, protocols which 
are minor variations of the $L$-merging concept achieve this bound. \\
Despite this, the protocol preserved its good reputation as a communication primitive regarding the quantum performance. 
We were able, to apply our results to prove corresponding assertions in other communication settings as entanglement 
distillation under state uncertainty as well as entanglement generation under channel uncertainty. To apply these results
in more complicated situations as multiuser settings (e.g. entanglement generation over quantum multiple access channels)
is an interesting topic for further research activities.
\end{section}

\section*{Acknowledgments}
We wish to thank Prof. K.-E. Hellwig, and J. N\"{o}tzel for their encouragement and many stimulating discussions. 
We also thank the Associate Editor for his/her comments on LOCC definitions which motivated us to include 
the appendix to this paper.\newline
The work of I.B. and H.B. is supported by the DFG via grant BO 1734/20-1 and by the BMBF via grant 01BQ1050.  

 \section{Appendix: LOCC Channels}
  In this section, we give a short account to the class of one-way LOCC channels which we use in our considerations.
 For further information, we recommend the survey article by Keyl \cite{keyl02} (and references therein). A more 
 recent general treatment can be found in Ref. \cite{chitambar12}. \newline 
 Crucial for the definition of LOCC channels is the concept of an instrument. Instruments (or operation 
 valued measures\cite{davies70}) were introduced to model the situation, where a measurement is made, and not 
 only the measurement results but also the state transformations according to the measurement values are taken into
 account. 
 To each measurement result $i$, there is assigned a positive trace non-increasing cp map 
 $\mathcal{I}_i$ which transforms the input state. In this paper, we restrict ourselves to finite sets of 
 possible measurement results.
  \begin{definition}
  A (finite) \emph{instrument} $\mathcal{A}$ is a map
  \begin{align*}
   \mathcal{A}: I & \rightarrow \mathcal{C}^{\downarrow}(\hr,\kr) \\
   i& \mapsto  \mathcal{A}_i &(i \in I)
  \end{align*}
  with a finite index set $I$ and Hilbert spaces $\hr$, $\kr$, such that $\sum_{i \in I} \mathcal{A}_i$ is 
  trace preserving. The instrument $\mathcal{A}$ is completely determined by the family $\{\mathcal{A}_i\}
  _{i\in I}$. We will sometimes write $\mathcal{A} = \{\mathcal{A}_i\}_{i\in I}$ to denote the instrument $\mathcal{A}$.
  \end{definition}
  For bipartite systems, an instrument at, say, $A$'s (the sender's) site can be combined with a
  parameter-dependent channel use, which is defined by a function 
  \begin{align*}
   \mathcal{B}: I &\rightarrow \mathcal{C}(\hr_B,\kr_B) \\
                i &\mapsto \mathcal{B}_i &(i \in I),
  \end{align*}
  i.e. each $\mathcal{B}_i$ is a completely positive and trace preserving map. A one-way LOCC channel is then 
  defined as a combination of an instrument and a parameter-dependent channel. This leads to the following 
  definition.
  \begin{definition} \label{def_app_locc_one_def}
  A channel $\mathcal{N} \in \mathcal{C}(\hr_{AB}, \kr_{AB})$ is called $A \rightarrow B$ 
  \emph{one-way LOCC channel}, if it takes the form 
  \begin{align}
   \mathcal{N}(\rho) = \sum_{i \in I} \mathcal{A}_i \otimes \mathcal{B}_i(\rho) &&(\rho \in \st(\hr_{AB})), 
   \label{app_locc_one_def}
  \end{align}
  where $\mathcal{A} = \{\mathcal{A}_i\}_{i \in I}$, $\mathcal{A}_i \in \mathcal{C}^{\downarrow}(\hr_A,\kr_A)$, 
  is an instrument and $\{\mathcal{B}_i\}_{i \in I}$ is a parameter-dependent channel.
 \end{definition}
 A one-way LOCC can also again be considered as a ``one-way local'' instrument\cite{chitambar12} with members 
 $\{\mathcal{A}_i \otimes \mathcal{B}_i\}_{i \in I}$. There is a convenient way of handling one-way LOCCs. 
 One can equivalently write the instrument $\mathcal{A}$ used on $A$'s site in channel form
 \begin{align*}
  \mathcal{A}(\rho) = \sum_{i \in I} \mathcal{A}_i(\rho) \otimes \ket{e_i}\bra{e_i} &&(\rho \in \st(\hr_A))
 \end{align*}
 with an orthonormal basis $\{e_i\}_{i \in I} \subset \cc^{|I|}$. If the basis is assigned to a system on $B$'s 
 site (which models a classical communication and coherent storage of the measurement results at the receiver's 
 system), the parameter-dependent channel can be written in the form
 \begin{align*}
  \mathcal{B}(\rho) := \sum_{i\in I} \ket{e_i}\bra{e_i} \otimes \mathcal{B}_i(\rho) &&(\rho \in \st(\hr_B))
 \end{align*}
 (this map may not not be trace-preserving). 
 Then we have for $\rho \in \st(\hr_{AB})$
 \begin{align*}
  \mathcal{N}(\rho) &= (id_{\kr_A} \otimes \mathcal{B}) \circ (\mathcal{A}\otimes id_{\hr_B})(\rho) \\
                    &= \sum_{j,i \in I} 
                          \mathcal{A}_i \otimes \mathcal{B}_j(\rho) \otimes \ket{e_i}\bra{e_i}\ket{e_j}\bra{e_j}  
                         \\
                    &= \sum_{i \in I} \mathcal{A}_i \otimes \mathcal{B}_i(\rho) \otimes \ket{e_i}\bra{e_i},
                          \end{align*}
where the second line includes a permutation of the tensor factors. Tracing out the classical information exchanged
within the application of the map (i.e. the system with space $\cc^{|I|}$) leads back to the form given in Eq. 
(\ref{app_locc_one_def}). The more general class of two-way LOCC channels exhibits a more intricate definition  
for which we refer to Refs. \cite{keyl02}, \cite{chitambar12}.\newline
Moreover, Def. \ref{def_app_locc_one_def} should not be confused with the definition of the class of separable 
channels. A channel $\mathcal{M} \in \mathcal{C}(\hr_{AB},\kr_{AB})$ is called \emph{separable}, if it takes the form
 \begin{align}
  \mathcal{M}(\rho) = \sum_{i \in I} \mathcal{A}_i \otimes \mathcal{B}_i(\rho) &&(\rho \in \st(\hr_{AB}))
   \label{app_sep_def}
 \end{align}
  where $\mathcal{A}_i \in \mathcal{C}^{\downarrow}(\hr_A,\kr_A)$ and $\mathcal{B}_i \in 
  \mathcal{C}^{\downarrow}(\hr_B,\kr_B)$ for all $i \in I$. From eqns. (\ref{app_locc_one_def}) and 
  (\ref{app_sep_def}), the difference between the one-way LOCC and separable channels can be observed. While 
  separable channels allow general trace decreasing cp maps for both parties, the receiver party is 
  restricted to usage of trace preserving cp maps (i.e. channels) in the one-way LOCC class of channels.

\begin{thebibliography}{10}
\bibitem{abeyesinghe09}
A.~Abeyesinghe, I.~Devetak, P.~Hayden, and A.~Winter.
\newblock The mother of all protocols: restructuring quantum information's family tree.
\newblock {\em Proc. R. Soc. Lond. A} 465, 2537--2563 (2009).

\bibitem{ahlswede78}
R.~Ahlswede.
\newblock Elimination of correlation in random codes for arbitrarily varying
  channels.
\newblock {\em Z. {W}ahrscheinlichkeitstheorie und verw. Gebiete} 44, 159--175 (1978).
\bibitem{ahlswede80}
R.~Ahlswede.
\newblock Coloring hypergraphs: {A} new approach to multi-user source coding
  {II}.\newblock {\em Journ. of Combinatorics, Information and System Sciences} 5, 220--268 (1980).
\bibitem{ahlswede10a}
R.~Ahlswede, I.~Bjelakovi\'{c}, H.~Boche, and J.~N\"otzel.
\newblock Quantum capacity under adversarial quantum noise: arbitrarily varying
  quantum channels.
\newblock {\em Comm. Math. Phys.}, in print, (2012).
\newblock eprint: arXiv:1010.0418.

\bibitem{araki70}
H.~Araki and E.~H. Lieb.
\newblock Entropy inequalities.
\newblock {\em Comm. Math. Phys.} 18, 160--170 (1970).

\bibitem{bennett02}
C.~Bennett, P.~Shor, J.~Smolin, and A.~Thapliyal.
\newblock Entanglement-assisted capacity of a quantum channel and the reverse
  {S}hannon theorem.
\newblock {\em IEEE Trans. Inf. Th.} 48, 2637--2655 (2002).

\bibitem{bjelakovic08c}
I.~Bjelakovi\'{c}, H.~Boche, and J.~N\"otzel.
\newblock Quantum capacity of a class of compound channels.
\newblock {\em Phys. Rev. A} 78, 042331 (2008)


\bibitem{bjelakovic09d}
I.~Bjelakovi\'{c}, H.~Boche, and J.~N\"otzel.
\newblock Entanglement transmission and generation under channel uncertainty:
  {U}niversal quantum channel coding.
\newblock {\em Comm. Math. Phys.} 292, 55--97 (2009).

\bibitem{chitambar12}
E.~Chitambar, D.~Leung, L. Man\v{c}inska, M. Ozols, and A. Winter.
\newblock Everything You Alway Wanted to Know About LOCC (But Were Afraid to Ask) (2012)
\newblock eprint: arXiv:1210.4583

\bibitem{csiszar11}
I.~Csisz\'{a}r and J.~K\"orner.
\newblock {\em Information Theory - Coding Theorems for Discrete Memoryless
  Systems} (2. ed.)
\newblock Cambridge University Press, 2011.

\bibitem{csiszar88}
I.~Csisz\'{a}r and P.~Narayan.
\newblock The capacity of the arbitrarily varying channel revisited: positivity, constraints.
\newblock {\em IEEE Trans. Inf. Th.} 34, 181--193 (1988).

\bibitem{davies70}
E.B.~Davies and J.T.~Lewis.
\newblock An Operational Approach to Quantum Probability.
\newblock {\em Comm. Math. Phys.} 17, 239--260 (1970)

\bibitem{devetak05c}
I.~Devetak and A.~Winter.
\newblock Distillation of secret key and entanglement from quantum states.
\newblock {\em Proc. R. Soc. Lond. A} 461, 207--235 (2005).

\bibitem{devetak08}
I. Devetak and A. W. Harrow and A. Winter
\newblock A Resource Framework for Quantum Shannon Theory
\newblock {\em IEEE Trans. Inf. Th.} 54, 4587-4618 (2008).

\bibitem{fannes73}
M.~Fannes.
\newblock A continuity property of the entropy density for spin lattice
  systems.
\newblock {\em Comm. Math. Phys.} 31, 291--294 (1973).

\bibitem{fekete23}
M.~Fekete.
\newblock {\"U}ber die {V}erteilung der {W}urzeln bei gewissen algebraischen
  {G}leichungen mit ganzzahligen {K}oeffizienten.
\newblock {\em Mathematische Zeitschrift} 17, 228--249 (1923).

\bibitem{Fuchs2007a}
C.~A. Fuchs and J.~van~de Graaf.
\newblock Cryptographic distinguishability measures for quantum mechanical
  states.
\newblock {\em IEEE Trans. Inf. Theory} 45, 1216--1227 (1999).
\bibitem{groisman05}
B.~Groisman, S.~Popescu, and A.~Winter.
\newblock Quantum, classical, and total amount of correlations in a quantum state.
\newblock {\em Phys. Rev. A} 72, 032317 (2005).

\bibitem{horodecki05}
M.~Horodecki, J.~Oppenheim, and A.~Winter.
\newblock Partial quantum information.
\newblock {\em Nature} 436, 673--676 (2005).

\bibitem{horodecki07b}
M.~Horodecki, J.~Oppenheim, and A.~Winter.
\newblock Quantum state merging and negative information.
\newblock {\em Comm. Math. Phys.} 269, 107--136 (2007).

\bibitem{jozsa94}
R.~Jozsa.
\newblock Fidelity for mixed quantum states.
\newblock {\em J. Mod. Opt.} 41, 2315--2323 (1994).

\bibitem{keyl02}
M. Keyl.
\newblock Fundamentals of quantum information theory.
\newblock {\em Phys. Rep.} 369, 431--548 (2002)

\bibitem{mil80}
V.~Milman and G.~Schechtman.
\newblock {\em Asymptotic Theory of Finite Dimensional Normed Spaces}.
\newblock Springer-Verlag, 1980.

\bibitem{schumacher96b}
B.~Schumacher.
\newblock Sending entanglement through noisy quantum channels.
\newblock {\em Phys. Rev. A} 54, 2614--2628 (1996).

\bibitem{schumacher02}
B.~Schumacher and M.~Westmoreland.
\newblock Approximate quantum error correction.
\newblock {\em Quantum Inf. Processing} 1, 5--12 (2002).

\bibitem{slepian73}
D.~Slepian and J.~K. Wolf.
\newblock Noiseless coding of correlated information sources.
\newblock {\em IEEE Trans. Inf. Theory} 19, 471--480 (1973).

\bibitem{uhlmann76}
A.~Uhlmann.
\newblock The 'transition probability' in the state space of a *-algebra.
\newblock {\em Rep. Math. Phys.} 9, 273--279 (1976).

\bibitem{winter99a}
A.~Winter.
\newblock Coding theorem and strong converse for quantum channels.
\newblock {\em IEEE Trans. Inf. Th.} 45, 2481--2485 (1999).
\newblock 
\end{thebibliography}

\nocite{*}

\end{document}